\documentclass[11pt,letterpaper]{article}


\usepackage[margin=1in]{geometry}
\usepackage{amsmath,amssymb,amsthm,mathtools,graphicx,latexsym,url,xcolor,graphicx,cite,framed,microtype,hyperref,thm-restate,xspace}
\usepackage[inline]{enumitem}
\setlist[itemize]{label=--}
\setlist[enumerate]{label=(\arabic*),labelindent=\parindent,leftmargin=*}

\usepackage{sectsty}
\allsectionsfont{\boldmath}

\bibliographystyle{plainurl}

\hypersetup{
  colorlinks=true,
  linkcolor=[rgb]{0,0.1,0.5},
  citecolor=[rgb]{0,0.1,0.5},
  filecolor=black,
  urlcolor=[rgb]{0,0.1,0.5},
  pdftitle={Fast Approximate Shortest Paths in the Congested Clique},
  pdfauthor={Keren Censor-Hillel, Michal Dory Janne H. Korhonen and Dean Leitersdorf}
}


\newtheorem{theorem}{Theorem}
\newtheorem{claim}[theorem]{Claim}
\newtheorem{lemma}[theorem]{Lemma}

\newtheorem{question}[theorem]{Question}


\DeclareMathOperator{\nz}{nz}
\newcommand{\denssymbol}{\rho}
\newcommand{\dens}[1]{\denssymbol_{#1}}
\newcommand{\densh}[1]{\widehat{\denssymbol}_{#1}}

\newcommand{\size}[1]{\ensuremath{\left|#1\right|}}

\newcommand{\s}[1]{\overline{#1}}

\newcommand{\clique}{\textsc{Congested Clique}\xspace}
\newcommand{\knearest}{$k$\textsc{-nearest}\xspace}
\newcommand{\sdk}{$(S, d, k)$\textsc{-source detection}\xspace}
\newcommand{\distthrough}{\textsc{distance through sets}\xspace}


\begin{document}

\begin{titlepage}

\title{
    Fast Approximate Shortest Paths in the Congested Clique
}

\author{Keren Censor-Hillel\footnote{Technion, Department of Computer Science, \{ckeren,smichald\}@cs.technion.ac.il, dean.leitersdorf@gmail.com.}
\and Michal Dory\footnotemark[1] \and Janne H.\ Korhonen\footnote{IST Austria, janne.korhonen@ist.ac.at.} \and Dean Leitersdorf\footnotemark[1]}{}

\date{}	
\maketitle
\thispagestyle{empty} 
\begin{abstract}
\normalsize We design fast \emph{deterministic} algorithms for distance computation in the \clique model. Our key contributions include:
\begin{itemize}
    \item A $(2+\epsilon)$-approximation for all-pairs shortest paths in $O(\log^2{n} / \epsilon)$ rounds on unweighted undirected graphs. With a small additional additive factor, this also applies for weighted graphs. This is the first \emph{sub-polynomial} constant-factor approximation for APSP in this model.
    \item A $(1+\epsilon)$-approximation for multi-source shortest paths from $O(\sqrt{n})$ sources in $O(\log^2{n} / \epsilon)$ rounds on weighted undirected graphs. This is the first \emph{sub-polynomial} algorithm obtaining this approximation for a set of sources of polynomial size.
\end{itemize}
Our main techniques are new distance tools that are obtained via improved algorithms for sparse matrix multiplication, which we leverage to construct efficient hopsets and shortest paths.
Furthermore, our techniques extend to additional distance problems for which we improve upon the state-of-the-art, including diameter approximation, and an \emph{exact} single-source shortest paths algorithm for weighted undirected graphs in $\tilde{O}(n^{1/6})$ rounds.
\end{abstract}

\end{titlepage}


\setcounter{page}{2}
\setcounter{tocdepth}{2}
\tableofcontents
\clearpage


\section{Introduction}

Computing distances in a graph is a fundamental task widely studied in many computational settings. Notable examples are computation of all-pairs shortest paths (APSP), single-source shortest paths (SSSP), and computing specific parameters such as the diameter of a graph.
In this work, we study distance computations in the \clique model of distributed computing.

In the \clique model, we have a \emph{fully-connected} communication network of $n$ nodes, where nodes communicate by sending $O(\log{n})$-bit messages to each other node in synchronous rounds. The \clique model has been receiving much attention during the past decade or so, due to both its theoretical interest in focusing on congestion alone as a communication resource, and its relation to practical settings that use fully connected overlays~\cite{censor2015algebraic,le2016further,DBLP:conf/opodis/Censor-HillelLT18, ghaffari2018congested, lotker2005minimum, drucker2014power, DBLP:conf/podc/GhaffariP16, DBLP:conf/podc/GhaffariGKMR18, DBLP:conf/podc/Ghaffari17, DBLP:conf/icalp/Parter18, DBLP:conf/wdag/ParterS18, DBLP:conf/wdag/ParterY18, DBLP:conf/opodis/HolzerP15, hegeman2015lessons, jurdzinski2018mst, lenzen2013optimal, DBLP:conf/spaa/KorhonenS18}. In particular, there have been many recent papers studying distance problems in \clique \cite{censor2015algebraic,le2016further,DBLP:conf/opodis/Censor-HillelLT18, nanongkai2014distributed, DBLP:conf/opodis/HolzerP15,DBLP:conf/wdag/ParterY18, DBLP:conf/wdag/BeckerKKL17, DBLP:journals/corr/ElkinN17}.

\subsection{Distance computation in the congested clique}

Many state-of-the art results for distance computations in the \clique model exploit the well-known connection between computing distances and matrix multiplication \cite{censor2015algebraic,le2016further,DBLP:conf/opodis/Censor-HillelLT18}. Specifically, the $n$th power of the adjacency matrix $A$ of a graph $G = (V,E)$, taken over the \emph{min-plus} or \emph{tropical} semiring (see e.g. \cite{censor2015algebraic} for details), correspond to shortest-path distances. Hence, iteratively squaring a matrix $\log{n}$ times allows computing all the distances in the graph. This approach gives the best known algorithms for APSP in the \clique, including
(1) an $\widetilde{O}(n^{1/3})$ round algorithm for exact APSP in weighted directed graphs~\cite{censor2015algebraic},
(2) $O(n^{0.158})$ round algorithms for exact APSP in unweighted undirected graphs and $(1+o(1))$-approximate APSP in weighted directed graphs~\cite{censor2015algebraic}, as well as
(3) an $O(n^{0.2096})$ round algorithm for exact APSP in directed graphs with constant weights~\cite{le2016further}.
Additionally, in \cite{DBLP:conf/opodis/Censor-HillelLT18}, this connection is used to show an improved APSP algorithm for \emph{sparse} graphs.

For approximating the distances, faster approximations for larger constants can be obtained by computing a $k$-spanner, which is a sparse graph that preserves distances up to a multiplicative factor of $k$, and having all nodes learn the entire spanner. Using the \clique spanner constructions of \cite{DBLP:conf/wdag/ParterY18}, this approach gives a $(2k-1)$-approximation for APSP in $\widetilde{O}(n^{1/k})$ rounds, which is still \emph{polynomial} for any constant $k$.

This raises the following fundamental question:
\begin{question}
Can we obtain constant-factor approximations for APSP in \emph{sub-polynomial} time?
\end{question}

If we restrict our attention to SSSP, sub-polynomial $(1+\epsilon)$-approximation is indeed possible~\cite{DBLP:conf/wdag/BeckerKKL17, DBLP:conf/stoc/HenzingerKN16}; in particular, the state of the art is a gradient-descent-based algorithm that obtains a $(1+\epsilon)$-approximation in $O(\epsilon^{-3}$polylog {$n$}) rounds even in the more restricted broadcast version of the \clique model \cite{DBLP:conf/wdag/BeckerKKL17}.  However, these algorithms from prior work provide distances only from a single source.

\subsection{Our contributions} \label{sec:cont}

\paragraph{All-pairs shortest paths.}
As our first main result, we address the above fundamental question by providing the first \emph{polylogarithmic} constant approximations for APSP in the \clique model. Specifically, we show the following.

\begin{theorem}
There is a deterministic $(2+\epsilon)$-approximation algorithm for \emph{unweighted} undirected APSP in the \clique model that takes $O(\frac{\log^2{n}}{\epsilon})$ rounds.
\end{theorem}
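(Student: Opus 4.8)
The plan is to reduce APSP to a combination of two subproblems: (i) computing exact distances within a bounded number of hops, and (ii) using a small hitting set to "relay" distances between faraway pairs. Concretely, I would first fix a parameter $h = \Theta(\sqrt{n})$ and have every node learn, for every other node, its exact distance if it is reachable within $h$ hops. This $h$-hop-bounded distance computation is the natural place to invoke the improved sparse matrix multiplication machinery promised in the abstract: since a graph with $n$ nodes and diameter-$h$ ball structure behaves like a sparse object after a few rounds of squaring in the min-plus semiring, iterating $O(\log h) = O(\log n)$ times gives all $h$-bounded distances. Each squaring over the tropical semiring is a matrix multiplication, and the key point is that it can be done in $O(\log n / \epsilon)$ amortized rounds (rather than polynomial time) by keeping only $(1+\epsilon)$-approximate entries and exploiting sparsity; over $O(\log n)$ iterations this yields the $O(\log^2 n / \epsilon)$ bound.

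Second, I would pick a hitting set $S \subseteq V$ of size $\widetilde{O}(n/h) = \widetilde{O}(\sqrt{n})$ such that every node whose eccentricity exceeds $h$ has a member of $S$ among its nearest $h$ vertices; such a set exists of logarithmic-factor size and can be constructed deterministically (e.g.\ via the standard greedy/derandomized hitting-set argument over the $h$-neighborhoods, which are already known from step (i)). Then I would compute, from each source in $S$, approximate shortest-path distances to all of $V$ — this is exactly a multi-source shortest paths computation from $O(\sqrt{n})$ sources, which the paper's second main result handles in $O(\log^2 n / \epsilon)$ rounds, or can be bootstrapped directly from the hopset/SMM tools. Finally, each node $u$ estimates $\widetilde{d}(u,v) = \min\bigl(d_h(u,v),\ \min_{s \in S} [\,d_h(u,s) + \widetilde{d}(s,v)\,]\bigr)$. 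Routing-wise this last min is cheap: the distances $\widetilde{d}(s,\cdot)$ from the $\widetilde{O}(\sqrt n)$ sources constitute $\widetilde{O}(n^{1.5})$ values that can be disseminated to all nodes in $\widetilde{O}(\sqrt n)$ rounds by Lenzen's routing, well within budget.

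For correctness: if $d(u,v) \le h$ then $d_h(u,v) = d(u,v)$ and we are exact (up to the $(1+\epsilon)$ slack). Otherwise, let $P$ be a shortest $u$–$v$ path; since it has more than $h$ edges, $u$'s $h$ nearest vertices all lie within distance $h$ of $u$ along directions that must reach $S$, so there is $s \in S$ with $d(u,s) \le h$, hence $d_h(u,s) = d(u,s)$. By the triangle inequality $d(u,s) + d(s,v) \le d(u,s) + d(u,s) + d(u,v) \le d(u,v) + 2d(u,s) \le 2d(u,v)$ — this is where the factor $2$ enters, since $s$ need not lie on $P$ and we pay $d(u,s)$ once to reach $s$ and once (bounded by $d(u,s)+d(u,v)$) inside the second term. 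Folding in the $(1+\epsilon)$ multiplicative error from the approximate computations and rescaling $\epsilon$ gives the claimed $(2+\epsilon)$ factor.

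The main obstacle I anticipate is making the $h$-hop-bounded exact distance computation run in $O(\log^2 n/\epsilon)$ rounds: a naive min-plus squaring is dense and costs $\widetilde\Theta(n^{1/3})$ per multiplication, so the whole argument hinges on the sparse/approximate matrix multiplication primitive — one must argue the intermediate matrices stay sparse (each row has only $\widetilde O(h)$ relevant entries if we truncate beyond hop-distance $h$, or only $\widetilde O(\sqrt n)$ relevant small-distance entries), and that approximate tropical multiplication composes without blowing up the error over $\log n$ rounds. A secondary subtlety is the deterministic construction of the hitting set $S$ within the round budget; this should follow from computing the $h$-neighborhoods in step (i) and running a standard deterministic greedy set-cover-style selection, but the communication pattern for that needs to be checked against Lenzen routing. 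Everything else — the triangle-inequality bookkeeping for the factor $2$, and the final dissemination — is routine.
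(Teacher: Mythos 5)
There is a genuine gap in your approximation analysis, and it is the central difficulty the paper has to overcome. Your correctness argument reads
\[
d(u,s) + d(s,v) \le d(u,v) + 2d(u,s) \le 2d(u,v),
\]
but the last inequality requires $d(u,s) \le d(u,v)/2$, and you only establish $d(u,s) \le h < d(u,v)$. That gives $3d(u,v)$, not $2d(u,v)$. This is exactly the classical hitting-set argument for a $3$-approximation, and you have not shown anything stronger. To get a factor of $2$, one must take the minimum over \emph{both} $p(u)$ and $p(v)$ and argue that one of them lies within distance $d(u,v)/2$ of its endpoint. That works when some node of the shortest path lies outside both $N_k(u)$ and $N_k(v)$ (then the farther ``pivot'' witnesses it), but it \emph{fails} in the boundary case where the path is $u \rightsquigarrow u' \to v' \rightsquigarrow v$ with $u' \in N_k(u)$, $v' \in N_k(v)$, and $\{u',v'\}$ an edge: there one can only prove $\min\{d(u,p(u)),d(v,p(v))\} \le \tfrac{1}{2}(d(u,v)+1)$, yielding $2d(u,v)+1$. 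For small distances (e.g.\ $d(u,v)=2$) this is a $2.5$-factor, so it is \emph{not} a $(2+\epsilon)$-approximation. The paper handles precisely this case with a separate mechanism that you do not have: it splits the graph into high-degree and low-degree parts, recurses on the (now sparse) low-degree subgraph with a much smaller $k'=\widetilde{O}(n^{1/4})$ and a correspondingly larger hitting set $A'$ of size $\widetilde{O}(n^{3/4})$ (made affordable by the sparsity), and then explicitly computes the three-hop ``through the boundary edge'' paths $u\to u'\to v'\to v$ via a product of three sparse matrices. Without something playing that role, your estimate cannot reach $(2+\epsilon)$.

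A secondary issue: your step (i), computing exact $h$-hop-bounded distances for all pairs with $h=\Theta(\sqrt n)$, does not run in $O(\log^2 n/\epsilon)$ rounds by the claimed argument. Truncating by hop count does \emph{not} keep rows sparse; after a few squarings a row can already have $\Theta(n)$ entries (a star graph is a trivial example), so repeated min-plus squaring of the $h$-hop matrix is a dense computation. What the paper uses instead is the $k$\textsc{-nearest} primitive, which filters each row down to its $k$ \emph{smallest} entries after every multiplication (Theorem~\ref{thrm:knearest}); this is what makes the iterative squaring sparse and gives the $O\bigl((k/n^{2/3}+\log n)\log k\bigr)$ bound. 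The correct object to compute is thus ``distances to the $k$ nearest nodes,'' not ``exact $h$-hop distances to everyone,'' and the hitting set must then be a hitting set of the $N_k(v)$'s, not of $h$-hop balls. Your hedging paragraph gestures at this, but the distinction is not cosmetic: once you switch to $k$-nearest you also inherit exactly the Case-3 boundary problem described above.

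So the high-level skeleton (pivots plus MSSP from $\widetilde{O}(\sqrt n)$ sources) is right, and your round-budget intuition for the easy parts is reasonable, but the approximation factor as argued is $3$, not $2+\epsilon$, and closing that gap is where essentially all of the paper's technical content on this theorem resides.
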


We also obtain a nearly $(2+\epsilon)$-approximation in $O(\frac{\log^2{n}}{\epsilon})$ rounds in \emph{weighted} undirected graphs, in the sense that for any distance estimate $d(u,v)$, there is further additive $(1 + \epsilon)w_{uv}$ error in the approximation, where $w_{uv}$ is the weight of the heaviest edge on the shortest $u$-$v$ path.

Our approximation is almost tight for \emph{sub-polynomial} algorithms in the following sense.
As noted by \cite{DBLP:conf/spaa/KorhonenS18}, a $(2-\epsilon)$-approximate APSP in unweighted undirected graphs is essentially equivalent to fast matrix multiplication, so obtaining a better approximation in complexity below $O(n^{0.158})$ would result in a faster algorithm for matrix multiplication in the \clique. Likewise, a sub-polynomial-time algorithm with \emph{any} approximation ratio for directed graphs would give a faster matrix multiplication algorithm \cite{dor2000all}, so our results will likely not extend to directed graphs.

\paragraph{Multi-source shortest paths.}
As our second main result, we show a fast $(1+\epsilon)$-approximation algorithm for the multi-source shortest paths problem (MSSP), which is polylogarithmic as long as the number of sources is $\widetilde{O}(\sqrt{n})$. Specifically, we show the following.

\begin{restatable}{theorem}{mssp}
\label{multi-thm}
There is a deterministic $(1+\epsilon)$-approximation algorithm for the weighted undirected MSSP that takes $$O \bigg( \bigg( \frac{{|S|}^{2/3}}{n^{1/3}} +\log{n} \bigg) \cdot \frac{\log{n}}{\epsilon} \bigg)$$ rounds in the \clique, where $S$ is the set of sources. In particular, the complexity is  $O(\frac{\log^2{n}}{\epsilon})$ as long as $|S| \leq O(\sqrt{n} \cdot (\log{n})^{3/2})$.
\end{restatable}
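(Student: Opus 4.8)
We would establish Theorem~\ref{multi-thm} by reducing weighted undirected MSSP to a \emph{bounded-hop} multi-source computation via a hopset, and then solving that bounded-hop problem through a short sequence of sparse min-plus matrix multiplications. Both ingredients — a deterministic, near-linear hopset built in polylogarithmic time, and a sparse matrix multiplication routine whose round complexity scales with the output size — are the distance tools developed in this paper, so the proof is essentially their assembly together with the accounting of stretch, hop count, and running time.

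\emph{Step 1 (hopset).} We run the deterministic hopset construction to obtain a set $H$ of weighted shortcut edges such that for every pair $u,v$ there is a $u$--$v$ path in $G\cup H$ of at most $\beta = O(\log n/\epsilon)$ hops whose weight lies between $d_G(u,v)$ and $(1+\epsilon)\,d_G(u,v)$. Computing $H$ is itself a \distthrough task, and feeding it through the sparse matrix multiplication primitives takes $O(\log^2 n/\epsilon)$ rounds — already within the target bound. It then suffices to compute, for every node $v$ and every source $s\in S$, the \emph{exact} $\beta$-hop-bounded distance from $s$ to $v$ in $G\cup H$: by the hopset guarantee these values $(1+\epsilon)$-approximate $d_G(s,v)$, and the error from truncating the hop count is already subsumed by the $(1+\epsilon)$ factor. (We assume polynomially bounded integer weights, so the scaling needed to run the hopset machinery on weighted graphs costs only an $O(\log n)$ overhead.)

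\emph{Step 2 (bounded-hop MSSP).} The bounded-hop computation from Step 1 is exactly \sdk with $d=\beta$ and $k=|S|$, which we solve by $\beta$ iterations of simultaneous Bellman--Ford relaxation from all sources. Holding the current estimates in an $n\times|S|$ matrix $\Delta$, one relaxation step amounts to the min-plus product of the weight matrix of $G\cup H$ with $\Delta$, whose output has at most $n|S|$ non-zeros; plugging this shape into the improved sparse matrix multiplication, the step runs in $O\big(|S|^{2/3}/n^{1/3} + \log n\big)$ rounds. Performing $\beta=O(\log n/\epsilon)$ of them, and adding the hopset cost, gives
$$O\!\left(\left(\frac{|S|^{2/3}}{n^{1/3}}+\log n\right)\cdot\frac{\log n}{\epsilon}\right)$$
rounds, which collapses to $O(\log^2 n/\epsilon)$ once $|S|\le O(\sqrt n\,(\log n)^{3/2})$, since then $|S|^{2/3}/n^{1/3}=O(\log n)$.

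The real work lies in producing the two matrix-multiplication-based primitives with exactly these parameters: a \emph{deterministic} hopset of hopbound $O(\log n/\epsilon)$ and near-linear size built in polylogarithmic rounds — which forces replacing the randomized sampling of standard hopset constructions by a deterministic clustering hierarchy while still routing all the required distance-through-sets products quickly — and a sparse matrix multiplication whose cost on this rectangular, output-bounded shape is $|S|^{2/3}/n^{1/3}$ rather than the $|S|^{1/3}$ (or $n^{1/3}$) one gets from off-the-shelf semiring multiplication; this is precisely the improvement the paper contributes, and it hinges on exploiting that the iterate $\Delta$ stays thin throughout. Granting these, the remaining steps — correctness of the relaxation, the stretch and hop-count bounds, and the arithmetic above — are routine.
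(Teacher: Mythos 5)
Your proposal takes essentially the same route as the paper: construct the deterministic $(\beta,\epsilon)$-hopset in $O(\log^2 n/\epsilon)$ rounds, then run the \sdk algorithm (i.e., $\beta$ iterations of Bellman--Ford-via-sparse-min-plus-products) on $G\cup H$ with $d=\beta=O(\log n/\epsilon)$, giving the stated bound. Two small mischaracterizations in your sketch — the hopset construction is not a \distthrough task but rests on \knearest and iterated \sdk, and no ``scaling'' overhead is needed for weighted graphs since all tools here work directly over the (augmented) min-plus semiring with polynomially bounded integer weights — but since you invoke the hopset only through its black-box interface, neither affects the validity of the MSSP argument.
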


This is the first sub-polynomial algorithm that obtains such approximations for a set $S$ of polynomial size.
Other advantages of our approach, compared to the previous $(1+\epsilon)$-approximation SSSP algorithm \cite{DBLP:conf/wdag/BeckerKKL17}, is that it is based on simple combinatorial techniques. In addition, our complexity improves upon the complexity of \cite{DBLP:conf/wdag/BeckerKKL17}.

\paragraph{Exact SSSP and diameter approximation.}
In addition to the above, our techniques allow us to obtain a near $3/2$-approximation for the diameter in $O(\frac{\log^2{n}}{\epsilon})$ rounds as well as an  $\widetilde{O}(n^{1/6})$-round algorithm for \emph{exact} weighted SSSP, improving the previous $\widetilde{O}(n^{1/3})$-round algorithm \cite{censor2015algebraic}.
All our algorithms are deterministic.

\subsection{Our techniques}

The main technical tools we develop for our distance computation algorithms are a new \emph{sparse matrix multiplication} algorithm, extending the recent result of \cite{DBLP:conf/opodis/Censor-HillelLT18}, and new deterministic \emph{hopset} construction algorithm for the \clique.

\paragraph{Distance products.} We start from the basic idea of using matrix multiplication to compute distances in graphs. Specifically, if $A$ is the weighted adjacency matrix of a graph $G$, it is well known that distances in $G$ can be computed by iterating the \emph{distance product} $A \star A$, defined as
\[ (A \star A)[i,j] = \min_{k} \bigl( A[i,k] + A[k,j] \bigr)\,,\]
that is, the matrix multiplication over the min-plus semiring.

A simple idea is to apply the recent sparse matrix multiplication algorithm of \cite{DBLP:conf/opodis/Censor-HillelLT18}, with running time that depends on the density of the input matrices. In particular, this allows us to multiply two sparse matrices with $O(n^{3/2})$ non-zero entries in $O(1)$ rounds; note that for the distance product, the zero element is $\infty$. However, using this algorithm for computing distances directly is inefficient, as $A \star A$ can be dense even if $A$ is sparse (e.g.\ a star graph), and hence iterative squaring is not guaranteed to be efficient. Moreover, our goal is to compute distances in general graphs, not only in sparse graphs. Nevertheless, we show that while using \cite{DBLP:conf/opodis/Censor-HillelLT18} directly may not be efficient, we can use sparse matrix multiplication as a basic building block for distance computation in the \clique.

\paragraph{Our distance tools.} The key observation is that many building blocks for distance computation are actually based on computations in sparse graphs or only consider a limited number of nodes. Concrete examples of such tasks include:
\begin{itemize}
    \item \emph{\knearest problem}: Compute distances for each node to the $k$ closest nodes in the graph.
    \item \emph{\sdk problem}: Given a set of sources $S$, compute the distances for each node to the $k$ nearest sources using paths of at most $d$ hops.
    \item \emph{\distthrough problem}: Given a set of nodes $S$ and distances to all nodes in $S$, compute the distances between all nodes using paths through nodes in $S$.
\end{itemize}
For all of these problems, there is a degree of sparsity we can hope to exploit if $k$ or $|S|$ are small enough.
For example, the \sdk problem, requires the multiplication of a dense adjacency matrix and a possibly sparse matrix, depending on the size of $S$. However, for any $S$ of polynomial size the algorithm in \cite{DBLP:conf/opodis/Censor-HillelLT18} is polynomial. An interesting property in this problem, though, is that the output matrix is also sparse. If we look at the \knearest problem, both input matrices are sparse, hence we can use the previous sparse matrix multiplication algorithm. However, this does not exploit the sparsity of this problem to the end: in this problem we are interested only in computing the $k$ nearest nodes to each node, hence there is no need to compute the full output matrix. The challenge in this case is that we do not know the identity of the $k$ closest nodes before the computation. To exploit this sparsity we design new matrix multiplication algorithms, that in particular have the ability to sparsify the matrix throughout the computation, and get a complexity that depends only on the size of the output we are interested in.

\paragraph{Sparse matrix multiplication.}
To compute the above, we design new sparse matrix multiplication algorithms, which differ from~\cite{DBLP:conf/opodis/Censor-HillelLT18} by taking into account also the sparsity of the \emph{output matrix}. For matrix $M$, let $\dens{M}$ denote the \emph{density} of $M$, that is, the average number of non-zero entries on a row.
Specifically, for distance product computation $P = S \star T$, we obtain two variants:
\begin{itemize}
    \item One variant assumes that the sparsity of the output matrix is known.
    \item The other \emph{sparsifies} the output matrix on the fly, keeping only the $\dens{P}$ smallest entries for each row.
\end{itemize}
For these two scenarios, we obtain running times 
\[O\biggl( \frac{( \dens{S}\dens{T}\dens{P} )^{1/3}}{n^{2/3}} + 1 \biggr)\,, \qquad\text{and}\qquad O\biggl( \frac{( \dens{S}\dens{T}\dens{P} )^{1/3}}{n^{2/3}} + \log n \biggr)\,\]
rounds, respectively, improving the running time of the prior sparse matrix multiplication for $\dens{P} = o(n)$. 

This allows us to obtain faster distance tools, by taking into account the sparsity of the output:
\begin{itemize}
    \item We can solve the \knearest problem in $O \left(\left( \frac{k}{n^{2/3} } + \log n \right) \log  k \right)$ rounds.
    \item We can solve the \sdk problem in $O \left(\left( \frac{m^{1/3}|S|^{2/3}}{n} + 1 \right) d \right)$ rounds, where $m$ is the number of edges in the output graph; note that dependence on $d$ becomes linear in order to exploit the sparsity.
\end{itemize}
In concrete terms, with these output-sensitive distance tools we still get subpolynomial running times even when the parameters are polynomial. For example, we can get the distances to the $\widetilde{O}(n^{2/3})$ closest nodes in $\widetilde{O}(1)$ rounds. Note that though our final results are only for undirected graphs, these distance tools work for directed weighted graphs.

\paragraph{Hopsets.}
An issue with our \sdk algorithm is that in order to exploit the sparsity of the matrices, we must perform $d$ multiplications to learn the distances of all the nodes at hop-distance at most $d$ from $S$. Hence, to learn the distances of all the nodes from $S$, we need to do $n$ multiplications, which is no longer efficient.
To overcome this challenge, we use \emph{hopsets}, which are a central building block in many distance computations in the distributed setting \cite{DBLP:conf/stoc/HenzingerKN16, nanongkai2014distributed, DBLP:journals/corr/ElkinN17, elkin2017distributed, elkin2016hopsets}. A $(\beta,\epsilon)$-hopset $H$ is a sparse graph such that the $\beta$-hop distances in $G \cup H$ give $(1+\epsilon)$-approximations of the distances in $G$.
Since it is enough to look only at $\beta$-hop distances in $G \cup H$, using our source detection algorithm together with a hopset allows getting an efficient algorithm for approximating distances, as long as $\beta$ is small enough.

However, the time complexity of all current hopset constructions depends on the size of the hopset \cite{DBLP:journals/corr/ElkinN17, elkin2016hopsets, DBLP:conf/stoc/HenzingerKN16}, in the following way. The complexity of building a hopset of size $n\rho$ is at least $O(\rho)$. This is a major obstacle for efficient shortest paths algorithms, since based on recent existential results there are no hopsets where both $\beta$ and $\rho$ are polylogarithmic \cite{abboud2018hierarchy} (see Section~\ref{section:related-work}.)
Nevertheless, we show that our new distance tools allow to build hopsets in a time that \emph{does not} depend on the hopset size. In particular, we show how to implement a variant of the recent hopset construction of Elkin and Neiman \cite{DBLP:journals/corr/ElkinN17} in $O(\frac{\log^2{n}}{\epsilon})$ rounds. The size of our hopset is $\widetilde{O}(n^{3/2})$, hence constructing it using previous algorithms requires at least $\widetilde{O}(\sqrt{n})$ rounds.

\paragraph{Applying the distance tools.}
As a direct application of our source detection and hopset algorithms, we obtain a multi-source shortest paths (MSSP) algorithm, allowing to compute $(1+\epsilon)$-approximate distances to $\widetilde{O}(\sqrt{n})$ sources in polylogarithmic time. Our MSSP algorithm, in turn, forms the basis of a near $3/2$-approximation for the diameter, and a $(3+\epsilon)$-approximation for weighted APSP. To obtain a $(2+\epsilon)$-approximation for unweighted APSP, the high-level idea is to deal separately with paths that contain a high-degree node and paths with only low-degree nodes. A crucial ingredient in the algorithm is showing that in sparser graphs, we can actually compute distances to a larger set of sources $S$ efficiently, which is useful for obtaining a better approximation. Our exact SSSP algorithm uses our algorithm for finding distances to the $k$-nearest nodes, which allows constructing efficiently the $k$-shortcut graph described in \cite{nanongkai2014distributed, elkin2017distributed}.

\subsection{Additional related work}\label{section:related-work}

\paragraph{Distance computation in the congested clique.} APSP and SSSP are fundamental problems that are studied extensively in various computation models. Apart from the MM-based algorithms in the \clique \cite{censor2015algebraic,le2016further,DBLP:conf/opodis/Censor-HillelLT18}, previous results include also $\widetilde{O}(\sqrt{n})$-round algorithms for exact SSSP and $(2+o(1))$-approximation for APSP \cite{nanongkai2014distributed}. Other distance problems studied in the \clique are construction of hopsets \cite{DBLP:journals/corr/ElkinN17, elkin2016hopsets, DBLP:conf/stoc/HenzingerKN16} and spanners \cite{DBLP:conf/wdag/ParterY18}.

\paragraph{Matrix multiplication in the congested clique.} As shown by \cite{censor2015algebraic}, matrix multiplication can be done in \clique in $O(n^{1/3})$ rounds over semirings, and in $O(n^{1-2/\omega})$ rounds over rings, where $\omega < 2.3728639$ is the exponent of the matrix multiplication~\cite{legall2014powers}. For rectangular matrix multiplication, \cite{le2016further} gives faster algorithms. The first sparse matrix multiplication algorithms for \clique were given by \cite{DBLP:conf/opodis/Censor-HillelLT18}, as discussed above.

\paragraph{Distance computation in the CONGEST model.} The distributed CONGEST model is identical to the \clique model, with the difference that the communication network is identical to the input graph $G$, and nodes can communicate only with their neighbours in each round.
Distance computation is extensively studied in the CONGEST model. The study of  \emph{exact} APSP in weighted graphs has been the focus of many recent papers \cite{DBLP:journals/corr/abs-1811-03337, DBLP:conf/focs/HuangNS17, DBLP:conf/podc/AgarwalRKP18, elkin2017distributed}, culminating in a near tight $\widetilde{O}(n)$ algorithm \cite{DBLP:journals/corr/abs-1811-03337}.
Such results were previously known in unweighted graphs \cite{holzer2012optimal, DBLP:conf/podc/LenzenP13, peleg2012distributed} or for approximation algorithms \cite{lenzen2015fast, nanongkai2014distributed}. Approximate and exact algorithms for SSSP are studied in \cite{DBLP:conf/wdag/BeckerKKL17, nanongkai2014distributed, lenzen2013fast, DBLP:conf/focs/ForsterN18, DBLP:conf/stoc/HenzingerKN16, DBLP:conf/stoc/GhaffariL18, elkin2017distributed}. While near-tight algorithms exist for approximating SSSP \cite{DBLP:conf/wdag/BeckerKKL17}, there is currently a lot of interest in understanding the complexity of \emph{exact} SSSP and \emph{directed} SSSP \cite{DBLP:conf/focs/ForsterN18, DBLP:conf/stoc/GhaffariL18, elkin2017distributed}. The \emph{source detection} problem is studied in \cite{DBLP:conf/podc/LenzenP13}, demonstrating the applicability of this tool for many distance problems such as APSP and diameter approximation in unweighted graphs. An extension for the weighted case is studied in \cite{lenzen2015fast}. Algorithms and lower bounds for approximating the diameter are studied in \cite{DBLP:conf/podc/LenzenP13, le2016further, peleg2012distributed, holzer2012optimal, abboud2016near}.

\paragraph{Distance computation in the sequential setting.} Among the rich line of research in the sequential setting, we focus only on the most related to our work. The pioneering work of Aingworth et al. \cite{aingworth1999fast}, inspired much research on approximate APSP \cite{dor2000all, baswana2010faster, baswana2005all, cohen2001all, berman2007faster} and approximate diameter \cite{roditty2013fast, chechik2014better, DBLP:conf/stoc/BackursRSWW18, berman2007faster}, with the goal of understanding the tradeoffs between the time complexity and approximation ratio. Many of these papers use clustering ideas and hitting set arguments as the basis of their algorithms, and our approximate APSP and diameter algorithms are inspired by such ideas.

\paragraph{Hopsets.} Hopsets are a central building block in distance computation and are studied extensively in various computing models \cite{DBLP:conf/stoc/HenzingerKN16, nanongkai2014distributed, DBLP:journals/corr/ElkinN17, elkin2017distributed, elkin2016hopsets, cohen2000polylog, DBLP:conf/focs/Bernstein09, henzinger2018decremental, shi1999time}. The most related to our work are two recent constructions of Elkin and Neiman \cite{DBLP:journals/corr/ElkinN17}, and Huang and Pettie \cite{huang2019thorup}, which are based on the emulators of Thorup and Zwick \cite{thorup2006spanners}, and are near optimal due to existential results \cite{abboud2018hierarchy}. Specifically, \cite{huang2019thorup} construct $(\beta,\epsilon)$-hopsets of size $O(n^{1+\frac{1}{2^{k+1}-1}})$ with $\beta=O(k/\epsilon)^k$, where recent existential results show that any construction of $(\beta,\epsilon)$-hopsets with worst case size $n^{1+\frac{1}{2^k-1}-\delta}$ must have $\beta = \Omega_k((\frac{1}{\epsilon})^k)$, where $k \geq 1$ is an integer and $\delta > 0$. For a detailed discussion of hopsets see the introduction in \cite{DBLP:journals/corr/ElkinN17, elkin2016hopsets}.

\subsection{Preliminaries}

\paragraph{Notations.} Except when specified otherwise, we assume our graphs are undirected with non-negative integer edge weights at most $O(n^c)$ for a constant $c$.
Given a graph $G=(V,E)$ and $u,v \in V$, we denote by $d_G(u,v)$ the distance between $u$ and $v$ in $G$, and by $d_G^{\beta}(u,v)$ the length of the shortest path of hop-distance at most $\beta$ between $u$ and $v$ in $G$. If $G$ is clear from the context, we use the notation $d(u,v)$ for $d_G(u,v).$

\paragraph{Routing and sorting.} As basic primitives, we use standard routing and sorting algorithms for the \clique model. In the \emph{routing} task, each node holds up to $n$ messages of $O(\log n)$ bits, and we assume that each node is also the recipient of at most $n$ messages.
In the \emph{sorting} task, each node has a list of $n$ entries from an ordered set, and we want to sort these entries so that node $i$ receives the $i$th batch of entries according to the global order of all the input entries. Both of these task can be solved in $O(1)$ rounds~\cite{dolev2012tri,lenzen2013optimal}.

\paragraph{Semirings and matrices.} We assume we are operating over a semiring $(R,+,\cdot, 0, 1)$, where $0$ is the identity element for addition and $1$ is the identity element for multiplication. Note that we do not require the multiplication to be commutative. For the \clique algorithms, we generally assume that the semiring elements can be encoded in messages of $O(\log n)$ bits.

All matrices are assumed to be over the semiring $R$. For convenience, we identify $[n]$ with the node set $V$, and use set $V$ to index the matrix entries.  For matrix $S$, we denote the matrix entry at position $(v,u)$ by $S[v,u]$. For sets $U,W \subseteq V$, we denote by $S[U,W]$ by submatrix obtained by taking rows and columns restricted to $U$ and $W$, respectively.

\paragraph{Hitting sets.} Let $S_v \subseteq V$ be a set of size at least $k$. We say that $A \subseteq V$ is a \emph{hitting set} of $\{S_v\}_{v \in V}$ if in each subset $S_v$ there is a node from $A$. 
We can construct hitting sets easily by adding each node to $A$ with probability $p = \frac{\log{n}}{k}$. This gives a hitting set of expected size $O(\frac{n \log{n}}{k})$, such that w.h.p there is a node from $A$ in each subset $S_v$.
The same parameters are obtained by a recent \emph{deterministic} construction of hitting sets in the \clique \cite{DBLP:conf/wdag/ParterY18}, which gives the following (see Corollary 17 in \cite{DBLP:conf/wdag/ParterY18}).

\begin{lemma} \label{det_hit}
Let $\{S_v \subseteq V\}_{v \in V}$ be a set of subsets of size at least $k$, such that $S_v$ is known to $v$.
There exists a \emph{deterministic} algorithm in the \clique model that constructs a hitting set of size $O(n\log{n}/k)$ in $O((\log{\log{n}})^3)$ rounds.
\end{lemma}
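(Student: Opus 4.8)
The plan is to derandomize the simple randomized construction sketched just before the lemma statement. Recall that if each node joins $A$ independently with probability $p = c\log n / k$, then for every $S_v$ with $|S_v| \ge k$ we have $\Pr[A\cap S_v=\emptyset]\le (1-p)^{k}\le e^{-c\log n}=n^{-c}$, while $\mathbb{E}[|A|]=pn=cn\log n/k$; a union bound over the $n$ sets together with Markov's inequality on $|A|$ shows that a valid hitting set of size $O(n\log n/k)$ exists with positive probability. The goal is to produce one deterministically in $O((\log\log n)^3)$ \clique rounds.

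First I would observe that full independence is unnecessary: drawing the membership indicators from a $d$-wise independent distribution with $d=\Theta(\log n)$ still gives $\Pr[A\cap S_v=\emptyset]\le n^{-2}$ — apply the $d$-th moment inequality to $|A\cap S_v|$, whose mean is $\Theta(\log n)$, so that with $d=\Theta(\log n)$ the deviation probability is $2^{-\Omega(d)}=n^{-\Omega(1)}$ once $c$ is large enough — and such a distribution on $n$ bits is specified by a seed of only $\ell=O(d\log n)=O(\log^2 n)$ bits, e.g.\ the coefficients of a random low-degree polynomial over $\mathbb{F}_{2^{O(\log n)}}$. I would then set up a pessimistic estimator $\Phi(\sigma)$: the conditional expectation, under the distribution restricted to seeds consistent with a partial coefficient assignment $\sigma$, of an easily computable (Bellare--Rompel-type) upper bound on the number of unhit sets plus a penalty $\lambda(|A|-B)$ for exceeding the target size $B=O(n\log n/k)$. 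For suitable $\lambda$ the value of $\Phi$ on the empty assignment is $<1$, and by the method of conditional expectations any partial assignment has an extension keeping $\Phi<1$, so the seed fixed at the end of this process induces the desired hitting set.

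For the \clique implementation, fixing a block of $b=\Theta(\log n)$ seed bits at once reduces to choosing the best of $2^b=\mathrm{poly}(n)$ candidate block values, which is exactly the kind of $\mathrm{poly}(n)$-way choice the model supports in $O(1)$ rounds: assign each candidate block value to a distinct node, have every node (each of which knows its own $S_v$) compute its set's contribution to $\Phi$ under each candidate and route these to the responsible nodes, aggregate by sorting, and select the minimizer. Fixing the whole $O(\log^2 n)$-bit seed this way costs $O(\log n)$ rounds naively (and $O(\log^2 n)$ rounds if done bit by bit).

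The main obstacle is to shave this down to $O((\log\log n)^3)$ rounds — that is, to avoid $\Theta(\log n)$ sequential block-fixing steps — which is the actual content of Corollary 17 of \cite{DBLP:conf/wdag/ParterY18}. The route I would take is a recursive derandomization: first shrink the instance itself, for instance by hashing $V$ into a universe of size $\mathrm{polylog}(n)$ so that the images of the sets keep effective size $\Omega(k/\mathrm{polylog}(n))$, or by splitting the seed into $O(\log\log n)$ blocks each of which is chosen by a separate, smaller derandomized sub-instance; on the smaller instance the seed is only $\mathrm{polyloglog}(n)$ bits long and can be fixed in $O((\log\log n)^2)$ rounds; then iterate $O(\log\log n)$ times, lifting each partial solution back to the original universe. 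Getting these reductions to compose correctly — preserving set sizes at every level, keeping the seed short throughout, and controlling the accumulated size/coverage error — is the delicate part; everything else is the standard conditional-expectation-in-the-\clique machinery described above.
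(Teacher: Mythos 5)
The paper does not prove this lemma — it is imported directly from Corollary~17 of Parter and Yogev~\cite{DBLP:conf/wdag/ParterY18}, so there is no internal proof to compare against. Your reconstruction correctly identifies the standard derandomization skeleton (bounded-wise independence, a pessimistic estimator, the method of conditional expectations implemented on the clique by testing $\mathrm{poly}(n)$ candidate seed blocks in parallel), and that much is sound; as you note, it yields $O(\log n)$ or $O(\log^2 n)$ rounds.

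The genuine gap is precisely the step you flag yourself: bringing the complexity down to $O((\log\log n)^3)$ rounds, which is the entire nontrivial content of the cited corollary. Neither of the two ideas you sketch closes it. Hashing $V$ into a $\mathrm{polylog}(n)$-size universe does not obviously preserve the hitting-set structure — a hitting set must be a subset of the original $V$, and lifting a hitting set of hashed images back to $V$ requires a local inversion you do not supply; moreover, for $k=O(\mathrm{polylog}\,n)$ the images of the $S_v$'s can collapse to constant size, destroying the very structure you need to hit. Splitting the seed into $O(\log\log n)$ blocks fixed by ``separate, smaller derandomized sub-instances'' presupposes that the pessimistic estimator decomposes across disjoint seed blocks, but the method of conditional expectations is inherently sequential in the seed: the estimator for block $i$ depends on the values already assigned to all earlier blocks. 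Without either a genuinely $\mathrm{polylog}(\log n)$-bit seed family achieving the needed concentration (which a $\Theta(\log n)$-wise independent distribution, with its $\Theta(\log^2 n)$-bit seed, is not) or a concrete parallel block-fixing mechanism, the claimed round complexity remains unsupported. Since the result is used in this paper strictly as a black box, the cleanest course is the one the authors take: cite it.
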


\paragraph{Partitions.}

We will use the following lemmas on partitioning a set of weighted items into equal parts. Note that all the lemmas are constructive, that is, they also imply a deterministic algorithm for constructing the partition in a canonical way.

\begin{lemma}[\cite{DBLP:conf/opodis/Censor-HillelLT18}]\label{lemma:partition-even}
Let $w_1, w_2, \dotsc, w_n$ be natural numbers, and let $W$, $w$ and $k$ be natural numbers such that
$\sum_{i=1}^n w_{i} = W$, $w_{i} \le w$  for all $i$, and
$k$ divides $n$.
Then there is a partition of $[n]$ into $k$ sets $I_1, I_2, \dotsc, I_k$ of size $n/k$ such that
\[\sum_{i \in I_j} w_i \le W/k + w \text{ for all $j$.}\]
\end{lemma}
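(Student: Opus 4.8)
The plan is to reduce to a sorted instance and then use a block-wise round-robin assignment. First I would sort the weights in non-increasing order; by relabelling the indices, assume $w_1 \ge w_2 \ge \dotsb \ge w_n$. Since $k$ divides $n$, partition $[n]$ into $n/k$ consecutive blocks $B_1, \dotsc, B_{n/k}$ with $B_m = \{(m-1)k+1, \dotsc, mk\}$, and define the parts by taking the $j$-th element of each block: $I_j = \{(m-1)k+j : 1 \le m \le n/k\}$ for $j = 1, \dotsc, k$. By construction each $I_j$ has exactly $n/k$ elements and the $I_j$ partition $[n]$, so the cardinality requirement holds automatically; only the weight bound remains.

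Fix $j$. Listing the elements of $I_j$ in increasing order, the first is $w_j$ and the rest are $w_{mk+j}$ for $m = 1, \dotsc, n/k-1$. The key observation is that, since the sequence is non-increasing, $w_{mk+j} \le w_{mk}$, and $w_{mk}$ is the smallest weight in the block $B_m$, hence at most the average of that block: $w_{mk+j} \le \tfrac{1}{k}\sum_{l \in B_m} w_l$. Summing over $m = 1, \dotsc, n/k-1$ telescopes to $\tfrac{1}{k}\sum_{l=1}^{(n/k-1)k} w_l \le W/k$. Combined with $w_j \le w$ from the hypothesis, this gives $\sum_{i \in I_j} w_i = w_j + \sum_{m=1}^{n/k-1} w_{mk+j} \le w + W/k$, which is the claimed bound.

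Finally I would note that the construction is fully canonical and constructive: sorting together with the deterministic block-wise assignment specifies each $I_j$ explicitly from the input, as the surrounding text requires. I do not expect a genuine obstacle here; the only point requiring care is picking an assignment rule that simultaneously forces the parts to have exactly equal size $n/k$ \emph{and} admits the averaging bound — block-wise round-robin after sorting does both, whereas a naive ``add the next item to the lightest part'' greedy rule would control weights but not cardinalities. A secondary sanity check is the boundary case $n=k$, where $n/k-1 = 0$, the telescoping sum is empty, and the estimate degenerates correctly to $\sum_{i \in I_j} w_i = w_j \le w \le W/k + w$.
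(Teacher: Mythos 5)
Your proof is correct. Note that the paper itself does not prove this lemma — it cites it from \cite{DBLP:conf/opodis/Censor-HillelLT18} — so there is no in-paper argument to compare against; but the sort-then-block-round-robin construction you use is the standard way to get both exact cardinalities $n/k$ and the $W/k + w$ weight bound simultaneously, and the averaging step (the minimum of each block of $k$ is at most the block average) is exactly right. One small wording nit: the sum $\sum_{m=1}^{n/k-1}\frac{1}{k}\sum_{l\in B_m}w_l$ does not ``telescope''; the blocks are just disjoint and their union is $\{1,\dotsc,(n/k-1)k\}$, giving $\frac{1}{k}\sum_{l=1}^{(n/k-1)k}w_l\le W/k$. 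Your boundary check at $n=k$ is apt, and the construction is canonical as required by the surrounding remark.
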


\begin{lemma}\label{lemma:partition-consecutive}
Let $w_1, w_2, \dotsc, w_n$ be natural numbers, and let $W$, $w$ and $k$ be natural numbers such that
$\sum_{i=1}^n w_{i} = W$, $w_{i} \le w$  for all $i$.
Then there is a partition of $[n]$ into $k$ sets $I_1, I_2, \dotsc, I_k$ such that for each $j$, the set $I_j$ consist of consecutive elements, and
\[\sum_{i \in I_j} w_i \le W/k + w\,.\]
\end{lemma}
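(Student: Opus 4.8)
The plan is to cut the sequence at the $k-1$ internal positions that best approximate the ideal fractional cut points $jW/k$, $j=1,\dots,k-1$. We may assume $W\ge 1$, since if $W=0$ then every $w_i=0$ and any partition into $k$ consecutive blocks works. Pass to prefix sums: set $P_0=0$ and $P_i=\sum_{\ell=1}^{i}w_\ell$, so that $P_n=W$ and $(P_i)_{i=0}^{n}$ is non-decreasing. For $1\le j\le k-1$ let $b_j$ be the least index $i$ with $P_i\ge jW/k$; since $P_0=0< jW/k\le W=P_n$, this is well-defined with $1\le b_j\le n$. Put $b_0=0$ and $b_k=n$. Monotonicity of the thresholds and of the prefix sums gives $0=b_0\le b_1\le\dots\le b_k=n$, so the blocks $I_j:=\{\,i : b_{j-1}<i\le b_j\,\}$ consist of consecutive elements and partition $[n]$ in order. (A block may be empty when some single $w_i$ exceeds $W/k$; this is harmless in our applications, and unlike Lemma~\ref{lemma:partition-even} we impose no divisibility or equal-size requirement.)

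Next I would verify the weight bound block by block. For an internal block with $1\le j\le k-1$: minimality of $b_j$ gives $P_{b_j-1}<jW/k$, hence $P_{b_j}=P_{b_j-1}+w_{b_j}<jW/k+w$; and by the defining property of $b_{j-1}$ we have $P_{b_{j-1}}\ge (j-1)W/k$ (for $j=1$ this reads $P_0=0=0\cdot W/k$). Therefore
\[
\sum_{i\in I_j} w_i \;=\; P_{b_j}-P_{b_{j-1}} \;<\; \frac{jW}{k}+w-\frac{(j-1)W}{k} \;=\; \frac{W}{k}+w .
\]
For the last block, $P_{b_k}=P_n=W$ while $P_{b_{k-1}}\ge (k-1)W/k$, so $\sum_{i\in I_k} w_i = W-P_{b_{k-1}} \le W/k \le W/k+w$. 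This proves the lemma.

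Regarding the constructive claim in the statement: the breakpoints $b_1,\dots,b_{k-1}$ are an explicit deterministic function of the data $(w_1,\dots,w_n,W,k)$ — each is the first prefix sum reaching a fixed threshold — so the partition is produced in a canonical way, and the construction doubles as a deterministic (sequential) algorithm. I do not expect a genuine obstacle: this is the standard ``slice a weighted sequence into $k$ balanced contiguous chunks'' fact, and the only thing requiring a little care is the boundary handling (the cases $j=1$ and $j=k$, possibly empty blocks, and keeping every $b_j$ in range), all of which is immediate from $P_0=0$, $P_n=W$, and monotonicity of the prefix sums.
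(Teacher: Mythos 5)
Your proof is correct. It takes the same basic strategy as the paper (a one-pass greedy cut of the sequence into contiguous blocks), but with a small variation in the mechanism: the paper accumulates elements into the current block until the \emph{per-block} running sum reaches $W/k$ and then resets, whereas you cut at the \emph{absolute} prefix-sum thresholds $jW/k$. The two rules can produce different partitions on concrete inputs, but both satisfy the bound. Your variant has a mild advantage on a point the paper glosses over: choosing the breakpoints $b_1,\dots,b_{k-1}$ up front yields exactly $k$ (possibly empty) blocks by construction, while the accumulate-and-reset version needs a separate counting argument --- the first $m-1$ blocks each have sum $\ge W/k$, so $m-1 \le k$ --- and possibly padding by empty sets, to reach exactly $k$ blocks. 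The bound itself is verified cleanly in your write-up, including the $j=1$ and $j=k$ boundary cases and the harmless empty-block case.
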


\begin{proof}
Construct sets $I_j$ in the partition by starting from the first element, adding new elements to the current set until $\sum_{i \in I_j} w_i \ge W/k$. Since all weights $w_i$ are at most $w$, we have $\sum_{i \in I_j} w_i \le W/k + w$ for all $j$, and since $\sum_{i \in I_j} w_i \ge W/k$, this process generates at most $k$ sets.
\end{proof}

\begin{lemma}\label{lemma:partition-consecutive-2}
Let $w_1, w_2, \dotsc, w_n$ and $u_1, u_2, \dotsc, u_n$ be natural numbers, and let $W$, $U$ $w$, $u$ and $k$ be natural numbers such that
$\sum_{i=1}^n w_{i} = W$ and $\sum_{i=1}^n u_{i} = U$, and we have $w_{i} \le w$ and $u_{i} \le u$ and for all $i$.
Then there is a partition of $[n]$ into $k$ sets $I_1, I_2, \dotsc, I_k$ such that for each $j$, the set $I_j$ consist of consecutive elements, and
\[\sum_{i \in I_j} w_i \le 2(W/k + w)\, \quad \text{and} \quad \sum_{i \in I_j} u_i \le 2(U/k + u)\,. \]
\end{lemma}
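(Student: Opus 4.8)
The plan is to mimic the greedy argument used for Lemma~\ref{lemma:partition-consecutive}, but to close each block as soon as \emph{either} of the two accumulated weights threatens to exceed its target, and to absorb the loss coming from having two stopping conditions into the factor of $2$.

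Concretely, I would build the blocks $I_1, I_2, \dotsc$ from left to right. Having fixed $I_1, \dotsc, I_{j-1}$, form $I_j$ by appending the next unused elements one at a time, stopping just before the first point at which $\sum_{i \in I_j} w_i$ would exceed $2(W/k + w)$ or $\sum_{i \in I_j} u_i$ would exceed $2(U/k+u)$; then open $I_{j+1}$ and continue until all elements are used. By construction every $I_j$ consists of consecutive elements and satisfies both required inequalities. The process never stalls: a single element $i$ is always an admissible block, since $w_i \le w \le 2(W/k+w)$ and $u_i \le u \le 2(U/k + u)$, so each $I_j$ is nonempty and the partition terminates. (If it uses fewer than $k$ blocks, pad with empty sets.)

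It remains to show at most $k$ blocks are produced, which is the heart of the argument. Consider any block $I_j$ that is \emph{not} the last one, and let $e$ be the first element that was not added to it; by the stopping rule, adding $e$ would have violated one of the two bounds. If it is the $w$-bound, then $\sum_{i \in I_j} w_i + w_e > 2(W/k + w)$, and since $w_e \le w$ this gives $\sum_{i \in I_j} w_i > 2W/k + w \ge 2W/k$; call such a block \emph{$w$-heavy}. Otherwise the $u$-bound was about to be violated, and symmetrically $\sum_{i \in I_j} u_i > 2U/k$; call it \emph{$u$-heavy}. Every non-last block is $w$-heavy or $u$-heavy (pick one if both apply). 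The $w$-heavy blocks are pairwise disjoint, so their number times $2W/k$ is at most $\sum_i w_i = W$, i.e.\ there are strictly fewer than $k/2$ of them; likewise there are strictly fewer than $k/2$ $u$-heavy blocks. Hence the number of non-last blocks is less than $k/2 + k/2 = k$, and counting the last block gives at most $k$ blocks in total.

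The one point that needs care — and the only place the factor $2$ is actually used — is the heaviness estimate $\sum_{i\in I_j}w_i > 2W/k$ (rather than merely $>W/k$): this is exactly what lets the budget of $k$ blocks be split between the two stopping conditions. The degenerate cases are harmless: if $W=0$ then $w=0$ since $w_i\le w$ for all $i$, so no block can be $w$-heavy and the bound on $w$-heavy blocks holds vacuously, and symmetrically for $u$; and if $n<k$ the construction simply yields $n\le k$ singleton blocks.
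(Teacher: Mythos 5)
Your proof is correct, but it takes a genuinely different route from the paper's. The paper applies Lemma~\ref{lemma:partition-consecutive} twice (once per sequence with target $W/k+w$, resp.\ $U/k+u$), collects the $2k$ block endpoints of both partitions in order, and takes every other endpoint as the new block boundary; each resulting block then overlaps at most two blocks of each original partition, which is where the factor $2$ comes from. You instead run a single greedy pass with a double stopping rule and recover the bound on the number of blocks by a counting argument: every prematurely closed block is ``heavy'' for one of the two weights, and since heavy blocks for a fixed weight are disjoint and each carries more than $2W/k$ (resp.\ $2U/k$), there are strictly fewer than $k/2$ of each kind, so strictly fewer than $k$ non-final blocks. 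This is a clean, self-contained argument that is really a direct extension of the proof of Lemma~\ref{lemma:partition-consecutive} rather than a reduction to it; the paper's version has the advantage of reusing the earlier lemma as a black box, yours has the advantage of making the role of the factor $2$ transparent. One small slip: in the degenerate case you claim $W=0$ forces $w=0$, which does not follow ($w$ is merely an upper bound on the $w_i$, not their maximum). The conclusion you need — that no block is $w$-heavy — still holds, but for the simpler reason that $W=0$ forces every $w_i=0$, hence every block sum is $0$, which is not $>2W/k=0$. The same remark applies symmetrically to $U$.
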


\begin{proof}
We begin by applying Lemma~\ref{lemma:partition-consecutive} separately to sequences $w_1, w_2, \dotsc, w_n$ and $u_1, u_2, \dotsc, u_n$. That is, by Lemma~\ref{lemma:partition-consecutive}, there exist the following two partitions of set $[n]$ into sets of consecutive indices:
\begin{itemize}
    \item Partition $J_1, \dotsc, J_k$ such that for any $j$, we have $\sum_{i \in J_j} w_i \le W/k + w$.
    \item Partition $K_1, \dotsc, K_k$ such that for any $j$, we have $\sum_{i \in K_j} u_i \le U/k + u$.
\end{itemize}
Let $i_0 = 0$, and let $i_1, \dotsc, i_{2k}$ be the last elements of sets from partitions $J_1, \dotsc, J_k$ and $K_1, \dotsc, K_k$ in order.

Define sets $I_1, \dotsc, I_k$ as $I_j = \{ i_{2j-2} + 1, \dotsc, i_{2j} \}$; intuitively, this corresponds to taken the \emph{fenceposts} of the partitions $J_1, \dotsc, J_k$ and $K_1, \dotsc, K_k$, and taking every other fencepost to give a new partition into sets of consecutive indices. Clearly $\{ I_j \}$ form a partition of $[n]$, and since each $I_j$ overlaps at most two sets from $J_1, \dotsc, J_k$ and at most two sets from $K_1, \dotsc, K_k$, we have $\sum_{i \in I_j} w_i \le 2(W/k + w)$ and $\sum_{i \in I_j} u_i \le 2(U/k + u)$.
\end{proof}


\section{Matrix multiplication}\label{section:mm}

\subsection{Output-sensitive sparse matrix multiplication}\label{section:sparse-mm}

Our first matrix multiplication result is a output-sensitive variant of sparse matrix multiplication. In the \emph{matrix multiplication problem}, we are given two $n \times n$ matrices $S$ and $T$ over semiring $R$, and the task is to compute the product matrix $P = ST$,
\[ P[u,v] = \sum_{w \in V} S[v,w] T[w,u]\,.\]
Following \cite{censor2015algebraic,le2016further}, we assume for concreteness that in the \clique model, each node receives the row $S[v,V]$ and the column $T[V,v]$ as a local input, and we want to compute the output so that each node locally knows the row $P[v,V]$ of the output matrix.

\paragraph{Matrix densities.} For matrix $S$, we denote by $\nz(S)$ the number of non-zero entries in $S$. Furthermore, we define the \emph{density} $\dens{S}$ as the smallest positive integer satisfying $\nz(S) \le \dens{S} n$.

When discussing the density of a product matrix $P = ST$, we would like to simply consider the density $\dens{P}$; however, for technical reasons, we want to ignore zero entries that appear due to cancellations. Formally, let $\hat{S}$ be the binary matrix defined as
\[ \hat{S}[i,j] =
\begin{cases}
1 & \text{if $S[i,j] \ne 0$},\\
0 & \text{if $S[i,j] = 0$},
\end{cases} \]
and define $\hat{T}$ similarly. Let $\hat{P} = \hat{S}\hat{T}$, where the product is taken over the Boolean semiring. We define the density of the product $ST$, denoted by $\densh{ST}$, as the smallest positive integer satisfying $\nz(\hat{P}) \le \densh{ST} n$. Note that for most of our applications, we operate over semirings where no cancellations can occur in additions, in which case $\densh{ST} = \dens{P}$.

We also note that while we assume that the input matrices are $n \times n$, we can also use this framework for rectangular matrix multiplications, simply by padding the matrices with zeroes to make them square.

\paragraph{Sparse matrix multiplication algorithm.} Our main result for sparse matrix multiplication is the following:

\begin{theorem}\label{theorem:mm}
Matrix multiplication $ST = P$ can be computed deterministically in
\[O\biggl( \frac{( \dens{S} \dens{T} \densh{ST} )^{1/3}}{n^{2/3}} + 1 \biggr)\]
rounds in \clique, assuming we know $\densh{ST}$ beforehand.
\end{theorem}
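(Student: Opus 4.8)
The plan is to reduce sparse matrix multiplication to a batch of small dense matrix multiplications, balanced so that each node does roughly equal work, and to recover the dense-case $O(n^{1/3})$ bound as a special case. Recall the standard scheme of \cite{censor2015algebraic}: to compute $P = ST$, partition the ``inner'' index set $V$ into blocks and the row/column index sets into blocks, so that computing one block $P[V_a, V_b] = \sum_c S[V_a,V_c]\,T[V_c,V_b]$ becomes a product of small matrices that can be assigned to a single node once that node has gathered the relevant entries. The running time is governed by (i) how many such block-products there are in total, (ii) the size of each block-product, and (iii) the cost of routing the needed entries to the responsible nodes, which by the routing primitive is $O(1)$ per node per $O(n)$ messages it must send or receive. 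With $n^3$ scalar multiplications spread over $n$ nodes, each node is responsible for $n^2$ scalar multiplications, i.e.\ it can handle block-products whose total volume is $n^2$; since each block-product involves $O(n)$ input and output entries per dimension, this yields the $O(n^{1/3})$ dense bound. The goal here is to replace the crude ``each matrix has $n^2$ entries'' bound by the actual sparsities $\dens{S}n$, $\dens{T}n$, $\densh{ST}n$.

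The key step is the choice of block structure adapted to the sparsity pattern. I would first have every node learn the row-count/column-count profiles of $S$, $T$ and (using the known value $\densh{ST}$) of $\hat P$ — this is $O(1)$ rounds of aggregation. Then, using the partition lemmas (Lemma~\ref{lemma:partition-even}, Lemma~\ref{lemma:partition-consecutive}, Lemma~\ref{lemma:partition-consecutive-2}), I would partition the row index set, the column index set and the inner index set into $p$, $q$, $r$ parts respectively, where the parts are chosen so that each part carries a roughly equal \emph{weight} — the weight of a row-part being the number of nonzeros of $S$ in those rows, of a column-part the number of nonzeros of $T$ in those columns, of an inner-part the number of nonzeros of $S$ in those columns plus of $T$ in those rows, and with $\hat P$'s sparsity controlling how the output entries distribute. (The two-sequence version, Lemma~\ref{lemma:partition-consecutive-2}, is exactly what lets a single inner-part be simultaneously light for both $S$ and $T$ up to a factor of $2$.) With $pqr \le n$ block-products, one assigns the triple $(a,b,c)$ to a distinct node; that node needs $S[V_a,V_c]$ (size $\approx \dens{S}n/(pr)$ by balancing), $T[V_c,V_b]$ (size $\approx \dens{T}n/(qr)$), contributes to $P[V_a,V_b]$ whose relevant nonzero volume is $\approx \densh{ST}n/(pq)$, and does $\approx (\dens{S}n/(pr))\cdot(\dens{T}n/(qr))\cdot(\text{block width})$ scalar operations. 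Choosing $p,q,r$ to equalize the three routing loads and keep $pqr\le n$ gives $p \asymp (\dens{S}\densh{ST}/\dens{T})^{1/3} n^{1/3}/n^{1/3}$-type expressions; carrying the optimization through yields routing/compute load per node of $O\bigl((\dens{S}\dens{T}\densh{ST})^{1/3}/n^{2/3}\bigr)$ batches of $O(n)$ messages, hence that many rounds, plus the additive $O(1)$ for the base aggregation and for the case where the optimal $p,q,r$ would be forced below $1$ (i.e.\ when the matrices are so sparse that one node suffices).

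A few subtleties I would need to handle carefully. First, the partition lemmas give additive slack $+w$ per part, where $w$ is the maximum row/column weight; a single dense row of $S$ has weight up to $n$, which could swamp a part's budget. The standard fix — and I expect this to be the main obstacle — is to treat heavy rows/columns separately: a row with more than, say, $\dens{S}n/p$ nonzeros is handled by a dedicated set of nodes via a direct (possibly dense) sub-multiplication, and since there are few such heavy rows their total contribution is controlled; one must check the accounting still closes to the claimed bound, and that $\densh{ST}$ rather than $\dens{P}$ is the right quantity because Boolean-product nonzeros upper-bound where real nonzeros can appear even under cancellation. Second, since $\densh{ST}$ is given as input, the output-sparsity balancing of the column-partition is legitimate; without it one would need the $O(\log n)$ sparsify-on-the-fly overhead that appears in the companion theorem, but here it is avoided. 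Third, all partitions are produced by the canonical deterministic constructions promised after the partition lemmas and the hitting-set-free aggregation is deterministic, so the whole algorithm is deterministic as claimed. Finally, I would verify the routing bound: each node sends/receives $O\bigl(n \cdot (\dens{S}\dens{T}\densh{ST})^{1/3}/n^{2/3}\bigr)$ messages in total, spread over that many rounds of the $O(1)$-round routing primitive, which is consistent.
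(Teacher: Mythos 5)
Your high-level plan---an unbalanced 3D decomposition of the cube $V^3$ tuned to the three densities, with per-node sub-products assembled by routing and partition lemmas used to even out the input load---is indeed the paper's approach, and your $(p,q,r)$ with $pqr\le n$ plays the same role as the paper's parameters $a,b,c$ with $abc=n$. Two points need correcting, one in each direction.

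First, the heavy-row concern is not an obstacle. The partition lemmas give slack $W/k+w$ with $w\le n$, so each subcube carries $O(\dens{S}a+n)$ entries of $S$ (and similarly for $T$). Dividing by $n$ to count routing rounds gives $O(\dens{S}a/n + 1)$, so the $+n$ additive slack is absorbed exactly into the $+1$ in the theorem statement. The paper needs no heavy-row special case, and adding one would be superfluous complexity.

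Second---and this is a genuine gap---you cannot balance the \emph{output} a priori. You propose to learn "the row-count/column-count profile of $\hat P$" and to have the partition guarantee each subtask contributes $\approx \densh{ST}n/(pq)$ output entries, but only the global parameter $\densh{ST}$ is available; the per-row or per-block pattern of $\hat P$ is exactly what one does not know until after multiplying. Two subcubes with identical input sparsity can have $\nz(P^v)$ differing by a polynomial factor, so a node assigned a single subcube may end up holding $\Theta(n^2)$ intermediate values, and no choice of $(p,q,r)$ prevents this. The paper's fix is a separate balancing pass after the intermediate products are formed: each node $v$ computes and then broadcasts $\nz(P^v)$; since every output position occurs in exactly $c$ subcubes, $\sum_v \nz(P^v) \le \densh{ST}\,n\,c$; this makes it possible to construct a globally known assignment $\sigma$ that duplicates the dense products $P^v$ across $\approx \nz(P^v)/(\densh{ST}c)$ extra nodes, so every node ends up holding only $O(\densh{ST}c)$ intermediate values before the sorted-accumulation summation step. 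Without this post-hoc output balancing and the accompanying summation argument, the claimed round bound does not follow from your decomposition.
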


We note that for all of our applications, the requirement that we know $\densh{ST}$ beforehand is satisfied. However, the algorithm can be modified to work without knowledge of $\densh{ST}$ with the additional cost of multiplicative $O(\log n)$ factor; we simply start with estimate $\densh{ST} = 2$, restarting the algorithm with doubled estimate if the running time for current estimate is exceeded.

The rest of this section gives the proof of Theorem~\ref{theorem:mm}. We start by describing the overall structure of the algorithm, and then detail the different phases of the algorithm individually.

\subsubsection{Algorithm description}

\paragraph{Algorithm parameters.} We define the algorithm parameters $a$, $b$ and $c$ as
\[ a = \frac{(\dens{T}\densh{ST} n)^{1/3}}{\dens{S}^{2/3}} \,, \qquad b = \frac{(\dens{S}\densh{ST} n)^{1/3}}{\dens{T}^{2/3}} \,, \qquad c = \frac{(\dens{S}\dens{T} n)^{1/3}}{\densh{ST}^{2/3}}\,.\]
These parameter will control how we split the matrix multiplication task into independent subtasks. To see why these parameters are chosen in this particular way, we note that we will require that $abc = n$, and the final running time of the algorithm will be $O\bigl(\dens{S} a/n + \dens{T} b/n + \densh{ST} c/n + 1\bigr)$ rounds,
which is optimized by the selecting $a$, $b$ and $c$ as above; this gives the running time in Theorem~\ref{theorem:mm}.

For simplicity, we assume that $a$, $b$ and $c$ are integers. If not, taking the smallest greater integer will cause at most constant overhead.

\paragraph{Algorithm overview.} Our algorithm follows the basic idea of the classical 3D matrix multiplication algorithm, as presented for \clique by \cite{censor2015algebraic}, and as adapted for the sparse matrix multiplication by~\cite{DBLP:conf/opodis/Censor-HillelLT18}. That is, we want to reduce the matrix multiplication task into $n$ smaller instances of matrix multiplication, and use a single node for each one of these. However, due to the fact that we are working with sparse matrices, we have to make certain considerations in our algorithm:
\begin{itemize}
    \item Whereas the 3D matrix multiplication splits the original multiplication into $n$ products of $n^{2/3} \times n^{2/3}$ matrices, we would ideally like to split into $n$ products of shape $(n/b \times n/c) \times (n/c \times n/a)$.
    \item Unlike in the dense case, we also have to make sure that all of our $n$ products are equally sparse. While this could be achieved using randomization similarly to the triangle detection algorithms of \cite{Pandurangan2018,expander2018}, we want to do this deterministically.
\end{itemize}

With above considerations in mind, we now give an overview of our sparse matrix multiplication algorithm:

\begin{oframed}
\begin{enumerate}
\item We compute a partition of the matrix multiplication task $P = ST$ into $n$ subtasks $P^v = S^v T^v$, where $v \in V$, so that each $P^v$ is a $n/b \times n/a$ matrix, $S^v$ and $T^v$ are submatrices of $S$ and $T$, respectively, and we have
\begin{align*}
  \nz(S^v) & = O(\dens{S} n/bc) =  O(\dens{S} a) \qquad \text{and} \\
  \nz(T^v) & = O(\dens{T} n/ac) = O(\dens{T} b)
\end{align*}
for all $v \in V$. This step takes $O(1)$ rounds. (Section~\ref{section:mm-cube}.)
\item Each node $v$ learns the matrices $S^v$ and $T^v$, and computes the product $P^v = S^v T^v$. Note that after this step, some of the matrices $P^v$ may be very dense. This step takes $O\bigl(\dens{S} a/n + \dens{T} b/n + 1\bigr)$ rounds. (Section~\ref{section:mm-products1}.)
\item We balance the output matrices $P^v$ so that each node holds $O(\densh{ST} n/ab) = O(\densh{ST} c)$ values that need to be summed to obtain the final output matrix $P$. This is achieved by duplicating those subtasks where the output is too dense. This step takes $O\bigl(\dens{S} a/n + \dens{T} b/n + 1\bigr)$ rounds. (Section~\ref{section:mm-products2}.)
\item The intermediate values obtained in Step~4 are summed together to obtain the output matrix $P$. This step takes $O\bigl(\densh{ST} c/n + 1\bigr)$ rounds. (Section~\ref{section:mm-summation}.)
\end{enumerate}
\end{oframed}

Note that the total running time of the above algorithm will be $O\bigl(\dens{S} a/n + \dens{T} b/n + \densh{ST} c/n + 1\bigr)$
rounds, which by the choice of $a$, $b$ and $c$ is as stated in Theorem~\ref{theorem:mm}. Note that Steps (1) and (2) are essentially streamlined versions of corresponding tools from \cite{DBLP:conf/opodis/Censor-HillelLT18}, while Steps (3) and (4) are new.

\subsubsection{Cube partitioning}\label{section:mm-cube}

We say that a \emph{subcube} of $V^3$ is a set of form $V_1 \times V_2 \times V_3$, where $V_1, V_2, V_3 \subseteq V$. Note that such a subcube corresponds to a matrix multiplication task $S[V_1, V_2] T[V_2, V_3]$.
Thus, a partition of the cube $V^3$ into subcubes corresponds to a partition of the original matrix multiplication into smaller matrix multiplication tasks, as discussed in the overview; see also Figure~\ref{fig:mm-1}.

\begin{lemma}\label{lemma:cube-partitioning}
There is a \clique algorithm running in $O(1)$ rounds that produces globally known a partition of $V^3$ into $n$ disjoint subcubes $V_{i}$ such that for each subcube $V_{i} = V_i^{S} \times V_i^{T} \times V_i^{P}$, we have $|V^S_i| = O(n/b)$, $|V^T_i| = O(n/a)$ and the total number of non-zero entries is
\begin{enumerate}
    \item $O\bigl(\dens{S} a + n \bigr)$ in the submatrix $S[V_i^S, V_i^P]$, and
    \item $O\bigl(\dens{T} b + n \bigr)$ in the submatrix $T[V_i^P, V_i^T]$.
\end{enumerate}
\end{lemma}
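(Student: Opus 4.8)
The goal is to produce, in $O(1)$ rounds, a globally-known partition of $V^3$ into $n$ subcubes $V_i = V_i^S \times V_i^T \times V_i^P$ with the prescribed dimension bounds ($|V_i^S| = O(n/b)$, $|V_i^T| = O(n/a)$, and implicitly $|V_i^P| = O(n/c)$ since the product of the three bounds must be $O(n^3/abc) = O(n^2)$ and there are $n$ pieces) and the prescribed non-zero-count bounds in the relevant submatrices of $S$ and $T$. The plan is to handle the three coordinate axes separately, producing a partition of the first axis into $\Theta(b)$ intervals, the second axis into $\Theta(a)$ intervals, and the third axis into $\Theta(c)$ intervals, and then take $V_i$ to range over the $\Theta(abc) = \Theta(n)$ products of one interval from each axis.

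\textbf{Step 1: partitioning the rows of $S$ (the $V^S$ axis).} Here I want to split $V$ (indexing rows) into $\Theta(b)$ consecutive blocks each containing $O(\nz(S)/b + n) = O(\dens{S} n / b + n)$ non-zero entries of $S$ and each of size $O(n/b)$. To do this I apply Lemma~\ref{lemma:partition-consecutive-2}: for each row $v$, let $w_v$ be the number of non-zeros in row $v$ of $S$ (so $\sum_v w_v = \nz(S) \le \dens{S} n$, and $w_v \le n$), and let $u_v = 1$ (so $\sum_v u_v = n$, $u_v \le 1$). The lemma gives a partition into $\Theta(b)$ consecutive blocks, each with at most $2(\dens{S} n/b + n)$ non-zeros of $S$ and at most $2(n/b + 1) = O(n/b)$ rows. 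This is exactly the $V^S$-coordinate partition.

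\textbf{Step 2: partitioning the columns of $T$ (the $V^T$ axis).} Symmetrically, use the per-column non-zero counts of $T$ as the weights $w$ and the all-ones sequence as $u$, and invoke Lemma~\ref{lemma:partition-consecutive-2} with parameter $\Theta(a)$. This gives $\Theta(a)$ consecutive blocks, each with $O(\dens{T} n/a + n)$ non-zeros of $T$ and $O(n/a)$ columns.

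\textbf{Step 3: partitioning the contraction axis $V^P$, and assembling.} Split $V$ (the summation/contraction index) into $\Theta(c)$ consecutive blocks of equal size $O(n/c)$; since $abc = n$ this is just an arbitrary even split, no weighted lemma needed. Now set the $n$ subcubes to be all triples (block from axis 1) $\times$ (block from axis 2) $\times$ (block from axis 3); there are $\Theta(abc) = \Theta(n)$ of them, and we assign one to each node $v$ (padding up to exactly $n$ with empty subcubes if the constants do not divide evenly). For a subcube $V_i$, the submatrix $S[V_i^S, V_i^P]$ has at most as many non-zeros as $S[V_i^S, V]$, which is $O(\dens{S} n/b + n)$ by Step 1; but $n/b$ is $\Theta(ac)$, so $\dens{S} n/b = \Theta(\dens{S} ac)$... wait, this gives $O(\dens{S} ac + n)$, not $O(\dens{S} a + n)$. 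The fix: I must further charge the $V^P$ split. The count of non-zeros of $S$ in $S[V_i^S, V_i^P]$ is bounded by $\min$ of the row-block bound and a column-block bound; to get $O(\dens{S} a + n)$ I instead want the per-block bound on the \emph{pair} (row block, column block). The cleaner route is: after fixing the row partition (Step 1) with $\Theta(b)$ blocks, within each row block $B$ I re-partition the columns of $S[B, V]$ into $\Theta(c)$ consecutive blocks balancing non-zeros of $S[B,\cdot]$, again by Lemma~\ref{lemma:partition-consecutive} (or \ref{lemma:partition-consecutive-2} to also control block size) — but these column partitions must be consistent across all row blocks, which they are not if done independently.

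\textbf{The main obstacle.} This consistency issue is the crux: the contraction axis $V^P$ must be partitioned \emph{once}, identically for all row blocks (for $S$) and all column blocks (for $T$), yet simultaneously the resulting submatrices $S[V_i^S, V_i^P]$ and $T[V_i^P, V_i^T]$ must each be sparse. The resolution is the standard trick (from \cite{DBLP:conf/opodis/Censor-HillelLT18}): partition $V^P$ into $\Theta(c)$ consecutive blocks that simultaneously balance two weight sequences — the column-degrees of $S$ and the row-degrees of $T$ — using Lemma~\ref{lemma:partition-consecutive-2}. Then $\sum_{w \in V_i^P}(\text{col-deg}_S(w)) = O(\nz(S)/c + n) = O(\dens{S} n/c + n)$; but we need the count \emph{restricted also to rows in $V_i^S$}. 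Here one uses that there are only $\Theta(b)$ row blocks partitioning $\nz(S)$ non-zeros into pieces of size $O(\dens{S} n/b + n)$, and the additive "$+n$" slack absorbed into each of the $\Theta(c)$ column sub-blocks gives the per-pair bound $O(\dens{S} n/(bc) + n) = O(\dens{S} a + n)$ after a second application of the balancing lemma nested inside — this is exactly why the statement carries the "$+n$" additive terms. I would carry out the nested argument carefully: first partition $V^P$ globally ($\Theta(c)$ blocks, balancing col-degrees of $S$ and row-degrees of $T$ via Lemma~\ref{lemma:partition-consecutive-2}); then for each of the $\Theta(c)$ resulting "slabs" of $S$, partition its rows into $\Theta(b)$ blocks balancing non-zeros within that slab (Lemma~\ref{lemma:partition-consecutive-2}, also tracking row count); symmetrically for $T$; and finally take the common refinement, checking it still has only $\Theta(n)$ cells and that the size bounds $O(n/b)$, $O(n/a)$ survive (they do, since each refinement step only subdivides). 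All the underlying partitions are produced by the constructive, deterministic Lemmas~\ref{lemma:partition-consecutive}–\ref{lemma:partition-consecutive-2}, and each node can compute all of them locally from globally-broadcast per-row/per-column degree counts (each node broadcasts its own row/column degrees of $S$ and $T$, which is $O(n)$ numbers total, sortable/routable in $O(1)$ rounds), so the whole thing runs in $O(1)$ rounds and the output is globally known.
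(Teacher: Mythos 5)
Your ``main obstacle''---that the partition of the contraction axis $V^P$ must be a single partition consistent across all row blocks of $S$ and all column blocks of $T$---is not actually an obstacle, and the paper's proof does not impose it. Once $V$ is cut into $S$-row blocks $C^S_1,\dots,C^S_b$ and $T$-column blocks $C^T_1,\dots,C^T_a$, the tubes $C^S_i \times V \times C^T_j$ are already pairwise disjoint, so inside each tube the middle axis may be sliced into $c$ consecutive pieces $C^{ij}_1,\dots,C^{ij}_c$ that \emph{depend on both $i$ and $j$}: one invocation of Lemma~\ref{lemma:partition-consecutive-2} on the pair of weight sequences $\nz(S[C^S_i,\cdot])$ and $\nz(T[\cdot,C^T_j])$ produces exactly the $(i,j)$-specific slicing, and disjointness of the subcubes $C^S_i \times C^{ij}_k \times C^T_j$ is automatic. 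This is the step you dismissed; the fix you were looking for is ``let the middle partition depend on the $(i,j)$ pair,'' not ``partition the middle axis once and refine.''

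That said, the alternative you settle on is a valid transposed construction: cut $V^P$ first into $c$ slabs $P_1,\dots,P_c$ balancing the column-degrees of $S$ and row-degrees of $T$ (Lemma~\ref{lemma:partition-consecutive-2}), then for each slab $k$ cut rows of $S[\cdot,P_k]$ into $b$ blocks $R^k_i$ and columns of $T[P_k,\cdot]$ into $a$ blocks $Q^k_j$, and take the $abc=n$ subcubes $R^k_i \times P_k \times Q^k_j$. The arithmetic works out: each subcube has $\nz(S[R^k_i,P_k]) = O\bigl((\dens{S}n/c + n)/b + n/c\bigr) = O(\dens{S}a + n)$ and symmetrically for $T$, with $|R^k_i|=O(n/b)$ and $|Q^k_j|=O(n/a)$. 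However, your write-up of this has two gaps. First, ``take the common refinement'' is the wrong description and its justification (``each refinement step only subdivides'') does not apply: you are not refining a single global row partition but building a \emph{slab-dependent} row partition $R^k$ and column partition $Q^k$ for each $k$; the disjointness argument is the slab-first nesting, not refinement. Second, the $O(1)$-round implementation is underspecified: the per-slab partitions need each node to communicate $c$ slab-restricted non-zero counts $\nz(S[v,P_k])$ and $\nz(T[P_k,v])$, which cannot be broadcast directly when $c$ is large; you need the same grouping device the paper uses (assign $n/c = ab$ nodes to each slab $k$, route each node's $c$ per-slab counts to the corresponding groups, have each group compute and announce its $R^k$ and $Q^k$), plus an initial transpose step so nodes can even compute column-degrees of $S$ and row-degrees of $T$.
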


\begin{proof}{} We start by partitioning the input matrices into equally sparse `slices'. Specifically, we do the following:
\begin{enumerate}
    \item All nodes $v$ broadcast the number of non-zero entries on row $v$ of $S$. Based on this information, all nodes deterministically compute the same partition of $V$ into $b$ sets $C^S_1, C^S_2, \dotsc, C^S_b$ of size $O(n/b)$ such that $\nz(S[C^S_i, V]) = O(\dens{S}n/b + n)$; such partition exists by Lemma~\ref{lemma:partition-even}.
    \item Using the same procedure as above, the nodes compute a partition of $V$ into $a$ sets $C^T_1, C^T_2, \dotsc, C^T_a$ of size $O(n/a)$ such that $\nz(T[V, C^T_i]) = O(\dens{T}n/a + n)$.
\end{enumerate}
There are now $ab$ pairs $(C^S_i, C^T_j)$, each corresponding to a subcube $C^S_i \times V \times C^T_j$. We now partition each of these subcubes in parallel as follows. First, we partition the nodes into $ab$ sets of $n/ab = c$ nodes, each such set $B_{ij}$ corresponding to a pair of indices $(i,j) \in [b] \times [a]$. The final partition is now computed as follows:
\begin{enumerate}
    \item Nodes redistribute the input matrices so that node $v$ holds column $v$ of $S$ and row $v$ of $T$. This can be done in $O(1)$ rounds.
    \item In parallel for each $i$ and $j$, node $v$ sends the number of non-zero elements in $S[C^S_i,v]$ and $T[v, C^T_j]$ to all nodes in $B_{ij}$.
    \item Nodes in $B_{ij}$ compute a partition of $V$ into $c$ sets $C^{ij}_1, C^{ij}_2, \dotsc, C^{ij}_c$ of \emph{consecutive} indices such that the number of non-zero entries in $S[C^{S}_i, C^{ij}_k]$ is $O\bigl(\dens{S} n/bc + n \bigr)$ and the number of non-zero entries in $T[C^{ij}_k, C_j^T]$ is $O\bigl(\dens{T} n/ac + n \bigr)$; such partition exists by Lemma~\ref{lemma:partition-consecutive-2}.
    \item For each $C^{ij}_k$, the $k$th node in $B_{ij}$ broadcasts the first and last index of $C^{ij}_k$ to all other nodes, allowing all nodes to reconstruct these sets.
\end{enumerate}
The subcubes $C^S_i \times C^{ij}_k \times C^T_j$ are now known globally. Furthermore, since $n/bc = a$ and $n/ac = b$, the subcubes satisfy the requirements of the claim by construction.
\end{proof}

\begin{figure}
\center
\includegraphics[scale=0.7]{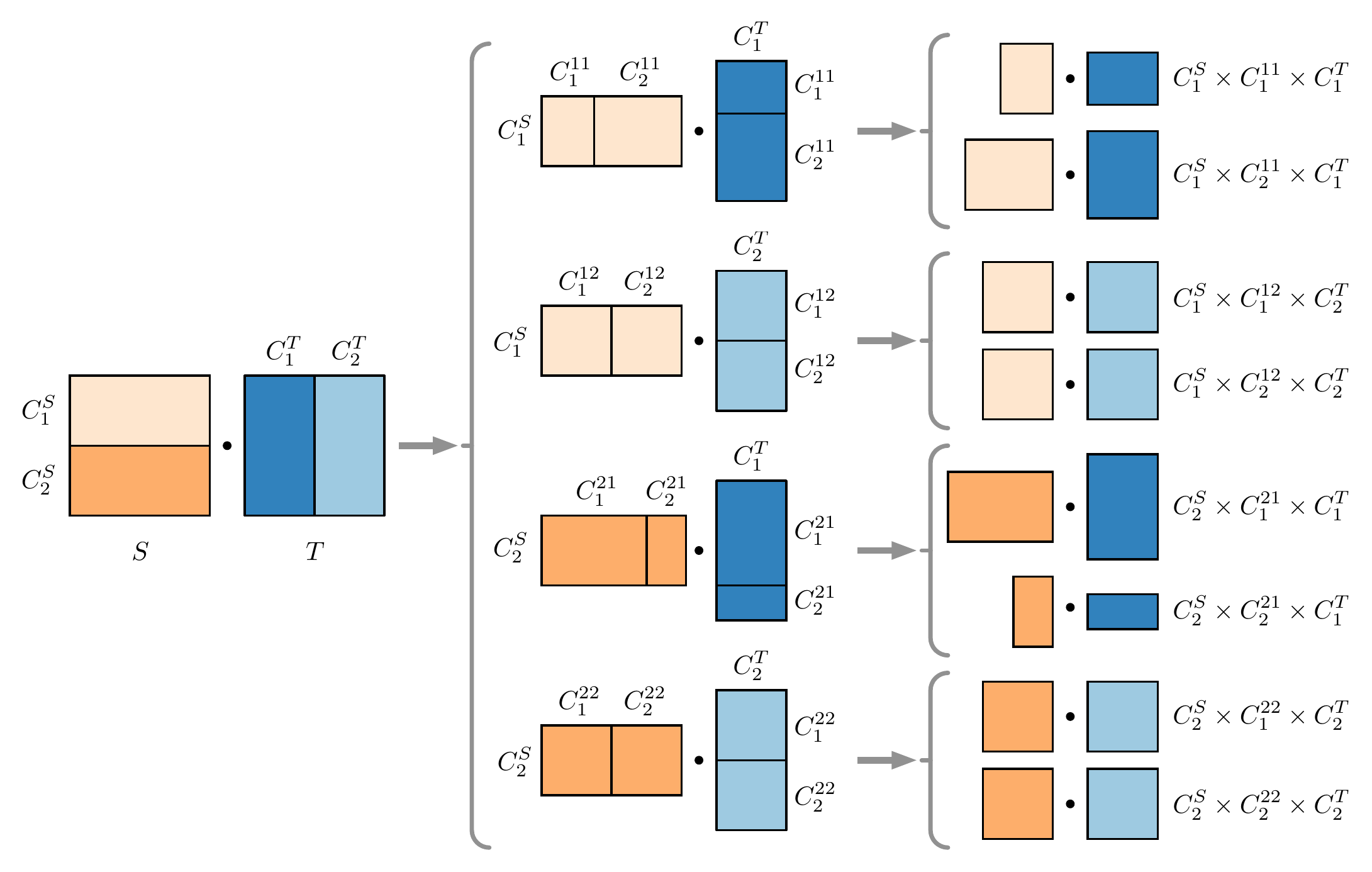}    
\caption{Matrix multiplication decomposed into subtasks using Lemma~\ref{lemma:cube-partitioning} for $a = b = c = 2$ and $n = 8$. The notation follows the proof of Lemma~\ref{lemma:cube-partitioning}.}\label{fig:mm-1}
\end{figure}

\subsubsection{Intermediate products}\label{section:mm-products1}

\paragraph{Balancing.} As an auxiliary tool, we want to solve a balancing task where each node $i$ has $n$ weighted entries with weights $w_{i1}, w_{i2}, \dotsc, w_{in} \in \mathbb{N}$ that satisfy
\[ \sum_{i,j} w_{ij} \le W\,, \qquad w_{ij} \le n\,,\]
and the task is to re-distribute the entries so that each node has $n$ entries with total weight $O(W/n + n)$. Concretely, we assume each weighted entry consists of the weight and $O(\log n)$ bits of data.

\begin{lemma}\label{lemma:balancing}
The above balancing task can be solved in $O(1)$ rounds in \clique.
\end{lemma}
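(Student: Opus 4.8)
The plan is to reduce the balancing task to one application of the sorting primitive followed by one routing step, avoiding any prefix-sum bookkeeping. First I would view the $n^2$ entries as a single list ordered by weight in non-increasing order (ties broken arbitrarily). Since each entry consists of a weight, which is at most $n$ and hence fits in $O(\log n)$ bits, together with $O(\log n)$ bits of data, the sorting primitive applies directly: in $O(1)$ rounds we can arrange that node $i$ holds the $i$-th batch of $n$ entries in this global order, i.e.\ the entries whose global ranks lie in $\{(i-1)n+1, \dots, in\}$.

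Next I would redistribute the entries in a round-robin fashion: the entry of global rank $r$ is sent to node $((r-1)\bmod n)+1$. Node $i$ sends exactly the $n$ entries of its batch, one to each of the $n$ nodes (the residues mod $n$ of $n$ consecutive ranks are all distinct), and node $j$ receives exactly the entries whose rank is congruent to $j$ modulo $n$, which is precisely one entry from each of the $n$ batches, so $n$ entries in total. Hence each node sends at most $n$ and receives at most $n$ messages of $O(\log n)$ bits, so this is a single instance of the routing task and takes $O(1)$ rounds. After this step every node holds exactly $n$ entries, so the count is balanced, and it remains only to bound the total weight each node holds.

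For the weight bound, write $v_1 \ge v_2 \ge \dots \ge v_{n^2}$ for the sorted weights, let batch $k$ be $\{v_{(k-1)n+1}, \dots, v_{kn}\}$ with total weight $T_k$, so $\sum_k T_k \le W$. Node $j$ ends up holding the weights $v_{(k-1)n+j}$ for $k = 1, \dots, n$. The $k=1$ term is at most $v_1 \le n$ by hypothesis. For $k \ge 2$, the term $v_{(k-1)n+j}$ is at most the maximum of batch $k$, which by sortedness is at most the minimum of batch $k-1$, which is in turn at most the average $T_{k-1}/n$ of batch $k-1$. Summing, node $j$'s total weight is at most $n + \sum_{k=2}^{n} T_{k-1}/n \le n + W/n = O(W/n + n)$, which gives the claim.

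I do not foresee a serious obstacle here, since the sorting and routing primitives do essentially all the work. The one point that needs care is the order of operations in the round-robin step: it must be phrased so that it genuinely matches the routing task (each node sends and receives at most $n$ messages), which is why one should sort globally and then cyclically shift by residue class, rather than partition into consecutive weight-balanced blocks in the style of Lemma~\ref{lemma:partition-consecutive}; a consecutive-block partition would require distributed computation of prefix sums and could leave the \emph{entry count} wildly unbalanced when many weights are $0$.
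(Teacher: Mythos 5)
Your proof is correct and reaches the same $O(1)$ bound, but the redistribution step is genuinely different from the paper's. Both you and the paper begin by sorting the $n^2$ entries globally by weight with Lenzen's sorting primitive, and both then deliver the entries to their final owners with one routing step. The paper, however, inserts an extra sub-step before redistribution: every node reports the count of entries with each weight value, and these counts are broadcast so that all nodes can locally compute a canonical balanced partition via Lemma~\ref{lemma:partition-even}, which is then used to determine destinations. You instead hard-code the destination of the entry of global rank $r$ as node $((r-1) \bmod n)+1$, which depends only on $r$ (known to each node after sorting), and you prove the weight bound directly: node $j$'s entry from batch $k \ge 2$ has weight at most the minimum of batch $k-1$, hence at most $T_{k-1}/n$, giving total $\le n + W/n$. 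This eliminates the need to broadcast the weight distribution, replaces the appeal to the abstract partition lemma by a short self-contained calculation, and sidesteps the worry you correctly identify about consecutive-block partitions leaving the \emph{count} unbalanced. The tradeoff is minimal: the paper's route reuses a lemma already needed elsewhere, while yours is a bit more elementary and explicit. One small point worth stating is that the global sorted order must break ties deterministically (e.g.\ by originating node and local index) so that ranks are well-defined and consistent across nodes; this is implicit in the sorting primitive and does not affect your weight bound.
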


\begin{proof}{}
As a first step, we globally learn the distribution of different weights, and compute a globally known ordering for the weighted entries:
\begin{enumerate}
    \item All nodes send the number of entries with weight exactly $i$ to node $i$, with node $1$ handling entries with both weight $0$ and $1$. Node $i$ broadcasts the total to all other nodes.
    \item Nodes sort the weighted entries using Lenzen's sorting algorithm~\cite{lenzen2013optimal}.
\end{enumerate}
Since all nodes know the distribution of the weights, all nodes can now locally compute what entries other nodes hold. This gives us a globally consistent numbering for the weighted entries, which we can use to solve the balancing task:
\begin{enumerate}
    \item Nodes locally compute a partition of weighted entries into $n$ sets of size $n$ with total weight $O(W/n + n)$; such partition exists by Lemma~\ref{lemma:partition-even}.
    \item Each set is assigned to a separate node. Nodes redistribute the entries so that each node receives their assigned set in $O(1)$ rounds.
\end{enumerate}
All steps clearly take $O(1)$ rounds.
\end{proof}

\paragraph{Computing intermediate products.} We now show how to compute the intermediate products given by the cube partition of Lemma~\ref{lemma:cube-partitioning}. The following lemma is in fact more general; we will use it as a subroutine in subsequent steps of the algorithm.

\begin{lemma}\label{lemma:product-algorithm}
Let $V_1, V_2, \dotsc, V_n$ be a partition of $V^3$ as in Lemma~\ref{lemma:cube-partitioning}, and let $\sigma \colon V \to [n]$ be a (not necessarily bijective) function that is known to all nodes. There is a \clique algorithm running in $O(\dens{S} a/n + \dens{T} b/n + 1)$ rounds such that after the completion of the algorithm, each node $v$ locally knows the product
\[ P^{\sigma(v)} = S[V_{\sigma(v)}^S, V_{\sigma(v)}^P]T[V_{\sigma(v)}^P, V_{\sigma(v)}^T]\,.\]
\end{lemma}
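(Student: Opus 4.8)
The plan is to reduce the task to a single application of the balancing tool (Lemma~\ref{lemma:balancing}) followed by a one-round all-to-all exchange, and then a purely local computation at each node. First I would have each node $v$ collect the data it will need to route. Recall from Lemma~\ref{lemma:cube-partitioning} that the subcube partition $V_1, \dotsc, V_n$ is globally known, so every node knows, for each subcube $V_i = V_i^S \times V_i^P \times V_i^T$, exactly which entries of $S$ and $T$ are relevant, namely $S[V_i^S, V_i^P]$ and $T[V_i^P, V_i^T]$. Since at the start node $w$ holds row $S[w, V]$ and (after the $O(1)$-round redistribution used inside Lemma~\ref{lemma:cube-partitioning}) can be made to hold column $S[V,w]$ and row $T[w,V]$, each entry $S[w, u]$ and $T[w, u]$ lives at a known source node, and for each subcube index $i$ with $w \in V_i^S$, $u \in V_i^P$ (resp. $w \in V_i^P$, $u \in V_i^T$) that entry must be delivered to the node $v$ with $\sigma(v) = i$. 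The one subtlety is that $\sigma$ need not be injective: several nodes may be assigned the same subcube index, so a given entry may have to be delivered to several destinations.

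The key step is to phrase the routing as an instance of the balancing task. For each node $w$ and each subcube index $i$, let $s_{wi} = \nz(S[\{w\}, V_i^P])$ if $w \in V_i^S$ and $0$ otherwise, and similarly $t_{wi}$ for $T$. By Lemma~\ref{lemma:cube-partitioning}, $\sum_{w} s_{wi} = \nz(S[V_i^S, V_i^P]) = O(\dens{S}a + n)$ and $\sum_w t_{wi} = O(\dens{T}b + n)$ for every $i$, so the total amount of data to be sent, counting multiplicities from $\sigma$ but noting that each subcube index is a destination of at most $n$ nodes, is $W = O\big(n(\dens{S}a + \dens{T}b + n)\big)$, i.e. $W/n = O(\dens{S}a + \dens{T}b + n)$. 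I would apply Lemma~\ref{lemma:balancing} to route all the needed $(S\text{-}\mathrm{entry}, \text{destination})$ and $(T\text{-}\mathrm{entry}, \text{destination})$ pairs; this delivers, in $O(W/n^2 + 1) = O(\dens{S}a/n + \dens{T}b/n + 1)$ rounds, to each node $v$ all the entries of $S[V_{\sigma(v)}^S, V_{\sigma(v)}^P]$ and $T[V_{\sigma(v)}^P, V_{\sigma(v)}^T]$. Strictly, Lemma~\ref{lemma:balancing} balances so that each node ends with $O(W/n + n)$ total weight rather than exactly the data it asked for; so I would instead invoke its internal mechanism — the globally consistent numbering of weighted entries obtained via Lenzen's sort — which lets every node compute the global location of every entry, and then perform a direct routing step (Lenzen's routing, $O(1)$ rounds per $n$ messages per node) repeated $O(\dens{S}a/n + \dens{T}b/n + 1)$ times, each time shipping a fresh batch of $O(n)$ entries per node toward their destinations. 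Once node $v$ holds $S^{\sigma(v)} := S[V_{\sigma(v)}^S, V_{\sigma(v)}^P]$ and $T^{\sigma(v)} := T[V_{\sigma(v)}^P, V_{\sigma(v)}^T]$ it computes $P^{\sigma(v)} = S^{\sigma(v)} T^{\sigma(v)}$ locally, with no further communication.

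The main obstacle is handling the non-injectivity of $\sigma$ together with the fact that the naive "every node requests its subcube's data" pattern can create hotspots: a single source node $w$ might need to send entry $S[w,u]$ to many different destinations (all $v$ with $\sigma(v)$ equal to a common index), and conversely a destination node might need to receive $\Theta(\dens{S}a + \dens{T}b + n)$ entries, which exceeds the $O(n)$-per-round capacity when $\dens{S}a + \dens{T}b \gg n$ — this is exactly why the round count is $O(\dens{S}a/n + \dens{T}b/n + 1)$ rather than $O(1)$. The way I would deal with this is to observe that the total message count, summed over all source-destination pairs and all subcube indices, is $O(n(\dens{S}a + \dens{T}b + n))$ as computed above, and that no single node is the destination of more than $O(\dens{S}a + \dens{T}b + n)$ entries (its assigned subcube has that many non-zeros) and no single node is the source of more than $O(\nz(S)/n + \nz(T)/n) \cdot n = O((\dens{S} + \dens{T})n)$ outgoing messages after accounting for duplication across destinations — here I would use that a fixed column of $S$ or row of $T$ participates in at most $a$ resp. $b$ distinct subcubes by the consecutive-index structure of Lemma~\ref{lemma:cube-partitioning}, bounding out-degree by $O(\dens{S}a + \dens{T}b + n)$ as well. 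With both in- and out-degrees bounded by $O(\dens{S}a + \dens{T}b + n)$, splitting the routing into $O(\dens{S}a/n + \dens{T}b/n + 1)$ batches of $O(n)$ messages per node and applying Lenzen's routing to each batch gives the claimed round complexity.
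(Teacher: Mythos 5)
Your decomposition of the problem and the in-degree bound are both correct, and you have correctly identified that non-injectivity of $\sigma$ is exactly where the difficulty lies. But the out-degree argument does not hold up, and this is the heart of the lemma.

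Your first idea — applying Lemma~\ref{lemma:balancing} to the \emph{$(S$-entry, destination$)$ pairs} — does not fit the balancing task: a source node can start with far more than $n$ such pairs (each of its $n$ entries may need to go to up to $n$ destinations), while Lemma~\ref{lemma:balancing} assumes each node initially holds exactly $n$ weighted items. Moreover, even if such a balancing were possible, it would only equalize the load across nodes; it would not place each pair at its intended recipient, so it cannot by itself complete the delivery. Your fallback — direct batched Lenzen routing with an out-degree bound — relies on the claim that a column of $S$ held by a source node meets at most $a$ subcubes. That claim is really a per-entry statement (for a fixed entry $S[i,w]$, the row index $i$ fixes $\alpha$, and then $w$ lies in one $C^{\alpha\beta}_k$ for each $\beta \in [a]$), not a per-column one, so a source node holding a dense column may in fact touch $\Theta(an)$ (entry, subcube) pairs. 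More fundamentally, what matters for out-degree is not the number of distinct subcubes an entry touches, but the number of nodes $v$ with $\sigma(v)$ equal to one of those subcube indices — which you cannot bound at all when $\sigma$ is arbitrary (e.g.\ $\sigma \equiv 1$). Your intermediate bound $O((\dens{S}+\dens{T})n)$ is just the global non-zero count and does not constrain any single node.

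The paper closes exactly this gap by applying Lemma~\ref{lemma:balancing} to the \emph{matrix entries themselves}, not the pairs, with the weight of entry $S[i,j]$ defined as $w_{ij} = \size{\{ v : (i,j) \in V^S_{\sigma(v)} \times V^P_{\sigma(v)} \}}$ — the number of destinations that entry has \emph{under $\sigma$}. Each node starts with its $n$ entries (so the premise of Lemma~\ref{lemma:balancing} is satisfied), $w_{ij} \le n$, and $\sum_{i,j} w_{ij} = \sum_v \nz(S[V^S_{\sigma(v)}, V^P_{\sigma(v)}]) = O(\dens{S}na + n^2)$. After balancing, every node holds $n$ entries with total weight $O(\dens{S}a + n)$, i.e.\ a total of $O(\dens{S}a + n)$ outgoing (entry, destination) messages, matching the in-degree bound you already have; a single application of Lenzen routing in $O(\dens{S}a/n + 1)$ rounds then finishes the $S$ side, and the $T$ side is symmetric. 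The $\sigma$-aware weight is the ingredient you are missing: it converts the hotspot you noticed into a balanced sending workload.
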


\begin{proof}{}
For each $i, j \in V$, define $w_{ij}$ as
\[
w_{ij} =
\begin{cases}
|\{ v \in V \colon (i,j) \in V_{\sigma(v)}^S \times V_{\sigma(v)}^P \}|, & \text{ if $S[i,j]$ is non-zero, and}\\
0, & \text{ otherwise,}
\end{cases}
\]
that is, $w_{ij}$ is the number of times $S[i,j]$ appears in matrices $S[V_{\sigma(v)}^S, V_{\sigma(v)}^P]$, or $0$ if $S[i,j]$ is a zero entry. Clearly we have $w_{ij} \le n$, and since the partition satisfies the conditions of Lemma~\ref{lemma:cube-partitioning}, we have that
\[ \sum_{i,j} w_{ij} = \sum_{v \in V}\nz\bigl(S[V_{\sigma(v)}^S, V_{\sigma(v)}^P]\bigr) = O\bigl(\dens{S} na + n^2 \bigr)\,.\]
All nodes can compute the values $w_{ij}$ locally for their own row, since they depend only on the partition and function $\sigma$.

We now distribute the entries of input matrix $S$ so that each node learns the matrix $S[V_{\sigma(v)}^S, V_{\sigma(v)}^P]$:
\begin{enumerate}
    \item Using Lemma~\ref{lemma:balancing}, we balance the input entries so that each node holds $n$ entries with total weight $O(\dens{S} a + n)$. Specifically, for each entry a node receives, it receives the value $S[i,j]$ along with the index $(i,j)$.
    \item Since the nodes know the partition and function $\sigma$, each node computes to which nodes it needs to send each of the entries it received in the first step. Since each entry is duplicated $w_{ij}$ times, each node needs to send $O(\dens{S} a + n)$ messages, and dually, each node needs to receive a submatrix of $S$ with $O(\dens{S} a + n)$ entries. These messages can be delivered in $O(\dens{S}a/n + 1)$ rounds.
\end{enumerate}
By identical argument, each node $v$ can learn the matrix $T[V_{\sigma(v)}^P, V_{\sigma(v)}^T]$ in $O(\dens{T} b/n + 1)$ rounds and compute $P^{\sigma(v)}$ locally.
\end{proof}

\subsubsection{Balanced intermediate products}\label{section:mm-products2}

We say that an \emph{intermediate value} in the matrix multiplication is a value
\[p_{vWu} = S[v,W] T[W,u] = \sum_{w \in W} S[v,w]T[w,u]\,.\]
That is, an intermediate value is a partial sum of products for a single position of the output matrix. For concreteness, we encode these in the \clique implementation as tuples $(p_{vWu}, v, u)$.

\begin{lemma}\label{lemma:product-balancing}
There is a \clique algorithm running in $O(\dens{S} a/n + \dens{T} b/n + 1)$ rounds such that after the completion of the algorithm,
\begin{enumerate}
    \item each node holds $O(\densh{ST} n/ab) = O(\densh{ST} c)$ non-zero intermediate values, and
    \item each non-zero elementary product $S[v,w]T[w,u]$ in the matrix multiplication is included in exactly one intermediate value held by some node.
\end{enumerate}
\end{lemma}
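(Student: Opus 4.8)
The plan is to start from the intermediate products $P^v = S^v T^v$ produced by Lemma~\ref{lemma:product-algorithm} (applied with $\sigma = \mathrm{id}$), and then redistribute their entries so that each node ends up with $O(\densh{ST} c)$ non-zero intermediate values, being careful that each non-zero elementary product $S[v,w]T[w,u]$ is counted exactly once. The subtlety, as flagged in the algorithm overview, is that after Step~2 some matrices $P^{v}$ may be very dense — a single subcube $V_v = V_v^S \times V_v^P \times V_v^T$ may contribute up to $|V_v^S|\cdot|V_v^T| = O(n^2/ab)$ non-zero entries, far more than the target $O(\densh{ST} c) = O(\densh{ST} n/ab)$ — so we cannot simply have each node keep its own $P^v$.

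The key quantitative fact I would establish first is a global bound on the total number of non-zero intermediate values across all subtasks. Each subcube $V_v$ owns the column-set $V_v^P \subseteq V$ of the "middle" coordinate and a rectangle $V_v^S \times V_v^T$ of output positions; the non-zero intermediate values it generates are exactly the tuples $(v', u')$ with $v' \in V_v^S$, $u' \in V_v^T$, and $\hat S[v', V_v^P]\hat T[V_v^P, u'] \neq 0$ over the Boolean semiring. Since the subcubes partition $V^3$ along the middle coordinate into the sets $\{V_v^P\}$, and $\hat P = \hat S \hat T$ has at most $\densh{ST}\, n$ non-zero entries, a Boolean-semiring counting argument (each non-zero entry $\hat P[v',u']$ is "witnessed" in at most one subcube for each block of middle indices it falls into, and the middle blocks partition $V$) gives a total of $O(\densh{ST}\, n \cdot \max_v |\{\text{middle blocks}\}| )$... more carefully: grouping by the $b$ outer $S$-slices and $a$ outer $T$-slices, the total count of non-zero intermediate values is $O(\densh{ST} n \cdot \max(1,\cdot))$; in any case I claim it is $O(\densh{ST} n^2/n) = O(\densh{ST} n \cdot ab / ab)$, i.e. the average over the $n = abc$ subtasks is $O(\densh{ST} n /ab) = O(\densh{ST} c)$. (This is the step I expect to be the main obstacle — pinning down precisely why partitioning the middle coordinate into the slices $C^{ij}_k$ does not blow up the count of \emph{non-zero} Boolean products beyond $O(\densh{ST} n)$ per outer-slice-pair, so that the grand total is $O(\densh{ST} n \cdot \min(a,b,\dots))$; the cleanest route is: fix an outer pair $(C_i^S, C_j^T)$, then $\sum_k \nz(\hat S[C_i^S, C_k^{ij}]\hat T[C_k^{ij}, C_j^T]) \le \sum_k |\{(v',u') : v' \in C_i^S, u' \in C_j^T, \exists w \in C_k^{ij}, \hat S[v',w]=\hat T[w,u']=1\}|$, which is at most $\nz(\hat S[C_i^S, V]\hat T[V, C_j^T])$ times the number of middle-blocks a given non-zero position can spread over — but since we only need an \emph{aggregate} bound and $\sum_{i,j}\nz(\hat P[C_i^S, C_j^T]) = \nz(\hat P) \le \densh{ST} n$, I would instead bound the number of middle-blocks touched per position by $c$ in the worst case and absorb the slack into the duplication step below.)

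Given the global bound, the redistribution is routine and follows the balancing machinery already set up. Each node $v$, having computed $P^v$ locally, computes $\nz$ of its non-zero intermediate values; these counts are broadcast. If a subtask has more than $O(\densh{ST} c)$ non-zero intermediate values, we \emph{duplicate} it: assign it to several consecutive nodes and split its output positions $V_v^S \times V_v^T$ evenly among them (each copy re-derives the needed rows of $S^v$ and columns of $T^v$ and recomputes its share — this is exactly calling Lemma~\ref{lemma:product-algorithm} with a suitable non-bijective $\sigma$ that maps the extra nodes onto the heavy subtask, then each such node locally restricts to its assigned output positions). Because the total is $O(\densh{ST} n \cdot \mathrm{poly})$ as above and each individual subtask's share of $S$ and $T$ has $O(\dens{S} a + n)$ and $O(\dens{T} b + n)$ non-zero entries respectively, the extra copies number $O(n)$ in total, the recomputation fits in $O(\dens{S} a/n + \dens{T} b/n + 1)$ rounds by Lemma~\ref{lemma:product-algorithm}, and the final routing of the (now balanced) intermediate-value tuples to a canonical owner per subtask-copy is $O(1)$ rounds by Lenzen's routing. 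Since we split each subtask's output-position rectangle into \emph{disjoint} pieces and the subcubes already partition $V^3$, property~(2) — each non-zero elementary product $S[v,w]T[w,u]$ lies in exactly one intermediate value — is preserved throughout: it held for the partition of Lemma~\ref{lemma:cube-partitioning} and the duplication only refines the partition of output positions, never the partition of the middle coordinate.
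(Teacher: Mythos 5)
Your plan matches the paper's: compute $P^v$ via Lemma~\ref{lemma:product-algorithm} with $\sigma = \mathrm{id}$, broadcast $\nz(P^v)$, duplicate heavy subtasks by a second application of Lemma~\ref{lemma:product-algorithm} with a non-bijective $\sigma$, and have each copy keep a bounded chunk. The counting bound you eventually reach is the right one, and you should state it without hedging: each output position $(v',u')$ lies in \emph{exactly} $c$ of the subcubes (for the unique $(i,j)$ with $v' \in C_i^S$, $u' \in C_j^T$, one subcube per middle block $C^{ij}_k$), and a given position can contribute a non-zero intermediate value only if $\hat P[v',u'] \ne 0$; hence $\sum_v \nz(P^v) \le c \cdot \nz(\hat P) \le \densh{ST} n c$. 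Dividing over $n$ nodes gives exactly $\densh{ST} c$ per node on average, which is the target — there is no ``slack to absorb.'' Your derivation detours through inconsistent intermediate expressions (e.g.\ $O(\densh{ST} n^2 / n)$, which equals $O(\densh{ST} n)$, not the needed $O(\densh{ST} n c)$) before landing on the correct per-position bound of $c$, but the final argument is sound.

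There is, however, a genuine gap in the redistribution step. You propose to split the output rectangle $V_v^S \times V_v^T$ of a heavy subtask ``evenly'' among the duplicating nodes and have each restrict to its assigned output positions. Splitting by position does not balance the non-zero intermediate values: they may cluster arbitrarily within the rectangle, so one equal-area piece could contain all $\nz(P^v)$ non-zeros and the others none, violating property~(1). The fix is simple and is what the paper does: after the second application of Lemma~\ref{lemma:product-algorithm}, each duplicating node holds the \emph{entire} $P^v$ (Lemma~\ref{lemma:product-algorithm} does not let a node compute only a restriction), so it can order the non-zero entries of $P^v$ canonically, split \emph{those} into $\lceil \nz(P^v) / (\densh{ST} c) \rceil$ chunks, and keep the chunk corresponding to its index among the nodes responsible for $P^v$. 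This is a purely local step with no extra rounds; the additional Lenzen routing you append is unnecessary, since the lemma does not ask that the intermediate values be delivered to any canonical owner.
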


\begin{proof}{} As the first part of the algorithm, we compute all the intermediate product matrices and learn their densities:
\begin{enumerate}
    \item Compute a partition of $V^3$ using Lemma~\ref{lemma:cube-partitioning}.
    \item Applying Lemma~\ref{lemma:product-algorithm} with $\sigma_1(v) = v$, each node computes the matrix
    \[ P^{v} = S[V_{v}^S, V_{v}^P]T[V_{v}^P, V_{v}^T]\,.\]
    This takes $O(\dens{S} a/n + \dens{T} b/n + 1)$ rounds.
    \item Each node $v$ broadcasts the number of non-zero entries in $P^v$ to all other nodes.
\end{enumerate}
Next, we want to balance the dense intermediate product matrices between multiple nodes. We cannot do this directly, but we can instead duplicate the products:
\begin{enumerate}
    \item Construct a function $\sigma_2$ so that for each $v$ with $\nz(P^v) \ge \densh{ST} c$, there are at least $\lfloor \frac{\nz(P^v)}{\densh{ST} c}\rfloor$ values $u \in V$ satisfying $\sigma_2(u) = v$. To see that this is possible, we observe that by the definition of $\densh{ST}$, there are at most $\densh{ST} n$ positions where matrices $P^v$ can have non-zero entries and each such position is duplicated $c$ times in the partition of the cube $V^3$, implying that $\sum_{v \in V} \nz(P^v) \le \densh{ST} nc$. Thus, we have
    \[ \sum_{v \in V} \lfloor \frac{\nz(P^v)}{\densh{ST} c}\rfloor \le \sum_{v \in V} \frac{\nz(P^v)}{\densh{ST} c} = \frac{1}{\densh{ST} c} \sum_{v \in V} \nz(P^v) \le \frac{\densh{ST} n c}{\densh{ST} c} = n\,.\]
    This step can be done locally using information obtained in the first part of the algorithm.
    \item Apply Lemma~\ref{lemma:product-algorithm} with $\sigma_2$. This takes $O(\dens{S} a/n + \dens{T} b/n + 1)$ rounds.
    \item For each $u$, each node $v$ with $\sigma_1(v) = u$ or $\sigma_2(v) = u$ assumes responsibility for $O(\densh{ST} c)$ entries of the matrix $P^u$ and discards the rest. More specifically, node $v$ determines $i$ such that $v$ is the $i$th node responsible for $P^u$, splits the non-zero entries of $P^u$ into $\lceil\frac{\nz(P^u)}{\densh{ST} c}\rceil$ parts and selects the $i$th part; if both $\sigma_1(v) = u$ and $\sigma_2(v) = u$, then $v$ selects two parts. This step can be done locally based on information obtained earlier.
\end{enumerate}
After the completion of the algorithm, each node has $O(\densh{ST} c)$ intermediate values from at most two matrices $P^v$. The total running time is $O(\dens{S} a/n + \dens{T} b/n + 1)$ rounds.
\end{proof}

\subsubsection{Balanced summation}\label{section:mm-summation}

\begin{lemma}\label{lemma:summation-main}
Assume that the non-zero intermediate values of the matrix multiplication $P = ST$ have been computed as in Lemma~\ref{lemma:product-balancing}. Then there is a \clique algorithm running in $O(\densh{ST}c/n + 1)$ rounds that computes the output matrix $P$.
\end{lemma}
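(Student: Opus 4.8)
The plan is to group the intermediate values by their output coordinate, sum each group, and then deliver the sums to the row owners. Recall that after Lemma~\ref{lemma:product-balancing} each node holds $O(\densh{ST}c)$ non-zero intermediate values, each encoded as a tuple $(p_{vWu},v,u)$, and every non-zero elementary product $S[v,w]T[w,u]$ is contained in exactly one of them. The crucial structural observation is that every output position $(v,u)$ receives at most $c$ intermediate values: in the notation of Lemma~\ref{lemma:cube-partitioning}, if $v\in C^S_i$ and $u\in C^T_j$, then the only subcubes of the partition containing $(v,\cdot,u)$ are the $c$ subcubes $C^S_i\times C^{ij}_k\times C^T_j$ for $k\in[c]$, and each of these contributes at most one intermediate value for position $(v,u)$.

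The steps are as follows. First, sort all the intermediate values by their key $(v,u)$; since there are at most $\densh{ST}nc$ of them, this can be done within the target running time using standard sorting. After sorting, the intermediate values for any single output position form a consecutive block — a \emph{run} — of length at most $c$. Each node broadcasts the key of its last (in sorted order) intermediate value; since the sequence of these last-keys is non-decreasing, every node can then locally compute, for any key $\kappa$, the \emph{leader} $g(\kappa)$, defined as the smallest-indexed node whose last-key is at least $\kappa$ — this is precisely the node holding the first intermediate value of the run with key $\kappa$. Second, every node sends each of its intermediate values with key $\kappa$ to node $g(\kappa)$. A node sends $O(\densh{ST}c)$ messages; and since the runs it leads are consecutive, each has length at most $c$, and their first elements all lie among its $O(\densh{ST}c)$ sorted values, these runs together contain only $O(\densh{ST}c)$ intermediate values, so each node also receives $O(\densh{ST}c)$ messages and this step runs in $O(\densh{ST}c/n+1)$ rounds. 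Third, each node now holds complete runs and sums each of them locally, obtaining at most $O(\densh{ST}c)$ output tuples $(P[v,u],v,u)$. Fourth, each node sends each output tuple $(P[v,u],v,u)$ to node $v$; every node sends $O(\densh{ST}c)$ messages and every node $v$ receives at most $n$ of them (one row of $P$), so this again takes $O(\densh{ST}c/n+1)$ rounds, after which node $v$ knows the row $P[v,V]$.

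I expect the main obstacle to be the load balancing in the summation phase: a priori, routing all intermediate values of a position to one node risks overloading a node that owns many dense runs, and a run may straddle several nodes when the per-node load is small. The structural bound of at most $c$ intermediate values per output position, combined with the fact that a maximal set of consecutive runs whose first elements lie inside one node's block of $O(\densh{ST}c)$ sorted values extends at most $c$ positions beyond that block, is exactly what keeps every node's incoming and outgoing load at $O(\densh{ST}c)$. A secondary technical point is that the total number of intermediate values can exceed $n^2$ in the dense regime, so the initial grouping must rely on the superlinear sorting/routing primitive rather than a single $O(1)$-round sort; since there are at most $\densh{ST}nc$ values this still fits within $O(\densh{ST}c/n+1)$ rounds.
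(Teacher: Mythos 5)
Your proof is correct but takes a genuinely different route from the paper's. The paper avoids ever sorting all intermediate values at once: it splits each node's $O(\densh{ST}c)$ values into batches of $n$, and in each of $O(\densh{ST}c/n + 1)$ iterations it sorts one batch of $n^2$ values with Lenzen's $O(1)$-round primitive, resolves duplicate positions across node boundaries, routes one row's worth to each row-owner, and \emph{accumulates} into a running output matrix. Because addition is commutative and the output is accumulated incrementally, the iterations need not be consistent with one another and no global ordering of all intermediate values is ever computed. Your approach instead performs one global sort of all $O(\densh{ST}nc)$ intermediate values, uses the structural fact that each output position receives at most $c$ intermediate values (which the paper never needs) to bound run lengths, assigns a leader per run via the last-key broadcast, and does a single routing pass to leaders and then to row owners. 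This is arguably cleaner conceptually, and the per-position-at-most-$c$ observation is a nice insight that makes the leader-load argument work (a leader's runs span at most one block plus $c-1$ extra positions, hence $O(\densh{ST}c)$ values).

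The one place you are leaning on something the paper does not supply: you invoke sorting and routing with $O(\densh{ST}c)$ (possibly $\gg n$) messages per node in $O(\densh{ST}c/n+1)$ rounds, calling it ``standard.'' The preliminaries of the paper only state the $n$-messages-per-node versions. The superlinear variants are indeed achievable — e.g.\ by a sample-sort: sort $\lceil m/n\rceil$ batches with Lenzen, broadcast each node's per-batch last key, locally pick every $\lceil m/n\rceil$-th sample as a splitter, and route to splitter ranges, each of which contains $O(m)$ keys — but this requires an argument or citation, and is exactly the complexity the paper's batch-and-accumulate design is engineered to sidestep. If you add a short lemma (or a reference) for superlinear sorting/routing, your proof stands on its own.
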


\begin{proof}
We start by initializing each row of the output matrix to all zero values. Our objective is to accumulate the intermediate values to this initial matrix, with each node $v$ being responsible for the row $v$ of the output matrix.

All nodes split their intermediate values into $O(\densh{ST}c/n)$ sets of at most $n$ intermediate values. We then repeat the algorithm $O(\densh{ST} c / n + 1)$ times, each repetition accumulating one set of $n$ intermediate values from each node:
\begin{enumerate}
    \item Nodes sort the $n^2$ intermediate values being processes globally by matrix position. This takes $O(1)$ rounds using Lenzen's sorting algorithm.
    \item Each node locally sums all intermediate products it received corresponding to the same position.
    \item Each node broadcasts the minimum and maximum matrix position it currently holds. Nodes use this information to deduce if the same matrix position occurs in multiple nodes. If so, all sums corresponding to that position are sent to the smallest id node having that position; each node sends at most one sum and receives at most one sum from each other node, so this step takes $O(1)$ rounds.
    \item If a node received new values from other nodes, these are now added to the appropriate sum.
    \item All nodes now hold sums corresponding to at most $n$ matrix positions. Using Lenzen's routing algorithm, we redistribute these so that node $v$ obtains sums corresponding to positions on row $v$; this takes $O(1)$ rounds. Node $v$ then accumulates these sums to the output matrix.
\end{enumerate}
Clearly, after completion of all repeats, we have obtained the output matrix $P$. Since each repeat of the accumulation algorithm takes $O(1)$ rounds, and there are $O(\densh{ST}c/n + 1)$ repeats, the whole process takes $O(\densh{ST} c / n + 1)$ rounds.
\end{proof}

\subsection{Matrix multiplication with sparsification}\label{section:mm-filtered} 

Our second matrix multiplication result is a variant of sparse matrix multiplication where we control the density of the output matrix.

\paragraph{Problem definition.}  In this section, we assume our semiring $(R,+,\cdot, 0, 1)$ satisfies the following conditions:
\begin{enumerate}
    \item there is a total order $<$ on $R$, and
    \item the addition operation $+$ satisfies $x + y = \min(x,y)$, where $\min$ is taken in terms of order $<$.
\end{enumerate}

Let $P$ be a matrix and let $0 \le \denssymbol \le n$ be an integer. We define the \emph{$\denssymbol$-filtered} version of $P$ as a matrix $\s{P}$ such that each row of $\s{P}$ contains $\denssymbol$ smallest entries of $P$, that is,
\begin{enumerate}
    \item either $\s{P}[v,u] = 0$ or $\s{P}[v,u] = P[v,u]$,
    \item if row $v$ of $P$ has $\sigma$ non-zero entries, then row $v$ for $\s{P}$ has $\min(\sigma,\denssymbol)$ non-zero entries, and
    \item if $\s{P}[v,u] = 0$ and $P[v,u] \ne 0$, then $\max_w \s{P}[v,w] \le P[v,u]$.
\end{enumerate}
When $\denssymbol$ is clear from the context, we call $\s{P}$ a filtered version of $P$. 

Let $S$ and $T$ be the input matrices over the semiring $R$ and denote $P = ST$. In the \emph{filtered matrix multiplication problem}, we are given the input matrices $S$ and $T$ along with an output density parameter $\denssymbol$, and the task is to compute a $\denssymbol$-filtered output matrix $\s{P}$.

\paragraph{Filtered matrix multiplication algorithm.}
We now prove an analogue of Theorem~\ref{theorem:mm} for the filtered matrix multiplication problem. Our result is the following:

\begin{restatable}{theorem}{mmfiltered}\label{theorem:mm-filtered}
Assume we know beforehand a set $R' \subseteq R$ of semiring elements that can appear during the computation of the product $ST$, and $\size{R'} = W$.
Then $\rho$-filtered matrix multiplication can be computed in
\[O\biggl( \frac{( \dens{S} \dens{T} \denssymbol )^{1/3}}{n^{2/3}} + \log W \biggr)\]
rounds in the \clique. In particular, when the underlying semiring is the min-plus semiring and the matrix entries are integers of absolute value at most $O(n^c)$ for a constant $c \ge 1$, the running time is
\[O\biggl( \frac{( \dens{S} \dens{T} \denssymbol )^{1/3}}{n^{2/3}} + \log n \biggr)\]
rounds.
\end{restatable}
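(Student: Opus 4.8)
The plan is to reduce the filtered product to a small number of ordinary (unfiltered) sparse products, each of the form covered by Theorem~\ref{theorem:mm}, together with $O(\log W)$ rounds of bookkeeping to locate the correct threshold per row. The key idea is a binary search over the value set $R'$. Since $\size{R'} = W$ and $R'$ is totally ordered, all nodes can agree on a sorted enumeration $r_1 < r_2 < \dots < r_W$ of $R'$ (this requires only a canonical description of $R'$, which we assume is known; for the min-plus integer case $R'$ is just an integer interval of length $O(n^c)$, so $r_j$ is given by a formula and $W = O(n^c)$, giving $\log W = O(\log n)$). For a threshold value $t$, consider the ``truncated'' input matrices $S_t, T_t$ obtained by deleting from $S$ and $T$ all entries exceeding $t$ (replacing them with the semiring zero). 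Because addition is $\min$ and multiplication is monotone in each argument, the product $S_t T_t$ agrees with $P$ on every entry that is $\le t$, and all its other entries are $> t$ (or zero). Hence the number of entries $\le t$ in row $v$ of $P$ equals $\nz$ of row $v$ of $S_t T_t$ minus the count of entries equal to some value $> t$ that nonetheless survived — more simply, it equals the number of entries of $S_t T_t$ in row $v$ that are $\le t$. So if we knew, for each row $v$, the smallest threshold $t_v \in R'$ for which row $v$ of $S_{t_v} T_{t_v}$ has at least $\denssymbol$ entries (equivalently at least $\denssymbol$ entries $\le t_v$), then computing $S_{t_v} T_{t_v}$ and keeping, in each row $v$, the $\denssymbol$ smallest entries would yield exactly the filtered matrix $\s{P}$.

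The algorithm therefore proceeds as follows. We maintain, for each row $v$, an interval $[\ell_v, h_v]$ of indices into the sorted list $R'$, initially $[1, W]$, that is guaranteed to contain $t_v$. We repeat $O(\log W)$ times: let $m_v$ be the midpoint index of $v$'s current interval; we need to compute, simultaneously for all $v$, how many entries $\le r_{m_v}$ appear in row $v$ of the product. The subtlety is that the threshold differs per row, so we cannot form a single truncated pair of matrices. Instead we observe that truncating the \emph{left} matrix $S$ by the per-row threshold $r_{m_v}$ and the \emph{right} matrix $T$ by the global threshold $r_{\max_v m_v}$ is enough to see every product entry $S[v,w]T[w,u]$ with value $\le r_{m_v}$: indeed any such entry has $S[v,w] \le r_{m_v}$ and $T[w,u] \le r_{m_v} \le r_{\max_v m_v}$. (For this we use that the semiring elements are nonnegative-like with respect to the order, which holds in the min-plus case with nonnegative weights; in general one truncates $T$ by $r_{m_v}$ as well via a second pass, costing only a constant factor.) We then run the sparse product of Theorem~\ref{theorem:mm} on the row-truncated $S$ and column/row-truncated $T$; the density of the left matrix is at most $\dens{S}$, of the right matrix at most $\dens{T}$, and we cap the output density at $\denssymbol$ (discarding, in each row, all but the $\denssymbol$ smallest entries — which is exactly what we can afford and exactly what we ultimately want), so the running time of one such product is $O\bigl((\dens{S}\dens{T}\denssymbol)^{1/3}/n^{2/3} + 1\bigr)$ rounds. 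From the resulting matrix each node $v$ counts how many of its entries are $\le r_{m_v}$, updates $[\ell_v, h_v]$ accordingly (move the lower end up if the count is $< \denssymbol$, the upper end down otherwise), and we iterate. After $O(\log W)$ iterations each $h_v = \ell_v = t_v$, and the product matrix from the final iteration, with each row already restricted to its $\denssymbol$ smallest entries, is a valid $\denssymbol$-filtered $\s{P}$ — property~(3) in the definition holds precisely because every discarded nonzero entry of $P$ exceeds $r_{t_v}$, which is an upper bound on all retained entries. Total cost: $O(\log W)$ rounds of product computation, each $O\bigl((\dens{S}\dens{T}\denssymbol)^{1/3}/n^{2/3} + 1\bigr)$ — but note the $+1$ additive term multiplies by $\log W$, so we instead amortize: run the $(\dens{S}\dens{T}\denssymbol)^{1/3}/n^{2/3}$-term products only when that term dominates $1$, and otherwise absorb everything into the $\log W$ rounds of routing and counting. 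This yields the claimed $O\bigl((\dens{S}\dens{T}\denssymbol)^{1/3}/n^{2/3} + \log W\bigr)$.

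The main obstacle is the per-row threshold issue in the binary search: a naive reduction would want a different truncated matrix pair for each of the $n$ distinct thresholds, which is too expensive. The resolution sketched above — truncating each matrix by the \emph{maximum} current threshold over all rows while still correctly reading off per-row counts, and relying on monotonicity of $\min$-addition — is the crux; one must check carefully that entries of the product that are $\le r_{m_v}$ are never lost under this coarser truncation, and that spurious larger entries that appear do not corrupt the count (they don't, since each row filters by its own $r_{m_v}$ afterward). A secondary point needing care is feeding the output-density bound $\denssymbol$ into Theorem~\ref{theorem:mm}: we do not know $\densh{S_t T_t}$ in advance, but since we are going to discard all but $\denssymbol$ entries per row anyway, we can simply \emph{impose} output density $\denssymbol$ on the partitioning and summation steps of the Theorem~\ref{theorem:mm} algorithm — this is exactly the ``sparsify on the fly'' behavior, and it only ever helps the running time. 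Finally, the $\log W \to \log n$ specialization is immediate: for integer entries bounded by $O(n^c)$ over the min-plus semiring, every value that can appear in $ST$ is an integer in $[0, O(n^{c})]$ (a sum of two such entries), so $W = O(n^c)$ and $\log W = O(\log n)$.
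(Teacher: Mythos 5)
Your proposal takes a genuinely different route from the paper, but it has a gap that I don't think can be patched without essentially reinventing the paper's construction. The crux of the filtered matrix multiplication problem is that the unfiltered product $ST$ (or its truncated variants $S_tT_t$) may be far denser than $\denssymbol n$, so one cannot afford to compute the full product and discard afterwards. Your plan relies on running ``Theorem~\ref{theorem:mm} with an imposed output density cap $\denssymbol$,'' and you assert this is ``exactly the sparsify on the fly behavior.'' But Theorem~\ref{theorem:mm} provides no such mechanism: it needs the \emph{true} output density $\densh{ST}$ (or an upper bound on it) so that the balancing and summation steps have enough capacity; under-reporting the density breaks the load-balancing argument rather than speeding things up. The ability to cap the output density on the fly is precisely what Theorem~\ref{theorem:mm-filtered} is being asked to establish, so the argument is circular as written.

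Even setting that aside, your outer loop runs $O(\log W)$ instances of a full sparse product. That gives a \emph{multiplicative} $O(\log W)$ overhead on the $(\dens{S}\dens{T}\denssymbol)^{1/3}/n^{2/3}$ term, not the additive $O(\log W)$ that the theorem claims. The one-sentence amortization remark (``run the products only when that term dominates $1$'') is not a real argument: when $(\dens{S}\dens{T}\denssymbol)^{1/3}/n^{2/3}$ dominates, you still pay it $O(\log W)$ times.

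The paper's proof avoids both problems by moving the binary search \emph{inside} the cube decomposition. It computes the intermediate products $P_k[C_i^S, C^T_j]$ exactly once (this is where all the $(\dens{S}\dens{T}\denssymbol)^{1/3}/n^{2/3}$ cost is incurred, and intermediate products may temporarily be dense); then the $a$ nodes in each group $B_{ik}$, which jointly hold a row of $P_k$, run a distributed binary search over $R'$ for the per-row cutoff. That binary search costs only $O(\log W)$ rounds of plain message exchange, no additional matrix products. The structural observation that makes this correct is that filtering each intermediate matrix to $\s{P_k}$ (keeping $\denssymbol$ smallest per row) leaves at most $O(\denssymbol n c)$ entries in total, which the balancing and summation machinery from the unfiltered algorithm can handle, and that $\s{\sum_k \s{P_k}} = \s{P}$ because addition is $\min$. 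Your proposal never addresses how to keep the communication volume bounded when the truncated products are still dense, and this is where it would fail.
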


The high-level proof idea is largely the same as for Theorem~\ref{theorem:mm}. However, since we now cannot guarantee that the output matrix, and by extension the results of the subtasks, are sparse, we will have to perform part of the filtering before we sum the intermediate results together -- in essence, between Steps~(2) and (3) of the sparse matrix multiplication algorithm.

For this first filtering step, we group the nodes into sets responsible for subsets of intermediate results on the same row of the output matrix, and perform multiple parallel binary searches to identify which intermediate results can be safely filtered out. This allows us to discard all but $O(\denssymbol n c)$ intermediate entries, which gives us the additive $O(\log W)$ overhead in the algorithm.  After filtering, we follow the same strategy as in the sparse matrix multiplication algorithm -- namely, Steps~(3) and (4) -- to sum together the intermediate entries. Each node will then hold a \emph{partially filtered} row of the output matrix $P = ST$. We can then perform a second filtering step locally to obtain $\s{P}$.

\subsubsection{Algorithm description}

\paragraph{Cube partitioning in detail.}

\begin{figure}
\center
\includegraphics[scale=0.7]{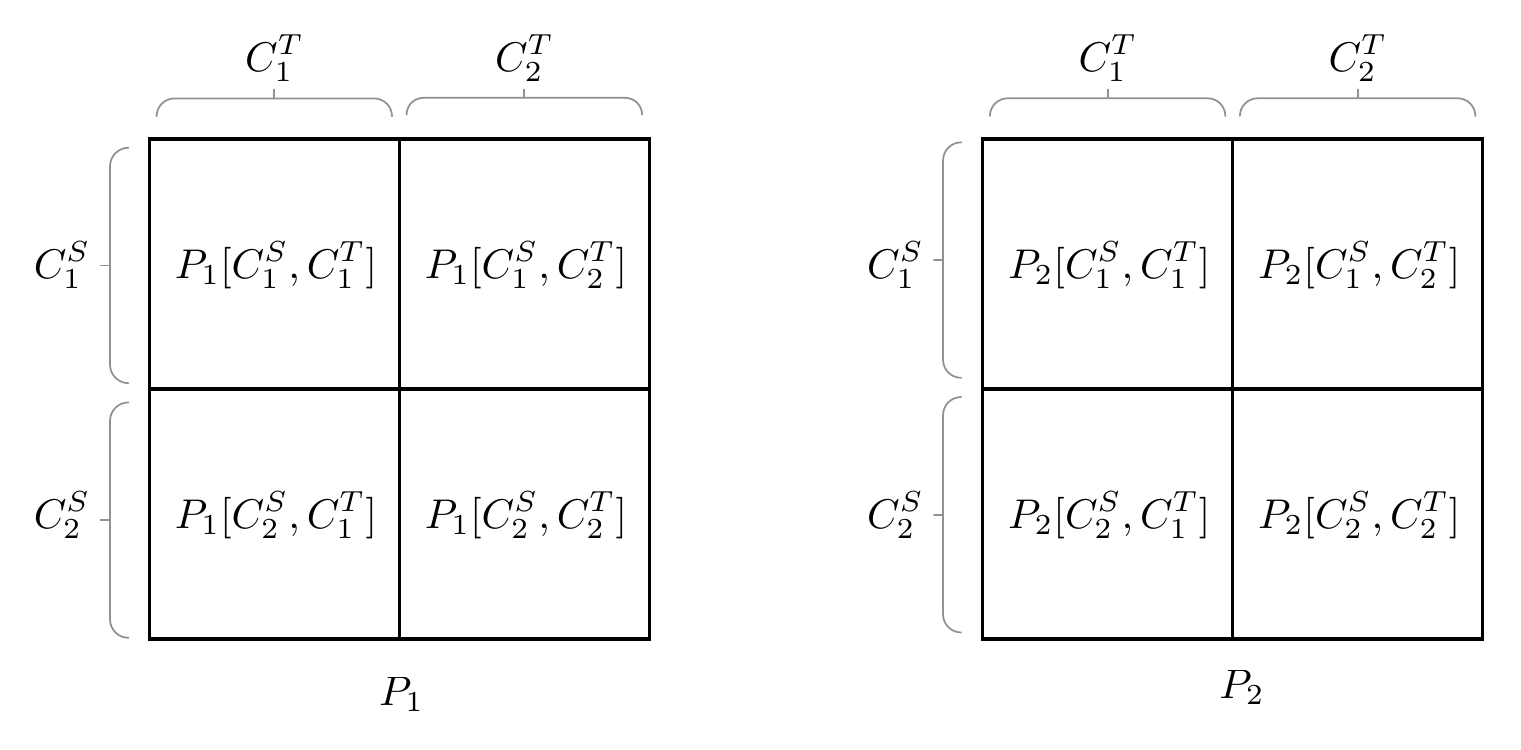}    
\caption{Example of combining the intermediate product matrices to form the matrices $P_k$ for $a = b = c = 2$ and $n = 8$.}\label{fig:mm-2}
\end{figure}

For the filtered version of the matrix multiplication, we need to keep more detailed track of the partitioning given by Lemma~\ref{lemma:cube-partitioning}. Specifically, Lemma~\ref{lemma:cube-partitioning} gives us a partitioning of the cube $V^3$ into $n$ subcubes of form
\[ C_i^S \times C^{ij}_k \times C^T_j\,,\qquad (i,j,k) \in [b] \times [a] \times [c] \,,\]
which defines the matrix multiplication sub-tasks $P^v = S^v T^v$. In this section, we write the products $S^v T^v$ instead as $S[C_i^S, C^{ij}_k]T[C^{ij}_k, C^{T}_j]$.

\paragraph{Filtering.}

To precisely formulate our filtering steps, we define $n \times n$ matrices $P_k$ for $k \in [c]$ by setting
\[ P_k[C_i^S, C^{T}_j] =  S[C_i^S, C^{ij}_k]T[C^{ij}_k, C^{T}_j]\,.\]
That is, each $P_k$ is obtained by combining the smaller $n/b \times n/a$ matrices $S[C_i^S, C^{ij}_k]T[C^{ij}_k, C^{T}_j]$ into an $n \times n$ square matrix; see Figure~\ref{fig:mm-2}. In particular, the final unfiltered product matrix $P = ST$ is obtained by summing the $P_k$ matrices as $P = \sum_{k = 1}^c P_k$.

We can obtain the filtered version $\s{P}$ of the final output matrix $P$ by first applying the filtering to the matrices $P_k$ and then filtering the sum of those matrices again, that is,
\[ Q = \sum_{k = 1}^c \s{P_k}\,, \hspace{20mm} \s{P} = \s{Q}\,.\]
Moreover, by the definition of filtering, each matrix $\s{P_k}$ has density at most $\denssymbol$, so the total number of entries in matrices $\s{P_k}$ is $O(\denssymbol n c)$. The high-level idea of our algorithm is now that after computation of the matrix multiplication subtasks $S[C_i^S, C^{ij}_k]T[C^{ij}_k, C^{T}_j]$, we identify which intermediate values are entries of the matrices $\s{P_k}$ and discard the rest, which allows us to complete the summation step within the desired time budget.

\paragraph{Algorithm overview.} More precisely, the filtered matrix multiplication algorithm proceeds in following steps:

\begin{oframed}
\begin{enumerate}
\item We compute a partition of the matrix multiplication task $P = ST$ into $n$ sparse subtasks
\[ P_k[C_i^S, C^{T}_j] =  S[C_i^S, C^{ij}_k]T[C^{ij}_k, C^{T}_j]\,,\]
where $S[C_i^S, C^{ij}_k]$ and $T[C^{ij}_k, C^{T}_j]$ are submatrices of $S$ and $T$, respectively, and we have
\begin{align*}
  \nz(S[C_i^S, C^{ij}_k]) & = O(\dens{S} a) \qquad \text{and} \\
  \nz(T[C^{ij}_k, C^{T}_j]) & = O(\dens{T} b)\,.
\end{align*}
This step takes $O(1)$ rounds. (Identical to Section~\ref{section:sparse-mm}.)
\item Each node $v$ learns the matrices $S[C_i^S, C^{ij}_k]$ and $T[C^{ij}_k, C^{T}_j]$ for a single $(i,j,k) \in [b] \times [a] \times [c]$, and computes their product $P_k[C_i^S, C^{T}_j]$. This step takes $O\bigl(\dens{S} a/n + \dens{T} b/n + 1\bigr)$ rounds. (Identical to Section~\ref{section:sparse-mm}.)
\item For each $k \in [c]$ and row $\ell$ of $P_k$, the nodes that hold entries from row $\ell$ of $P_k$ perform a distributed binary search to find the $\rho$th largest value on row $\ell$. This allows us to identify which intermediate entries that will appear in matrices $\s{P_k}$ and to discard the rest. This step takes $O(\log W)$ rounds. (Lemma~\ref{lemma:filtering-bs}.)
\item We balance the entries of matrices $\s{P_k}$ so that each node holds $O(\denssymbol n/ab) = O(\denssymbol c)$ values that need to be summed to obtain the matrix $Q = \sum_{k} \s{P_k}$. This is achieved by duplicating those subtasks where the output contains too many entries from matrices $\s{P_k}$. This step takes $O\bigl(\dens{S} a/n + \dens{T} b/n + 1\bigr)$ rounds. (Lemma~\ref{lemma:filtering-balancing}.)
\item The intermediate values obtained in Step~5 are summed together to obtain the matrix~$Q$ so that each node holds a single row of $Q$. This step takes $O\bigl(\denssymbol c/n + 1\bigr)$ rounds. (Identical to Section~\ref{section:sparse-mm}.)
\item Each node discards all but the $\denssymbol$ smallest entries on their row of $Q$ to obtain final output $\s{P}$. This step requires only local computation.
\end{enumerate}
\end{oframed}

\subsubsection{Filtering}

\paragraph{Node grouping for filtering.} We identify in an arbitrary fashion each node $v \in V$ with a triple $(i,j,k) \in [b] \times [a] \times [c]$. This node $v$ is responsible for computing the intermediate product
\[ S[C_i^S, C^{ij}_k]T[C^{ij}_k, C^{T}_j]\,.\]
Furthermore, for $i \in [b]$ and $k \in [c]$, let $B_{ik} \subseteq V$ be the set of nodes corresponding to triples $(i,j,k)$ for $j \in [a]$. In particular, for fixed $i$ and $k$, the nodes in $B_{ik}$ are responsible for computing the products that give the rows $\ell \in C^S_i$ of the matrix $P_k$. Finally, we observe that $\size{B_{ik}} = a$.

\paragraph{Filtering intermediate products.} Recall that we assume that we know a set $R' \subseteq R$ of size $W$ of semiring elements that can appear during the computation of the product $ST$. We now show how the nodes compute the \emph{cutoff value} for each row of each matrix $\s{P_k}$. Formally, to define the cutoff value for row $\ell$ of matrix $P_k$, consider the set
$\bigl\{ (P_k[\ell, i], i) \colon P_k[\ell, i] \ne 0 \bigr\}$
equipped with the natural ordering, that is, $(r,s) < (r',s')$ if $r < r$ or if $r = r'$ and $s < s'$. The cutoff value is the $\denssymbol$th largest element in the set, or the largest if the set has size less than $\denssymbol$.

\begin{lemma}\label{lemma:filtering-bs}
Assume that each node has computed the corresponding product $S[C_i^S, C^{ij}_k]T[C^{ij}_k, C^{T}_j]$. Then there is a \clique algorithm running in $O(\log W)$ rounds such that after the completion of the algorithm, each node $v \in B_{ik}$ knows the cutoff value for row $\ell$ in $P_k$ for all rows $\ell \in C^S_i$.
\end{lemma}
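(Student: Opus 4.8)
The goal is to compute, for each matrix $P_k$ and each row $\ell$, the $\denssymbol$th largest value among the non-zero entries of that row (with the tie-breaking convention using column indices), and to do this for all $c$ matrices and all $n$ rows simultaneously within $O(\log W)$ rounds. The key structural facts to exploit are: (i) each row $\ell \in C^S_i$ of $P_k$ is split across exactly the $a = \size{B_{ik}}$ nodes of $B_{ik}$, since node $(i,j,k)$ holds the columns of $P_k$ indexed by $C^T_j$ and the sets $\{C^T_j\}_{j\in[a]}$ partition $V$; and (ii) the candidate values all lie in the known set $R'$ of size $W$, so a value can be pinpointed by a binary search of depth $\lceil \log W\rceil$ over $R'$ (augmented with a second short binary search over column indices to resolve the tie-breaking, which only adds $O(\log n)$ rounds, subsumed when $W = \Omega(n)$, and in general handled by searching over the $Wn$ pairs $(r,s)$).

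**The core routine: parallel distributed selection.** I would run, in lockstep across all triples $(i,k)$ and all rows $\ell \in C^S_i$, a binary search for the cutoff. Maintain for each row $\ell$ a current search interval $[\mathrm{lo}_\ell, \mathrm{hi}_\ell] \subseteq R'$; in each of the $O(\log W)$ iterations, each row proposes its current midpoint $\mathrm{mid}_\ell$, and we must count how many entries of row $\ell$ of $P_k$ are $\ge \mathrm{mid}_\ell$ (in the ordering $(r,s)$). Each node $v = (i,j,k)$ can locally count, for every row $\ell \in C^S_i$, how many of its own held entries $P_k[\ell, C^T_j]$ meet the threshold. The nodes of $B_{ik}$ then need to aggregate these $a$ partial counts per row. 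The natural way: designate one node of $B_{ik}$ (say the one with $j=1$) as the coordinator for the block; every node in $B_{ik}$ sends its vector of partial counts — one count per row $\ell \in C^S_i$, so $\size{C^S_i} = O(n/b)$ counts, and there are $a$ senders, for a total of $O(an/b)$ numbers arriving at the coordinator. Since $an/b \le n$ is not automatic, I would instead distribute the coordination: spread the $\size{C^S_i}$ rows of block $(i,k)$ evenly among the $a$ nodes of $B_{ik}$, so each node coordinates $O(n/(ab)) = O(c/b)$ rows and receives $a \cdot O(c/b) = O(n/b) \le n$ messages — well within one round of routing via Lenzen's algorithm. Each coordinator sums the counts for its rows, compares with $\denssymbol$, decides whether to move each $\mathrm{lo}$ or $\mathrm{hi}$, and broadcasts the updated midpoints back (again $O(n/b)$ messages per block, routed in $O(1)$ rounds). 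After $O(\log W)$ iterations every interval has collapsed to a single value of $R'$, which is the cutoff; the coordinator of row $\ell$ then sends this cutoff to all of $B_{ik}$ (once more $O(1)$ rounds of routing over $O(n/b)$-sized messages per block), giving every node in $B_{ik}$ the cutoff for every row $\ell \in C^S_i$, as required.

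**Accounting and the main obstacle.** Each iteration costs $O(1)$ rounds — two constant-round routing phases (partial counts up to coordinators, updated midpoints back down) interleaved with local computation — so the whole search is $O(\log W)$ rounds; adding the final cutoff dissemination keeps the total at $O(\log W)$. Resolving the tie-breaking on column indices either folds into the search domain (replace $R'$ by the $\le Wn$ pairs, changing $\log W$ to $\log(Wn) = O(\log W + \log n)$, and for the min-plus integer case both are $O(\log n)$), or is done as a cheap $O(\log n)$-round afterthought once the real-valued threshold is fixed. The only genuinely delicate point is the routing-load bookkeeping: I must verify that when I spread coordination duties, the per-node send and receive loads in every phase stay $O(n)$ so that Lenzen's routing gives $O(1)$ rounds — this hinges on the identity $n/(ab) = c$ and on the fact that, summed over all blocks $(i,k)$ (of which there are $bc$), the total number of count-messages is $\sum_{i,k} O(an/b)$... but crucially these blocks operate on disjoint node sets $B_{ik}$ (the $bc$ sets $B_{ik}$ partition $V$ since $abc = n$ and each has size $a$), so the loads do not superpose and the per-node bound is what matters. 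Establishing this partition and the resulting load bounds cleanly is the step I expect to require the most care; everything else is a standard parallel-binary-search argument.
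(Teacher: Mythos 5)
Your proposal is correct and follows essentially the same strategy as the paper's proof: parallel binary searches over $R'$ coordinated within each $B_{ik}$, with coordination duties spread across the $a$ nodes so that the per-node send/receive load stays $O(n)$ (your intermediate expression $O(c/b)$ should read $O(c)$, since $n/(ab)=c$, but your final bound $O(n/b)$ is right). The one place the paper is cleaner is the tie-breaking: rather than enlarging the search domain or running a second $O(\log n)$ search over column indices, the paper does $O(1)$ post-processing — after the value-threshold $r$ is fixed, participants report how many entries they hold strictly below $r$ and how many equal to $r$, the coordinator deduces which participant holds the cutoff entry and queries it directly for the pair $(r,s)$ — which keeps the total at exactly $O(\log W)$ even when $W < n$.
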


\begin{proof}
Fix $i \in [b]$, $k \in [c]$ and a row $\ell \in C_i^S$. We use binary search to find the smallest value $r$ such that there are at least $\denssymbol$ non-zero entries at most $r$ on row $\ell$ of $P_k$. We fix a node $u \in B_{ik}$ the \emph{coordinator} for row $\ell$ of $P_k$. The binary search now proceeds as follows:
\begin{enumerate}
    \item In the first round of the search, all nodes $v \in B_{ik}$ send the coordinator $u$ the number of non-zero entries they have on row $\ell$. If the total number of non-zero entries is at most $\denssymbol$, we are done. Otherwise, the coordinator sets $r_1 = \min R'$ and $r_2 = \max R'$, and we proceed with the binary search.
    \item In the subsequent rounds, the coordinator selects the value $r' \in R'$ halfway between $r_1$ and $r_2$ and broadcasts it to all nodes $v \in B_{ik}$. The nodes $v \in B_{ik}$ then send the coordinator the number of values at most $r'$ they have on row~$\ell$. If the total number is less than $\denssymbol$, the coordinator sets $r_1 = r'$, and otherwise the coordinator sets $r_2 = r'$. We repeat this step until there are no values in $R'$ between $r_1$ and $r_2$.
\end{enumerate}
The binary search finishes $O(\log W)$ iterations. Clearly $r = r_2$ is the desired value. To obtain the final cutoff value, all nodes $v \in B_{ik}$ tell the coordinator (1) how many values strictly less than $r$ they have, and (2) how many values equal to $r$ they have. The coordinator then determines how many values equal to $r$ should be kept, and determines which node in $B_{ik}$ holds the entry corresponding to the cutoff value. The coordinator then queries that node to obtain the final cutoff value $(r,s)$, and broadcasts it to $B_{ik}$. The post-processing after the binary search takes $O(1)$ rounds.

It remains to show that we can execute all the binary searches in parallel. For each set $B_{ik}$ of $a$ nodes, the nodes have to execute $O(n/b)$ binary searches, and we assign each node $v \in B_{ik}$ as a leader for $O(n/ab)$ binary searches. Thus, for each iteration of the binary search, each node $v \in B_{ik}$ needs to send and receive $O(an/ab) = O(n/b) = O(n)$ messages in the coordinator role. Dually, each node participates in $O(n/b)$ binary searches, so it needs to send and receive $O(n/b) = O(n)$ messages for each iteration of the binary search in the participant role. Thus, one iteration of all binary searches can be completed in $O(1)$ rounds.
\end{proof}

Given the cutoff values computed by Lemma~\ref{lemma:filtering-bs}, all nodes can filter their local products to discard all entries that don't appear in matrices $\s{P_k}$. However, some nodes may still hold too many entries from matrices $\s{P_k}$, so we will balance the entries in a similar way as in Lemma~\ref{lemma:product-balancing}:

\begin{lemma}\label{lemma:filtering-balancing} Assume that the cutoff values of rows of matrices $P_k$ are computed as per Lemma~\ref{lemma:filtering-bs}.
Then there is a \clique algorithm running in $O(\dens{S} a/n + \dens{T} b/n + 1)$ rounds such that after the completion of the algorithm,
\begin{enumerate}
    \item each node holds $O(\denssymbol c)$ non-zero entries from matrices $\s{P_k}$, and
    \item each non-zero entry of matrices $\s{P_k}$ held by exactly one node.
\end{enumerate}
\end{lemma}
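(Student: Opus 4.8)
The plan is to mirror the duplication argument of Lemma~\ref{lemma:product-balancing}, with one genuinely new ingredient: since the objects we must balance are the entries of the filtered matrices $\s{P_k}$ rather than of the raw products $P_k$, a node that recomputes a subtask must also be told the relevant row cutoffs before it can decide which of its entries survive.

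First, each node $v$, identified with a triple $(i,j,k) \in [b] \times [a] \times [c]$, applies the cutoff values of Lemma~\ref{lemma:filtering-bs} to its local product $P_k[C_i^S, C^{T}_j]$, discards every entry that does not survive, and lets $m_v$ be the number of surviving entries; this is computable locally, and $m_v$ is the number of entries that $\s{P_k}$ has in the columns $C^{T}_j$ of the rows $C_i^S$. Every node broadcasts $m_v$, which costs $O(1)$ rounds. Since each $\s{P_k}$ is $n \times n$ with at most $\denssymbol$ non-zero entries per row and there are $c$ matrices $P_k$, we get $\sum_v m_v = \sum_{k \in [c]} \nz(\s{P_k}) \le \denssymbol n c$, hence $\sum_v \lfloor m_v/(\denssymbol c)\rfloor \le n$. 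Thus, exactly as in Lemma~\ref{lemma:product-balancing}, all nodes can locally and consistently build a function $\sigma_2 \colon V \to V$ such that every $v$ with $m_v \ge \denssymbol c$ has at least $\lfloor m_v/(\denssymbol c)\rfloor$ preimages under $\sigma_2$.

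Next we recompute the overloaded subtasks: applying Lemma~\ref{lemma:product-algorithm} with $\sigma_2$, each node $u$ learns $S[C_{i}^S, C^{ij}_k]$ and $T[C^{ij}_k, C^{T}_j]$ for $(i,j,k) = \sigma_2(u)$ and recomputes $P_k[C_i^S, C^{T}_j]$ in $O(\dens{S} a/n + \dens{T} b/n + 1)$ rounds. In parallel we must deliver to each such $u$ the cutoff values of the rows $\ell \in C_i^S$ of $P_k$, which $u$ needs to filter its recomputed block. The observation that makes this cheap is threefold: these cutoffs depend only on the pair $(i,k)$; they are already known, by Lemma~\ref{lemma:filtering-bs}, to all $a$ nodes of $B_{ik}$; and since $\sum_{j \in [a]} m_{(i,j,k)} \le \denssymbol\,|C_i^S| = O(\denssymbol n/b)$, they are needed by at most $O(n/(bc)) = O(a)$ nodes $u$. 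Hence the $a$ nodes of $B_{ik}$ can split the $O(n/b)$ cutoff values of band $(i,k)$ into blocks of $O(n/(ab))$ values, with each node forwarding its block to all $O(a)$ recipients; every node then sends and receives $O(n/b) \le n$ messages (using $b \ge 1$), and the routing completes in $O(1)$ rounds.

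Finally, each node $u$ responsible for a block --- either because $u$ itself is identified with the triple $(i,j,k)$ and thus already holds $\s{P_k}[C_i^S, C^{T}_j]$, or because $\sigma_2(u) = (i,j,k)$ and it has just recomputed and filtered that block --- orders the entries of $\s{P_k}[C_i^S, C^{T}_j]$ in a canonical way (say by matrix position), splits them into $\lceil m_{(i,j,k)}/(\denssymbol c)\rceil$ consecutive chunks, and keeps only the chunk whose index equals $u$'s rank among the nodes responsible for the block, the original node taking the first chunk. Because $\lceil m/(\denssymbol c)\rceil \le 1 + \lfloor m/(\denssymbol c)\rfloor$ and a block with value $m$ has one original plus $\lfloor m/(\denssymbol c)\rfloor$ duplicate owners, the chunks cover each block exactly once and disjointly, each chunk has at most $\denssymbol c$ entries, and a node owning at most two blocks ends up with $O(\denssymbol c)$ entries; since the blocks $\{C_i^S \times C^{T}_j\}_{i,j}$ partition the positions of each $P_k$, every entry of every $\s{P_k}$ is held by exactly one node. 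The total running time is $O(\dens{S} a/n + \dens{T} b/n + 1)$ rounds. I expect the main obstacle to be this cutoff-routing step: one must verify that the band-$(i,k)$ cutoffs are needed by only $O(a)$ nodes and can therefore be spread across the $a$ redundant copies held in $B_{ik}$ without exceeding the per-node bandwidth, while also tracking the constants from Lemma~\ref{lemma:cube-partitioning} so that $\sigma_2$ still fits into $n$ preimage slots.
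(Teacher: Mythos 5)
Your proof is correct, but it takes a genuinely different route from the paper's on the one point that is specific to the filtered setting, namely how a node that is re-assigned a product block learns the cutoff values it needs. The paper sidesteps the problem entirely by constraining $\sigma$ so that $\sigma(u) = v$ with $v \in B_{ik}$ forces $u \in B_{ik}$: since Lemma~\ref{lemma:filtering-bs} already makes every node of $B_{ik}$ know the cutoffs for rows $C^S_i$ of $P_k$, a $B_{ik}$-constrained duplication requires no extra communication at all. The price is that feasibility must then be argued \emph{per band} ($\sum_{v \in B_{ik}} w_v \le \denssymbol |C^S_i|$, so at most $n/bc = a$ helpers per band), which is where the $\alpha_i = \lceil |C^S_i|/(n/b)\rceil$ constant in the paper comes from. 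You instead build an unconstrained $\sigma_2$, argue feasibility globally ($\sum_v m_v \le \denssymbol n c$, hence at most $n$ preimage slots in total, with no $\alpha_i$ correction needed), and pay for it with a separate $O(1)$-round cutoff-delivery step, in which the $a$ nodes of $B_{ik}$ share the $O(n/b)$ cutoffs among the $O(a)$ helpers that were steered into that band. Your bandwidth accounting for this routing is sound, and the step you flagged as the ``main obstacle'' indeed checks out, provided $\sigma_2$ is constructed so that no more than $\sum_{v\in B_{ik}} \lfloor m_v/(\denssymbol c)\rfloor$ extra nodes are steered into $B_{ik}$ (e.g.\ by defaulting $\sigma_2(u)=u$ for un-needed helpers); a fully unconstrained greedy construction could in principle over-assign. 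On balance, your version trades the paper's structural constraint on $\sigma$ for a clean global counting argument plus an explicit routing step; the paper's version is more economical, but yours is a legitimate alternative.
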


\begin{proof} For node $v \in V$, let $w_v$ be the number of entries from the matrices $\s{P_k}$ the node $v$ holds after the product $S[C_i^S, C^{ij}_k]T[C^{ij}_k, C^{T}_j]$ is filtered using Lemma~\ref{lemma:filtering-bs}. Let $\alpha_i = \lceil \frac{\size{C^S_i}}{n/b} \rceil$; note that $\alpha_i = O(1)$. We now want to apply Lemma~\ref{lemma:product-algorithm}; to this end, we construct a helper assignment function $\sigma$ so that
\begin{itemize}
    \item for each $v$ with $w_v \ge \denssymbol \alpha_i c$, there are at least $\lfloor \frac{w_v}{\denssymbol \alpha_i c}\rfloor$ values $u \in V$ satisfying $\sigma(u) = v$, and
    \item for each $v \in B_{ik}$, all $u$ such that $\sigma(u) = v$ also satisfy $u \in B_{ik}$.
\end{itemize}
To see that this is possible, consider a set $B_{ik}$; since these nodes together hold $\size{C^S_i}$ rows of the matrix $\s{P_k}$, we have that  $\sum_{v \in B_{ik}} w_v \le \denssymbol \size{C^S_i} \le \denssymbol \alpha_i n/b$. Thus, we need to assign a total of
\[ \sum_{v \in B_{ij}} \lfloor \frac{w_v}{\denssymbol \alpha_i c} \rfloor \le \sum_{v \in B_{ij}} \frac{w_v}{\denssymbol \alpha_i c} =  \frac{1}{\denssymbol \alpha_i c} \sum_{v \in B_{ij}} w_v \le \frac{\denssymbol \alpha_i n/b }{\denssymbol \alpha_i c} = n/bc = a \]
nodes with $\sigma(v) \in B_{ik}$. Since all nodes $v \in V$ can broadcast $w_v$ and the assignments to node sets $B_{ik}$ are known globally, all nodes can construct the function $\sigma$ locally.

The rest of the algorithm now proceeds as follows:
\begin{enumerate}
    \item Apply Lemma~\ref{lemma:product-algorithm} with $\sigma$. This takes $O(\dens{S} a/n + \dens{T} b/n + 1)$ rounds.
    \item For each $(i,j,k) \in [b] \times [a] \times [c]$, each node that either originally computed the product $S[C_i^S, C^{ij}_k]T[C^{ij}_k, C^{T}_j]$ or was assigned it by $\sigma$ computes which entries in the product are non-zero entries of $\s{P_k}$; this is possible since all nodes that were assigned the product $S[C_i^S, C^{ij}_k]T[C^{ij}_k, C^{T}_j]$ are in $B_{ik}$, and thus by Lemma~\ref{lemma:filtering-bs} know the cutoff values for rows $r \in C^S_i$ of matrix $\s{P_k}$. Each of these nodes then assumes responsibility for $O(\denssymbol c)$ entries of $\s{P_k}$ from that product and discards the rest. This step can be done locally based on information obtained earlier.
\end{enumerate}
\end{proof}

%
%


\section{Distance tools}\label{section:dist-tools}

In this section, we use our matrix multiplication algorithms to construct basic distance computation tools that will be used for our final distance computation algorithms. Though we only use the distance tools for undirected graphs, we note that they work also for directed graphs. As noted before, we assume the edge weights are non-negative integers at most $O(n^c)$ for a constant $c$.

\subsection{Distance products}

\paragraph{Augmented min-plus semiring.}

The general algorithmic idea for our distance tools is to apply the matrix multiplication algorithms over the min-plus semiring. However, to ensure that we get consistent results in terms of hop distances, a property we require for our distance \knearest and \sdk distance tools,
we augment the basic min-plus semiring to keep track of the number of hops.

 We define the \emph{augmented min-plus semiring} $R$ to encode paths in distance computations as follows. The elements of $R$ are tuples $(w, t)$, where
\begin{enumerate}
    \item $w$ is either the weight of an edge or a path, or $\infty$, and
    \item $t$ is a non-negative integer or $\infty$, representing the number of hops.
\end{enumerate}
Let $\prec$ be the lexicographical order on tuples $(w,t)$, and define the addition operator $\min$ as the minimum over the total order given by $\prec$.
The multiplication operation $+$ is defined as $(w_1,t_1) + (w_2,t_2) = (w_1 + w_2, t_1 + t_2)$.
It is easy to verify that $\bigl(R, \min, +, (\infty,\infty), (0,0)\bigl)$ is a semiring with idempotent addition, that is, $\min(r,r) = r$ for all $r$. Moreover, the structure satisfies the conditions of Theorem~\ref{theorem:mm-filtered}.

\paragraph{Distance products.} We call the product of $S$ and $T$ over the augmented min-plus semiring the \emph{augmented distance product} of $S$ and $T$, and denote it by $S \star T$. In particular, for a graph $G = (V,E)$, we define the \emph{augmented weight matrix} $W$ by setting
\[ W[u,v] =
\begin{cases}
(0,0) & \text{if $u = v$,}\\
(w(u,v),1) & \text{if there is an edge from $u$ to $v$, and}\\
(\infty, \infty) & \text{otherwise.}
\end{cases}
\]
As with the regular distance product, the $d$th augmented distance product power $W^d$ gives the distances for all pairs of nodes $u,v \in V$ using paths of at most $d$ hops, as well as the associated number of hops.

Finally, we observe that the augmented distance product gives a consistent ordering in terms of distance from $v$, in the following sense:

\begin{lemma}
\label{claim:subShortestPath}
Let $v, u \in V$, and let $P$ be the shortest path of at most $d$ hops from $u$ to $v$. Then for every node $w$ on the path $P$, we have $W^d[v,w] \prec W^d[u,v]$.
\end{lemma}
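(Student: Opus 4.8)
The plan is to argue that the augmented distance product, by virtue of its lexicographic order, records along each shortest path a strictly increasing sequence of semiring values, so that every intermediate node $w$ on the shortest $u$-$v$ path has a strictly smaller recorded value than the endpoint pair $(u,v)$. First I would fix the shortest path $P = (u = x_0, x_1, \dots, x_\ell = v)$ of hop length $\ell \le d$ realizing $W^d[u,v]$, and observe that $W^d[u,v] = (w(P), \ell)$, where $w(P)$ is the total weight of $P$: this is the standard fact that the $d$-th augmented distance product power gives the minimum-weight path of at most $d$ hops, together with its hop count, and the lexicographic tie-break on $t$ means that among minimum-weight paths the one with the fewest hops is selected. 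Then, for a node $w = x_i$ on $P$, the prefix $(x_0, \dots, x_i)$ is a path of $i \le \ell \le d$ hops from $u$ to $w$, so $W^d[v,w] = W^d[w,v] \preceq (w(x_i \cdots x_\ell), \ell - i)$ — wait, more carefully, $W^d[v,w]$ records a shortest path from $v$ to $w$; since the suffix of $P$ from $x_i$ to $v$ has $\ell - i$ hops and weight $w(x_i \cdots x_\ell)$, we get $W^d[v,w] \preceq (w(x_i \cdots x_\ell), \ell - i)$.

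The key step is then the strict inequality. I would split into two cases on the index $i$. If $0 < i$, i.e. $w \ne u$, then the suffix from $x_i$ to $v$ is a proper sub-path of $P$, so its weight $w(x_i \cdots x_\ell)$ is at most $w(P)$, and its hop count $\ell - i$ is strictly less than $\ell$. Hence $(w(x_i\cdots x_\ell), \ell - i) \prec (w(P), \ell) = W^d[u,v]$ in the lexicographic order: either the weight strictly drops, or the weight is equal and the hop count strictly drops. Combining with $W^d[v,w] \preceq (w(x_i \cdots x_\ell), \ell - i)$ gives $W^d[v,w] \prec W^d[u,v]$. The remaining subtlety is the endpoint $w = u$ itself (the statement says "every node $w$ on the path $P$", which literally includes $u$): here $W^d[v,u] = W^d[u,v]$, so the claimed strict inequality $W^d[v,u] \prec W^d[u,v]$ is false. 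I expect the intended reading is that $w$ ranges over the nodes strictly before $v$, or perhaps strictly between $u$ and $v$; under the "strictly before $v$" reading the case $i = 0$ is vacuous or excluded, and under the "strictly between" reading both $i = 0$ and $i = \ell$ are excluded. Either way the argument above handles all the relevant $i$.

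The main obstacle is not any hard computation but rather pinning down exactly which sub-paths are guaranteed to be the ones recorded by $W^d$, and handling the hop-count bookkeeping cleanly: one must be careful that $W^d[v,w]$ is a \emph{minimum} over all $\le d$-hop paths, so it is only $\preceq$ the specific suffix of $P$, not equal to it — but since we only need an upper bound on $W^d[v,w]$ and that upper bound is already $\prec W^d[u,v]$, this direction of the inequality is all we need. A second minor point worth stating explicitly is symmetry: for undirected $G$ the matrix $W$ is symmetric, hence so is each power $W^d$, which justifies writing $W^d[v,w]$ for the distance from $w$ to $v$; for directed graphs one would instead phrase the lemma with $W^d[w,v]$ throughout, and the same prefix/suffix argument goes through with the suffix of $P$ playing its role. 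Once these points are in place the proof is a two-line lexicographic comparison.
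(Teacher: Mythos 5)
Your proof is correct and rests on the same idea as the paper's: compare the suffix of $P$ to $P$ itself lexicographically, observing that the weight can only decrease while the hop count strictly decreases, so $\prec$ holds in either case. The paper proves this one step at a time (neighbor of $u$, then induction along the path) whereas you bound $W^d[v,w]$ directly by the value of the suffix, which is a bit cleaner but not a different argument. You are also right that the literal statement fails for $w=u$; the paper's proof implicitly sidesteps this by starting the induction at the node adjacent to $u$, so your reading that $w$ should exclude the endpoint is the intended one.
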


\begin{proof}
It is sufficient to prove the claim for $w$ that is the last node on $P$ before $u$, and the rest follows by a straightforward induction. Assume towards a contradiction that $w \not\in N_k(v)$. If $d(w,u)>0$, then clearly $d(v,w)<d(v,u)$ because $d(v,u)=d(v,w)+d(w,u)$ since $P$ is a shortest path. If $d(w,u)=0$, then the hop distance from $v$ to $u$ is larger by $1$ compared to the hop distance between $v$ and $w$. In both cases, we have $W^n[v,w] \prec W^n[v,u]$.
\end{proof}

\paragraph{Recovering paths.} As noted in \cite{censor2015algebraic}, it is possible to recover a \emph{routing table} from the distance product algorithms in addition to the distances. Specifically, it is easy to see that since the matrix multiplication algorithms explicitly compute the non-zero products, they can be modified to provide a \emph{witness} for each non-zero entry in a product $P = ST$. That is, for any non-zero $P[u,v]$, we also get a witness $w_{uv} \in V$ such that $P[u,v] = S[u,w_{uv}] + T[w_{uv},v]$. This allows us to obtain, for any distance estimate $d(u,v)$, a node $w \in N(v)$ such that the shortest path from $v$ to $u$ uses the edge $(v,w)$, as discussed in \cite{censor2015algebraic}.

\subsection{$k$-nearest neighbors}

In the \knearest problem, we are given an integer $k$, and the task is to compute for each node $v$ the set of $k$ nearest nodes and distances to those nodes, breaking ties by first by hop distance and then arbitrarily. More formally, we want each node to compute a set $N_{k}(v)$ of $k$ nodes and distances $d(v,u)$ for all $u \in N_k(v)$, such that the values $W^n[v,u]$ for $u \in N_k(v)$ are the $k$ smallest on row $v$ of $W^n$ in terms of the order $\prec$ on augmented min-sum semiring $R$.

Note that it follows immediately from Lemma~\ref{claim:subShortestPath} that all nodes $u \in N_k(v)$ are at most $k$ hops away from $v$, and all nodes on the shortest path from $v$ to $u$ are also in $N_k(v)$.

\begin{theorem}\label{thrm:knearest}
The \knearest problem can be solved in 
\[O \left(\left( \frac{k}{n^{2/3} } + \log n \right) \log  k \right)\]
rounds in \clique.
\end{theorem}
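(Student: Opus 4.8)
The plan is to compute the \knearest sets by repeated squaring over the augmented min-plus semiring, but using the \emph{filtered} matrix multiplication of Theorem~\ref{theorem:mm-filtered} at each step to keep the matrices sparse. Concretely, start with the augmented weight matrix $W$, but immediately restrict attention to a $k$-filtered version: let $W_0 = \s{W}$ where we keep only the $k$ smallest entries per row (if a node has more than $k$ neighbors, we only ever need its $k$ closest). Then iterate $M_{i+1} = \s{M_i \star M_i}$ for $\log k$ rounds, each time filtering the product back down to density $k$. After $\lceil \log k \rceil$ squarings, row $v$ of $M_{\lceil \log k \rceil}$ contains the $k$ smallest entries of $W^n[v,\cdot]$; this is exactly the set $N_k(v)$ with the associated distances.

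The key correctness step is to argue that filtering at each stage does not lose any of the $k$ nearest nodes. This is where Lemma~\ref{claim:subShortestPath} does the work: if $u$ is among the $k$ closest nodes to $v$, then the shortest $\le d$-hop path from $v$ to $u$ passes only through nodes $w$ with $W^d[v,w] \prec W^d[v,u]$, so all those intermediate nodes are also among the $k$ closest to $v$. Hence every prefix of that path survives the filtering, and by induction on $i$ the $k$-filtered matrix $M_i$ after $i$ squarings correctly records the $\min(2^i\text{-hop})$ distances to the $k$ nearest nodes — more precisely, one shows that for every $u$ among the $k$ closest to $v$ whose shortest path has at most $2^i$ hops, $M_i[v,u] = W^n[v,u]$. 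Since all $k$ nearest nodes are within $k$ hops of $v$ (again by Lemma~\ref{claim:subShortestPath}), after $\lceil \log k\rceil$ iterations all of them are captured. One must be slightly careful that ``being among the $k$ closest'' is a global property while the intermediate matrices only know $2^i$-hop distances; the clean way is to prove the invariant ``$M_i[v,u]$ equals $W^{2^i}[v,u]$ whenever $W^{2^i}[v,u]$ is among the $k$ smallest entries of row $v$ of $W^{2^i}$, and $M_i[v,u] = 0$ otherwise,'' and separately observe $W^{2^i}$ and $W^n$ agree on the relevant entries once $2^i \ge k$.

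For the running time, each squaring is a filtered matrix multiplication with $\dens{M_i}, \dens{M_i} \le k$ and output density parameter $\denssymbol = k$, so Theorem~\ref{theorem:mm-filtered} (in the min-plus, polynomially-bounded-weights case) gives $O\bigl((k\cdot k\cdot k)^{1/3}/n^{2/3} + \log n\bigr) = O(k/n^{2/3} + \log n)$ rounds per squaring. We need to check the hypothesis of Theorem~\ref{theorem:mm-filtered}: the semiring elements appearing are pairs $(w,t)$ with $w$ an integer in $[0, O(n^c)]$ (or $\infty$) and $t \in [0,n]$ (or $\infty$), so $W = \mathrm{poly}(n)$ and $\log W = O(\log n)$. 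Multiplying by the $\lceil \log k \rceil$ iterations yields the claimed $O\bigl((k/n^{2/3} + \log n)\log k\bigr)$ bound. The initial filtering of $W$ down to density $k$ costs only local computation after an $O(1)$-round sort, which is absorbed.

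I expect the main obstacle to be the bookkeeping in the correctness invariant: one has to pin down exactly which entries of the intermediate (filtered, $2^i$-hop) matrices are guaranteed correct and align that with the global notion of ``$k$ nearest'', and in particular handle ties consistently (the problem statement breaks ties first by hop count, then arbitrarily, which is precisely why the \emph{augmented} semiring with its lexicographic order $\prec$ is used — so the filtering threshold is well-defined and monotone across squarings). The matrix-multiplication side is essentially a direct invocation of Theorem~\ref{theorem:mm-filtered} and should be routine once the densities are identified.
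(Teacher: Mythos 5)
Your proposal matches the paper's proof: both start from the $k$-filtered augmented weight matrix, iterate $\lceil\log k\rceil$ filtered squarings via Theorem~\ref{theorem:mm-filtered} with $\rho = k$, and use Lemma~\ref{claim:subShortestPath} both to justify that filtering preserves the correct entries (every node on a shortest path to a $k$-nearest node is itself $k$-nearest) and to bound the relevant hop count by $k$. The paper states the key invariant more tersely as $\s{W^{2^{i+1}}} = \s{\s{W^{2^i}} \star \s{W^{2^i}}}$, while you spell out the same invariant and its inductive justification more explicitly, but the argument and the complexity accounting are the same.
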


\begin{proof}
For matrix $M$, let $\s{M}$ denote the matrix obtained by discarding all but the $k$ smallest values on each row of $M$. To solve the \knearest problem, we compute the filtered version $\s{W^k}$ of the $k$th power of the augmented weight matrix $W$ as follows:
\begin{enumerate}
    \item All nodes $v$ discard all but the $k$ smallest values on row $v$ to obtain $\s{W}$.
    \item We now observe that
    \[ \s{W^2} = \s{\s{W} \star \s{W}}, \qquad \s{W^4} = \s{\s{W^2} \star \s{W^2}}, \qquad \dots, \qquad \s{W^k} = \s{\s{W^{\frac{k}{2}}} \star \s{W^{\frac{k}{2}}}}\,.\]
    Thus, by applying Theorem~\ref{theorem:mm-filtered} with $\denssymbol = k$ iteratively $O(\log k)$ times, we can compute the matrix $\s{W^k}$ in $O \left(\left( \frac{k}{n^{2/3} } + \log n \right) \log  k \right)$ rounds.
\end{enumerate}
To see that this allows us to recover $N_k(v)$, we first observe that by Lemma~\ref{claim:subShortestPath}, the hop distance from $v$ to any node in $N_k(v)$ is at most $k$. Moreover, by a simple induction argument using Lemma~\ref{claim:subShortestPath}, we have that for all non-zero entries of $\s{W^k}$, we have $\s{W^k}[u,v] = W^k[u,v]$. Thus, the non-zero entries on row $v$ give us the set $N_k(v)$ and the distances to those nodes.
\end{proof}

\subsection{Source detection}

In the \sdk problem, we are given a set of \emph{sources} $S \subseteq V$ and integers $k$ and $d$, and the task is to compute for each node $v$ the set of $k$ nearest sources within $d$ hops, as well as the distances to those sources using paths of at most $d$ hops.

\begin{theorem}\label{thrm:source-detection}
The \sdk problem can be solved in
\[O \biggl(\biggl( \frac{m^{1/3}k^{2/3}}{n} + \log n \biggr) d \biggl) \qquad {\text{or}} \qquad O \biggl(\biggl( \frac{ m^{1/3} \size{S}^{2/3}}{n} +1 \biggr) d \biggl) \]
rounds in \clique, where $m$ is the number of edges in the input graph.
\end{theorem}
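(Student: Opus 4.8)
The plan is to compute the $\le d$-hop distances to the sources by iterating the augmented distance product $d$ times, starting from an essentially $n\times|S|$ matrix that encodes ``at a source / not at a source'', and exploiting that this matrix only ever has $|S|$ columns — and, in the sharper variant, only $k$ non-zero entries per row. Concretely, let $W$ be the augmented weight matrix and let $D_0$ be the $n\times n$ matrix (padded with $(\infty,\infty)$ columns) with $D_0[v,s]=(0,0)$ when $v=s\in S$ and $(\infty,\infty)$ otherwise, and set $D_i = W\star D_{i-1}$, so $D_i[v,s]=W^i[v,s]$ is the length and hop count of the shortest $\le i$-hop $v$--$s$ path; between iterations one transposes the iterate, which costs $O(1)$ rounds. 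After $d$ products, node $v$ holds row $v$ of $D_d$ and reads off its $k$ nearest sources within $d$ hops locally. For the round count, $\dens{W}=O(m/n+1)$, each $D_i$ has at most $|S|$ non-zero entries per row, and the output $W\star D_{i-1}$ likewise has $\le|S|$ such entries, so applying Theorem~\ref{theorem:mm} with the (valid) output-density bound $\densh{W\star D_{i-1}}\le|S|$ costs $O\bigl((\dens{W}\,|S|^2)^{1/3}/n^{2/3}+1\bigr)=O(m^{1/3}|S|^{2/3}/n+1)$ rounds per product, hence $O\bigl((m^{1/3}|S|^{2/3}/n+1)\,d\bigr)$ in total, which is the second bound.

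For the first bound I would instead filter each iterate down to its $k$ smallest entries per row: writing $\s{M}$ for the $k$-filtered version of $M$, set $\s{D_0}=D_0$ and compute $\s{D_i}=\s{W\star\s{D_{i-1}}}$ via Theorem~\ref{theorem:mm-filtered} (the augmented min-plus semiring satisfies its hypotheses, as already observed). Now $\dens{\s{D_i}}\le k$, and since $\dens{W}=O(m/n+1)$, $k\le n$, and the set $R'$ of semiring values that can occur is of polynomial size (path weights are $O(n^{c+1})$ and hop counts are $\le n$), each filtered product costs $O\bigl((\dens{W}\,k^2)^{1/3}/n^{2/3}+\log|R'|\bigr)=O(m^{1/3}k^{2/3}/n+\log n)$ rounds, giving $O\bigl((m^{1/3}k^{2/3}/n+\log n)\,d\bigr)$ overall. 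The crux — and the step I expect to be the main obstacle — is to show that this on-the-fly filtering is harmless, i.e.\ that $\s{W\star D_i}=\s{W\star\s{D_i}}$ for every $i$, the \sdk counterpart of the doubling identity $\s{W^2}=\s{\s{W}\star\s{W}}$ behind Theorem~\ref{thrm:knearest}. To prove it, fix a row $v$ and a column $s$ among the $k$ smallest entries of $(W\star D_i)[v,\cdot]$, and let $w^\star$ attain the minimum in $(W\star D_i)[v,s]=\min_w W[v,w]+D_i[w,s]$. If $D_i[w^\star,s]$ were not among the $k$ smallest entries on row $w^\star$ of $D_i$, there would be $k$ distinct columns $s'$ with $D_i[w^\star,s']\prec D_i[w^\star,s]$; adding the fixed semiring element $W[v,w^\star]$ preserves the strict lexicographic order, so $(W\star D_i)[v,s']\preceq W[v,w^\star]+D_i[w^\star,s']\prec W[v,w^\star]+D_i[w^\star,s]=(W\star D_i)[v,s]$ for all $k$ of them, contradicting the choice of $s$. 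Hence $\s{D_i}[w^\star,s]=D_i[w^\star,s]$, so $(W\star\s{D_i})[v,s]=(W\star D_i)[v,s]$, and combined with $(W\star\s{D_i})\succeq(W\star D_i)$ entrywise (because $\s{D_i}\succeq D_i$ entrywise) this gives the identity on row $v$. Induction on $i$ then yields $\s{D_d}=\s{W^d[V,S]}$, whose non-zero entries on row $v$ are exactly the $k$ nearest sources of $v$ within $d$ hops together with their $\le d$-hop distances; this is ultimately an instance of the subpath property in Lemma~\ref{claim:subShortestPath}.

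Two small points remain. Theorem~\ref{theorem:mm} is stated assuming $\densh{W\star D_{i-1}}$ is known in advance; I would simply feed it the valid upper bound $|S|$, which at worst inflates the round count to the value above and incurs no $\log n$. And the augmented semiring, rather than plain min-plus, is used throughout so that ties are broken consistently — by hop distance and then by column index — matching the definitions of \knearest and \sdk and the filtering step of Theorem~\ref{theorem:mm-filtered}. Apart from the filtering-consistency lemma above, the remainder is bookkeeping on top of Theorems~\ref{theorem:mm} and~\ref{theorem:mm-filtered}.
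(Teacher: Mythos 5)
Your proposal is correct and essentially matches the paper's proof: both bounds come from iterating the augmented distance product $d$ times on a matrix whose non-$\infty$ columns are indexed by $S$ (so density $\le|S|$), invoking Theorem~\ref{theorem:mm} for the second bound and Theorem~\ref{theorem:mm-filtered} with $\rho=k$ for the first; the only cosmetic difference is that the paper initializes from the ($k$ lightest) edges incident to $S$ rather than from the diagonal indicator $D_0$, saving one trivial product. Your explicit argument that on-the-fly filtering is harmless, $\s{W\star D_i}=\s{W\star\s{D_i}}$, is a welcome spelling-out of what the paper compresses into ``a simple induction using Lemma~\ref{claim:subShortestPath}.''
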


\begin{proof}
To obtain the first running time, we solve the \sdk problem as follows:
\begin{enumerate}
    \item All nodes, including nodes in $S$, select the $k$ lightest edges to nodes in $S$, breaking ties arbitrarily. Let $W_1$ be the augmented weight matrix restricted to these edges; since a total of at most $nk$ edges was selected, we have $\nz(W_1) \le nk$.
    \item We now apply Theorem~\ref{theorem:mm-filtered} iteratively $d$ times to compute the products
    \[ W_2 = \s{W \star W_1}, \qquad W_3 = \s{W \star W_2}, \qquad \dotsc, \qquad  W_d = \s{W \star W_{d-1}}\,. \]
    We have $\dens{W} = m/n$ and $\dens{W_i} = k$, and use $\denssymbol = k$ as the output density, so computing all the products takes $O \left(\left( \frac{m^{1/3}k^{2/3}}{n} + \log n \right) d \right)$ rounds.
\end{enumerate}
By a simple induction using Lemma~\ref{claim:subShortestPath}, we see that the non-zero entries on row $v$ of $W_i$ correspond to the $k$ nearest sources within $i$ hops of $v$.

For the second running time, we instead compute the $d$-hop distances from set $S$ to all other nodes:
\begin{enumerate}
    \item Let $U_1$ be the $n \times \size{S}$ matrix obtained by restricting the augmented weight matrix $W$ to edges with at least one endpoint in $S$. By padding matrix $U_1$ with zero entries, we can view it as a square matrix with density $\size{S}$.
    \item We now apply Theorem~\ref{theorem:mm} iteratively $d$ times to compute the products
    \[ U_2 = W \star U_1, \qquad U_3 = W \star U_2, \qquad \dotsc, \qquad  U_d = W \star U_{d-1}\,,\]
    where the density of $W$ is $m/n$ and the density of all matrices $U_i$ is $\size{S}$, giving a total running time of $O \left(\left( \frac{ m^{1/3} \size{S}^{2/3}}{n} +1 \right) d \right)$ rounds.
\end{enumerate}
The matrix $U_d$ gives the $d$-hop distances between nodes in $S$ and all other nodes, so each node can select the $k$ closest sources.
\end{proof}

\subsection{Distance through node set}

In the \distthrough problem, we assume that each node $v$ has a set $W_v$ and distance estimates $\delta(v,w)$ and $\delta(w,v)$ for all $w \in W_v$. The task is for all nodes $v$ to compute distance estimates
$\min_{w \in W_v \cap W_u } \{ \delta(v,w) + \delta(w,u) \}$
for all other nodes $u \in V$.

\begin{theorem}\label{thrm:distance-through}
The \distthrough problem can be solved
\[O \biggl( \frac{\denssymbol^{2/3}}{n^{1/3}} + 1 \biggr)\]
rounds in \clique, where $\denssymbol = \sum_{v \in V}\size{W_v}/n$.
\end{theorem}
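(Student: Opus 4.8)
The plan is to reduce the \distthrough problem to a single min-plus matrix multiplication and invoke Theorem~\ref{theorem:mm}. Over the min-plus semiring, define two $n \times n$ matrices: let $S[v,w] = \delta(v,w)$ if $w \in W_v$ and $S[v,w] = \infty$ otherwise, and let $T[w,u] = \delta(w,u)$ if $w \in W_u$ and $T[w,u] = \infty$ otherwise. Then
\[ (S \star T)[v,u] = \min_{w \in V} \bigl( S[v,w] + T[w,u] \bigr) = \min_{w \in W_v \cap W_u} \bigl( \delta(v,w) + \delta(w,u) \bigr)\,, \]
which is precisely the estimate each node $v$ must produce for every $u \in V$. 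The local-input convention of Theorem~\ref{theorem:mm} is met: node $v$ can supply its row of $S$ since it knows $\delta(v,w)$ for all $w \in W_v$, and it can supply its column of $T$ since it knows $\delta(w,v)$ for all $w \in W_v$; the portion of the product matrix $P = S \star T$ returned to $v$ is exactly the list of desired estimates. (Transposing $S$ and $T$ if needed to match the precise indexing of Section~\ref{section:mm} is routine bookkeeping.)

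It remains to bound the densities. Since $\nz(S) = \bigl|\{(v,w) : w \in W_v,\ \delta(v,w) \ne \infty\}\bigr| \le \sum_{v \in V} \size{W_v} = \denssymbol n$, we have $\dens{S} = O(\denssymbol + 1)$, and symmetrically $\nz(T) \le \sum_{u \in V}\size{W_u} = \denssymbol n$ gives $\dens{T} = O(\denssymbol + 1)$. For the output we use only the crude a-priori bound $\densh{S \star T} \le n$, which holds because the product has at most $n^2$ nonzero entries and the min-plus addition admits no cancellations; since this bound is fixed and known in advance, no search over estimates of $\densh{S \star T}$ is required. Plugging these into Theorem~\ref{theorem:mm} gives running time
\[ O\biggl( \frac{\bigl( \dens{S}\,\dens{T}\,\densh{S \star T} \bigr)^{1/3}}{n^{2/3}} + 1 \biggr) = O\biggl( \frac{\bigl( (\denssymbol+1)^2\, n \bigr)^{1/3}}{n^{2/3}} + 1 \biggr) = O\biggl( \frac{\denssymbol^{2/3}}{n^{1/3}} + 1 \biggr)\,, \]
as claimed.

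Because the reduction is exact and Theorem~\ref{theorem:mm} delivers the product directly, there is no genuine technical obstacle here; the only points needing a moment of care are verifying that node $v$'s local knowledge (the values $\delta(v,w)$ and $\delta(w,v)$ for $w \in W_v$) supplies exactly its share of $S$ and $T$ in the model's input convention, and observing that the trivial bound $\densh{S \star T} \le n$ — rather than the true, possibly smaller, output density — already suffices to reach the stated complexity, so that the $\log$-factor overhead for unknown output density never enters.
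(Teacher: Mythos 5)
Your proof matches the paper's almost exactly: the paper defines the same two matrices (called $W_1$ and $W_2$ there), computes the same min-plus product, bounds $\dens{W_1} = \dens{W_2} = O(\denssymbol)$, and invokes Theorem~\ref{theorem:mm} with the trivial output-density bound $n$, yielding the same running time. Your additional checks — that the model's input convention (row $v$ of $S$ and column $v$ of $T$ held by node $v$) is met by $v$'s local knowledge, and that the crude $\densh{S\star T}\le n$ bound avoids the $O(\log n)$ overhead for unknown output density — are correct and simply make explicit what the paper leaves implicit.
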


\begin{proof}
Define matrices $W_1$ and $W_2$ as
\[ W_1[v,w] =
\begin{cases}
\delta(v,w)  & \text{if $w \in W_v$, and}\\
\infty & \text{if $w \notin W_v$,}
\end{cases}
\qquad\qquad
 W_2[w,v] =
\begin{cases}
\delta(w,v)  & \text{if $w \in W_v$, and}\\
\infty & \text{if $w \notin W_v$.}
\end{cases}
 \]
The distance product $W_1 \star W_2$ over the standard min-sum semiring clearly gives the desired estimates, and since $\dens{W_1} = \dens{W_2} = \delta$, it can be computed in $O \left( \frac{\denssymbol^{2/3}}{n^{1/3}} + 1 \right)$ rounds by Theorem~\ref{theorem:mm} (or by \cite{DBLP:conf/opodis/Censor-HillelLT18}), using $n$ as the density estimate for the output matrix.
\end{proof}

\section{Hopsets}\label{section:hopsets}


In this section, we describe a construction of hopsets in polylogarithmic time in the \clique model.
Given a graph $G=(V,E)$, a $(\beta,\epsilon)$-hopset $H=(V,E')$ is a graph on the same set of nodes such that the $\beta$-hop distances in $G \cup H$ give $(1+\epsilon)$-approximations for the distances in $G$. Formally, for any pair of nodes $u,v \in V$ it holds that $d_G(u,v) \leq d_{G \cup H}(u,v)$ and $$d_G(u,v) \leq d_{G \cup H}^{\beta}(u,v) \leq (1+\epsilon)d_G(u,v).$$
Usually the goal is to find a sparse hopset with small $\beta$. In our case, we are interested in optimizing both $\beta$ and the running time for constructing the hopset, but not necessarily the hopset size.

Our construction is based on the recent construction of hopsets in the \clique of Elkin and Neiman \cite{DBLP:journals/corr/ElkinN17}. However, the time complexity in \cite{DBLP:journals/corr/ElkinN17} depends on $\beta$ and on the hopset size, and is super-polylogarithmic for any choice of parameters. We show that using our new distance tools we can implement the same construction in polylogarithmic time, regardless of the hopset size, as long as $\beta$ is polylogarithmic. In particular, we focus on a simple variant with $\widetilde{O}(n^{3/2})$ edges which is enough for all our applications.\footnote{Using similar ideas, it is possible to implement also the more general hopset from \cite{DBLP:journals/corr/ElkinN17}.}

\subsection{Construction overview}

We follow the hopset construction of \cite{DBLP:journals/corr/ElkinN17}, which is based on the emulators of Thorup and Zwick \cite{thorup2006spanners}. A similar construction appears also in \cite{huang2019thorup} without a distributed implementation. We focus only on a simple variant of \cite{DBLP:journals/corr/ElkinN17,thorup2006spanners, huang2019thorup}, with slightly different parameters.

Given a graph $G=(V,E)$, let $V=A_0 \supseteq A_1 \supseteq A_2 = \emptyset$, where $A_1$ is a hitting set of size $O(\sqrt{n})$ of all the sets $N_k(v)$ for $k = O(\sqrt{n} \log{n})$. I.e., for any node $v \in V$, there is a node from $A_1$ among the closest $k$ nodes to $v$. We can construct $A_1$ using Lemma \ref{det_hit}.

For a given subset $A \subseteq V$, we denote by $d_G(v,A)$ the distance from $v$ to the closest node in $A$.
For a node $v \in V$, let $p(v) \in A_1$ be a node of distance $d_G(v,A_1)$ from $v$. In general there may be many possible nodes of distance $d_G(v,A_1)$ from $v$. The node $p(v)$ would be determined by $v$, as follows. During our algorithm, each node computes the set $N_k(v)$. Since $A_1$ is a hitting set there is a node from $A_1$ in the set $N_k(v)$ computed, and $v$ defines $p(v)$ to be the closest such node (breaking ties arbitrarily).
For a node $v \in A_0 \setminus A_1$, we define the \emph{bunch}
$$B(v) = \{u \in A_0: d_G(v,u) < d_G(v,A_1) \} \cup p(v),$$
and for a node $v \in A_1$, we define the bunch $B(v) = A_1$.

The hopset is the set of edges $H=\{\{v,u\}: v \in V, u \in B(v)\}$. We would like to set the length of an edge $\{v,u\}$ to be $d_G(v,u)$. However, we cannot necessarily compute these values. During our algorithm we add exactly all the edges in $H$ to the hopset, but their weights are not necessarily $d_G(v,u)$, but rather an approximation for $d_G(v,u)$.

\begin{claim} \label{edges_claim}
The number of edges in $H$ is $O(n^{3/2} \log{n})$.
\end{claim}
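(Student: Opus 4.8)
The plan is to bound the total size of $H = \bigcup_{v \in V} \{\{v,u\} : u \in B(v)\}$ by bounding $|B(v)|$ for each node separately and summing. There are two kinds of nodes to handle: those in $A_1$ and those in $A_0 \setminus A_1$.

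First, the easy case: for $v \in A_1$ we have $B(v) = A_1$, and since $A_1$ is a hitting set of size $O(\sqrt{n})$ produced by Lemma~\ref{det_hit} (applied with $k = O(\sqrt{n}\log n)$, giving size $O(n\log n / k) = O(\sqrt{n})$), each such bunch contributes $O(\sqrt{n})$ edges. Summing over the at most $|A_1| = O(\sqrt{n})$ nodes in $A_1$ gives a total of $O(n)$ edges from this case, which is dominated by the claimed bound.

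Second, the main case: for $v \in A_0 \setminus A_1$, we have $B(v) = \{u \in A_0 : d_G(v,u) < d_G(v,A_1)\} \cup \{p(v)\}$. The key observation is that every node $u$ with $d_G(v,u) < d_G(v,A_1)$ must lie strictly closer to $v$ than the nearest element of $A_1$; since $A_1$ hits the set $N_k(v)$ of the $k = O(\sqrt{n}\log n)$ nearest nodes to $v$, there is an element of $A_1$ within the $k$ nearest nodes, so any node strictly closer than $d_G(v,A_1)$ is among the $k$ nearest nodes to $v$ (using that ties are broken consistently as in the definition of $N_k(v)$, cf. Lemma~\ref{claim:subShortestPath}). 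Hence $|\{u \in A_0 : d_G(v,u) < d_G(v,A_1)\}| \le k = O(\sqrt{n}\log n)$, and adding $p(v)$ gives $|B(v)| = O(\sqrt{n}\log n)$. Summing over all $n$ nodes yields $O(n^{3/2}\log n)$ edges.

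Combining the two cases, $|H| = O(n) + O(n^{3/2}\log n) = O(n^{3/2}\log n)$, as claimed. The only slightly delicate point — and the place I would be most careful — is the tie-breaking argument: I need that a node $u$ with $d_G(v,u) < d_G(v,A_1)$ genuinely lies in $N_k(v)$, which requires that the element of $A_1$ realizing $d_G(v,A_1)$ is itself counted among the $k$ nearest under the chosen ordering ($\prec$, i.e. distance then hop-count then arbitrary), so that strictly-closer nodes are necessarily ranked before it. This follows because $A_1$ is defined to hit $N_k(v)$ exactly under that ordering, so the argument is self-consistent, but it is worth stating explicitly rather than treating $<$ on distances naively.
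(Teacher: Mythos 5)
Your proof is correct and follows essentially the same approach as the paper: split into the two cases $v \in A_1$ (bunch is $A_1$, so $O(\sqrt{n})$ edges per node over $O(\sqrt{n})$ nodes) and $v \in A_0 \setminus A_1$ (nodes strictly closer than $p(v)$ are among the $k = O(\sqrt{n}\log n)$ nearest since $A_1$ hits $N_k(v)$), and sum. Your added remark on tie-breaking via the $\prec$ order is a nice explicit touch that the paper leaves implicit, but it is the same argument.
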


\begin{proof}
For every node in $A_1$ we add only edges to nodes in $A_1$, and the size of $A_1$ is $O(\sqrt{n})$. In addition, for nodes $v \in A_0 \setminus A_1$ we add edges only to nodes closer than $p(v)$. Since $A_1$ is a hitting set, $p(v)$ is among the closest $k$ nodes to $v$, which means that we add at most $k = O(\sqrt{n} \log{n})$ edges for each node, and $O(n^{3/2} \log{n})$ edges in total.
\end{proof}

\subsection{Congested clique implementation}

Our goal is to build $H$ efficiently in the \clique model.
In \cite{DBLP:journals/corr/ElkinN17}, the authors suggest an iterative algorithm for computing $H$, where in iteration $1 \leq \ell \leq \log{n}$ they compute a $2^{\ell}$-bounded hopset, which approximates the distances between all pairs of nodes that have a shortest path between them with at most $2^{\ell}$ edges in $G$. Given a $2^{\ell}$-bounded hopset $H^{\ell}$, they show that it is enough to run Bellman-Ford explorations up to hop-distance $O(\beta)$ in the graph $G \cup H^{\ell}$ to compute a $2^{\ell+1}$-hopset $H^{\ell+1}$. For the stretch analysis to carry on, it is also important to add edges to $H^{\ell+1}$ in a certain order. First, the edges that correspond to bunches of nodes in $A_0 \setminus A_1$ are added, and only then the edges that correspond to bunches of nodes in $A_1$ are added. They show that the total time complexity for implementing all the Bellman-Ford explorations is proportional to the hopset size. In particular, for our hopset of size $\widetilde{O}(n^{3/2})$, the complexity required is at least $\widetilde{O}(\sqrt{n})$ rounds.

We follow the general approach from \cite{DBLP:journals/corr/ElkinN17}, with two changes. First, we show that using our distance tools we can compute directly all the bunches of nodes in $A_0 \setminus A_1$, which allows simplifying slightly the algorithm and analysis. The second and more significant difference is that we use our $(S,d,k)$-algorithm to replace the heavy Bellman-Ford explorations, which results in a polylogarithmic complexity instead of a polynomial one. We next describe the algorithm in detail.

\subsubsection{Algorithm description}

\paragraph{Computing the bunches.}
We start by computing the bunches for all the nodes $v \in A_0 \setminus A_1$. This can be done easily using our algorithm for finding the distances to the $k$-nearest nodes. This follows since $p(v)$ is among the $k$-nearest nodes to $v$, hence all the nodes in the bunch of $v$ are among the $k$-nearest nodes to $v$. Note that all the nodes $v \in A_0 \setminus A_1$ learn the \emph{exact} distances to all the nodes in their bunch.
However, it is not clear how to compute a bunch for a node $v \in A_1$, since $v$ should learn the distances to all the nodes from $A_1$. Such nodes may be at distance $d = \Omega(n)$ from $v$, and our $(S,d,k)$-algorithm depends linearly on the distance $d$.
To overcome this, we follow the approach in \cite{DBLP:journals/corr/ElkinN17}, and show how to implement it efficiently using our distance tools.

\paragraph{Bounded hopsets.}
We say that $H$ is a $(\beta,\epsilon,t)$-hopset if for all $x,y \in G$ it holds that $d_G(x,y) \leq d_{G \cup H}(x,y)$, and for all pairs of nodes $x,y \in G$ that have a shortest path between them with at most $t$ hops, i.e., $d_G(x,y)=d_G^t(x,y)$, it holds that $$d_G(x,y) \leq d_{G \cup H}^{\beta}(x,y) \leq (1+\epsilon)d_G(x,y).$$

Note that the empty set is a $(1,0,1)$-hopset, and thus also a $(\beta,\epsilon,1)$-hopset for any $\beta \geq 1$ and $\epsilon > 0$.
The algorithm builds hopsets iteratively, where in iteration $\ell$ it builds a $(\beta,\epsilon_{\ell},2^{\ell})$-hopset $H^{\ell}$, where $\epsilon_{\ell}=\epsilon \cdot \ell$ for some $0< \epsilon < 1/ \log{n}$. The final hopset $H$ is $H^{\log{n}}$.
Let $H_0 =\{\{u,v\}: u \in A_0 \setminus A_1, v \in B(u)\}$. We already computed the edges $H_0$ and include them as part of the hopset for all the hopsets $H^{\ell}$.

\paragraph{Building $H^{\ell}$.}
We define $H^0 = H_0$. To construct $H^{\ell}$ for $1 \leq \ell \leq \log{n}$ we work as follows. Let $G' = G \cup H^{\ell -1}$.
All the nodes in $A_1$ compute the distances to all the nodes from $A_1$ at hop-distance at most $4\beta$ in the graph $G'$. We implement it using our $(S,d,k')$-algorithm with $S=A_1,d=4\beta,k'=|A_1| = O(\sqrt{n})$. Each node $v \in A_1$, adds to the hopset $H^{\ell}$ all the edges $\{v,u\}$ where $u \in A_1$ at hop-distance at most $4\beta$ from $v$ in $G'$. The weight of the edge is the weight that $v$ learned during the $(S,d,k)$-algorithm. The hopset $H^{\ell}$ includes all these edges, as well as all the edges of $H_0$. Note that if $v$ adds an edge $\{v,u\}$ to the hopset, in the next round it can let $u$ learn about it, so we can assume that both the endpoints know about the edge.

This completes the description of the algorithm.
We next analyze the time complexity and prove the correctness of the algorithm.

\subsubsection{Complexity}

\begin{claim} \label{hopsets_time_claim}
The complexity of the algorithm is $O(\beta \log{n} + \log^2{n})$ rounds.
\end{claim}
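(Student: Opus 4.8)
The plan is to account for the cost of the algorithm phase by phase and observe that each phase is dominated either by a single invocation of one of our distance tools or by the hitting-set construction. First, I would recall the two one-time preprocessing costs: constructing the hitting set $A_1$ of size $O(\sqrt n)$ via Lemma~\ref{det_hit} takes $O((\log\log n)^3)$ rounds, and computing the bunches of all nodes $v \in A_0 \setminus A_1$ reduces to one call of the \knearest algorithm (Theorem~\ref{thrm:knearest}) with $k = O(\sqrt n \log n)$. Plugging this $k$ into the bound $O((k/n^{2/3} + \log n)\log k)$ gives $O((\sqrt n\log n/n^{2/3} + \log n)\log n) = O(\log^2 n)$ rounds, since $\sqrt n\log n / n^{2/3} = \log n / n^{1/6} = O(\log n)$. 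So the preprocessing is $O(\log^2 n)$ rounds.

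Next I would bound the cost of a single iteration $\ell$, which consists of running the \sdk tool with $S = A_1$, $d = 4\beta$, and $k' = |A_1| = O(\sqrt n)$ on the graph $G' = G \cup H^{\ell-1}$. Here the relevant complexity is the \emph{first} running time of Theorem~\ref{thrm:source-detection}, namely $O((m'^{1/3}k'^{2/3}/n + \log n)\,d)$, where $m'$ is the number of edges of $G'$. The key estimate is that $m' \le |E| + |H| = O(n^2)$ in the worst case (and in fact $|H| = O(n^{3/2}\log n)$ by Claim~\ref{edges_claim}, so $m' = O(n^2)$ suffices), whence $m'^{1/3} = O(n^{2/3})$ and $k'^{2/3} = O(n^{1/3})$, so $m'^{1/3}k'^{2/3}/n = O(n^{2/3}\cdot n^{1/3}/n) = O(1)$. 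Therefore one iteration costs $O((1 + \log n)\cdot 4\beta) = O(\beta\log n)$ rounds. One subtlety I would flag: the \sdk tool as stated applies to a fixed input graph, so I would note that it applies verbatim to $G'$ once its edge set is known to all endpoints — which holds here, since after each iteration both endpoints of every hopset edge learn of it and its weight, as described in the algorithm.

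Finally I would multiply: there are $\log n$ iterations, each costing $O(\beta\log n)$ rounds, giving $O(\beta\log^2 n)$ — wait, this overshoots the claimed $O(\beta\log n + \log^2 n)$. The resolution, which I expect to be the one genuinely delicate point, is that the additive $\log n$ term inside the per-iteration bound comes from the $O(\log n)$ overhead in Theorem~\ref{thrm:source-detection} (the filtered matrix multiplication's binary-search / unknown-density overhead), and this should be paid only once rather than once per Bellman-Ford-style step within the $d = 4\beta$ iterations: the $d$ distance-product multiplications inside a single \sdk call each cost $O(m'^{1/3}k'^{2/3}/n + \log n) = O(\log n)$, summing to $O(\beta\log n)$ for that call; but over $\log n$ outer iterations this is $O(\beta\log^2 n)$ unless one is more careful about the $+\log n$ additive terms. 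The cleanest fix is to use the known-density variant: since the output density is exactly $k' = |A_1| = O(\sqrt n)$ and is known in advance, and the relevant semiring elements (integer distances bounded by $O(n^{c+1})$ in $G'$) form a set of size $\mathrm{poly}(n)$, the $\log W$ overhead is $O(\log n)$ but the dominant term $m'^{1/3}k'^{2/3}/n = O(1)$, so each of the $d$ multiplications is $O(\log n)$ and a single \sdk call is $O(\beta\log n)$; summing over $\log n$ iterations yields $O(\beta\log^2 n)$, and combined with preprocessing the total is $O(\beta\log^2 n + \log^2 n)$. I would then reconcile this with the claimed $O(\beta\log n + \log^2 n)$ by observing that within each iteration the $\log n$ additive terms accumulate over the $4\beta$ hop-steps as $O(\beta + \beta\log n)$ only if each step pays $\log n$; the tighter bound in the claim presumably relies on amortizing the filtered-multiplication setup across the $4\beta$ steps of a single \sdk call so that only the leading $O(1)$ terms accumulate linearly in $\beta$ while the $O(\log n)$ setup is paid $O(1)$ times per iteration — giving $O(\beta + \log n)$ per iteration and $O(\beta\log n + \log^2 n)$ overall. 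Pinning down exactly how the $\log n$ terms aggregate inside \sdk is the main obstacle; everything else is routine substitution.
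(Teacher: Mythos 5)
There is a genuine gap, and it is precisely the one you flag yourself: you apply the \emph{first} running time from Theorem~\ref{thrm:source-detection}, $O\bigl((m^{1/3}k^{2/3}/n + \log n)\,d\bigr)$, which is based on the \emph{filtered} matrix multiplication of Theorem~\ref{theorem:mm-filtered} and hence carries an unavoidable additive $O(\log n)$ per hop-step from the binary search. This inevitably gives $O(\beta \log n)$ per iteration and $O(\beta \log^2 n)$ overall, and your attempted fix (``amortizing the filtered-multiplication setup across the $4\beta$ steps'') does not work, because the binary search in Lemma~\ref{lemma:filtering-bs} is re-run for each of the $d$ multiplications inside a single \sdk call.

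The correct resolution — which is what the paper does — is to use the \emph{second} running time from Theorem~\ref{thrm:source-detection}, namely $O\bigl((m^{1/3}|S|^{2/3}/n + 1)\,d\bigr)$. This variant computes the $d$-hop distances from \emph{all} of $S$ to all nodes by $d$ applications of the non-filtered Theorem~\ref{theorem:mm}, so there is no binary search and no $\log$ overhead per multiplication. It is applicable here because in the hopset algorithm $k' = |A_1| = |S|$: we want the distance from every node to every source, so nothing needs to be filtered and the output density is exactly $|S|$, known in advance. With $m = O(n^2)$, $|S| = O(\sqrt n)$, and $d = 4\beta$, this gives $O\bigl((n^{2/3}\cdot n^{1/3}/n + 1)\cdot\beta\bigr) = O(\beta)$ rounds per iteration, hence $O(\beta\log n)$ over the $\log n$ iterations, which together with the $O(\log^2 n)$ preprocessing yields the stated $O(\beta\log n + \log^2 n)$. (Your preprocessing analysis and your observation about $G'$ being globally consistent after each iteration are both correct.)
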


\begin{proof}
From Lemma \ref{det_hit}, we can construct deterministically in $O((\log{\log{n}})^3)$ rounds a hitting set $A_1$ of size $O(\sqrt{n})$ that hits all the sets $N_k(v)$. For the algorithm, each node $v$ should learn the set $N_k(v)$ which is computed using our algorithm for finding the distances to the $k$-nearest nodes. From Theorem~\ref{thrm:knearest}, this takes $O \Big( \Big( \frac{k}{n^{2/3} } + \log n \Big) \log k  \Big) = O \Big( \Big( \frac{\sqrt{n} \cdot \log{n}}{n^{2/3} } + \log{n} \Big) \log{n} \Big)=O(\log^2{n})$ rounds.

The algorithm for constructing $H$ starts by computing the bunches of nodes $v \in A_0 \setminus A_1$, this only requires each node $v$ to learn the distances to the nodes $N_k(v)$, which was already computed.
Next, for $\log{n}$ iterations, we compute for each node the $O(\sqrt{n})$-nearest nodes from $A_1$ at hop-distance at most $4\beta$. Since $A_1$ is of size $O(\sqrt{n})$, by Theorem~\ref{thrm:source-detection}, each iteration takes $O\Big( \Big( \frac{n^{2/3}{(\sqrt{n})}^{2/3}}{n} + 1 \Big) \cdot \beta \Big) = O(\beta)$ rounds. Overall, all the iterations take $O(\beta \log{n})$ rounds.
\end{proof}

\subsubsection{Correctness}

We next prove that $H^{\ell}$ is indeed a $(\beta,\epsilon_{\ell},2^{\ell})$-hopset, the proof follows the proof in \cite{DBLP:journals/corr/ElkinN17}, and is included here for completeness. There are slight changes to adapt it to the specific variant we consider. We start with a simple claim regarding the edges $H_0$.

\begin{claim} \label{basis}
For any $x,y \in V$, either $d_{G \cup H_0}^1(x,y) = d_G(x,y)$ or there exists $z \in A_1$ such that $d_{G \cup H_0}^1(x,z) \leq d_G(x,y).$
\end{claim}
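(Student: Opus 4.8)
The plan is to do a short case analysis on whether the shortest $x$--$y$ path in $G$ "enters" the bunch structure of $x$ or not. Recall $H_0$ consists of all edges $\{u,v\}$ with $u \in A_0 \setminus A_1$ and $v \in B(u)$, carrying exact weights $d_G(u,v)$ (these bunches were computed exactly via the \knearest tool). The key structural fact is that for $u \in A_0 \setminus A_1$, the bunch $B(u)$ contains \emph{every} node $w$ with $d_G(u,w) < d_G(u,A_1)$, together with $p(u) \in A_1$.

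First I would dispose of the trivial case $x \in A_1$: then we may take $z = x$, since $d_{G \cup H_0}^1(x,x) = 0 \le d_G(x,y)$. So assume $x \in A_0 \setminus A_1$. Now split on the location of $y$ relative to the "ball" around $x$ of radius $d_G(x,A_1)$. If $d_G(x,y) < d_G(x,A_1)$, then $y \in B(x)$, so $\{x,y\} \in H_0$ with weight exactly $d_G(x,y)$; hence $d_{G \cup H_0}^1(x,y) = d_G(x,y)$, which is the first alternative. Otherwise $d_G(x,y) \ge d_G(x,A_1)$. In this case, consider $z = p(x) \in A_1$: the edge $\{x, p(x)\}$ lies in $H_0$ with weight $d_G(x,p(x)) = d_G(x,A_1) \le d_G(x,y)$, so $d_{G \cup H_0}^1(x,z) \le d_G(x,y)$, giving the second alternative with this $z \in A_1$.

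That essentially closes the argument, so I do not expect a serious obstacle here — the only point requiring a little care is making sure the weights on the $H_0$ edges really are the exact distances $d_G(x,\cdot)$ (so that the claim's inequalities are about genuine $G$-distances, not approximations), which is guaranteed by the remark in the "Computing the bunches" paragraph that nodes in $A_0 \setminus A_1$ learn exact distances to all bunch members, and by the fact that $p(x)$ is chosen among $N_k(x)$, so $\{x,p(x)\}$ is indeed included in $H_0$. One should also double-check the degenerate case where $d_G(x,A_1)$ is attained at $x$ itself is impossible since $x \notin A_1$, and that ties in the definition of $p(x)$ don't matter because $p(x)$ always satisfies $d_G(x,p(x)) = d_G(x,A_1)$ by construction. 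Thus in every case one of the two stated alternatives holds, completing the proof.
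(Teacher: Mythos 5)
Your proof is correct and follows essentially the same case analysis as the paper's: handle $x \in A_1$ by taking $z = x$, then for $x \in A_0 \setminus A_1$ split on whether $d_G(x,y) < d_G(x,A_1)$ (so $y \in B(x)$ and the edge $\{x,y\}$ is in $H_0$) or not (so $z = p(x)$ works). The extra remarks about exactness of the $H_0$ weights and the tie-breaking in $p(x)$ are fine sanity checks but are already implicit in the construction.
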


\begin{proof}
If $x \in A_1$, the claim clearly holds by setting $z=x$. We next focus on $x \in A_0 \setminus A_1$. For each $x \in A_0 \setminus A_1$, $H_0$ includes the edge $\{x,p(x)\}$, as well as all the edges of the form $\{x,y\}$ where $d_G(x,y) < d_G(x,p(x))$.
Hence, if $d_G(x,y) < d_G(x,p(x))$, there is an edge $\{x,y\}$ in $H_0$ of weight $d_G(x,y)$ and we are done. Otherwise, $d_G(x,p(x)) \leq d_G(x,y)$ and the edge $\{x,p(x)\}$ of weight $d_G(x,p(x))$ is in $H_0$ where $p(x) \in A_1$ which completes the proof.
\end{proof}

The next lemma shows that $H^{\ell}$ is a $(\beta,\epsilon_{\ell},2^{\ell})$-hopset.
\begin{lemma} \label{hopset_lemma}
Let $0< \epsilon < 1/\log{n}$. Set $\delta = \epsilon / 4$, and $\beta=3/\delta$ and let $x,y \in V$ be such that $d_G(x,y)=d_G^{2^{\ell}}(x,y)$. Then, $$d_{G \cup H^{\ell}}^{\beta}(x,y) \leq (1+\epsilon_{\ell})d_G(x,y),$$ where $\epsilon_{\ell}=\epsilon \cdot \ell$.
\end{lemma}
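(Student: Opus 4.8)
The plan is to prove Lemma~\ref{hopset_lemma} by induction on $\ell$, following the standard analysis of the Elkin--Neiman hopset but adapted to the specific variant here. The base case $\ell = 0$ is immediate: for $\ell = 0$ we have $d_G(x,y) = d_G^1(x,y)$, meaning $x$ and $y$ are adjacent (or equal), so the edge is already present in $G$ and $d_{G \cup H^0}^{\beta}(x,y) = d_G(x,y)$ trivially since $\beta \ge 1$. For the inductive step, assume $H^{\ell-1}$ is a $(\beta, \epsilon_{\ell-1}, 2^{\ell-1})$-hopset, and take $x, y$ with $d_G(x,y) = d_G^{2^{\ell}}(x,y)$, realized by some shortest path $Q$ with at most $2^{\ell}$ hops. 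Split $Q$ into two halves $Q_1$ from $x$ to a midpoint $z$ and $Q_2$ from $z$ to $y$, each with at most $2^{\ell-1}$ hops, so by the induction hypothesis each half has a $\beta$-hop $(1+\epsilon_{\ell-1})$-approximate path in $G \cup H^{\ell-1}$.

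The core of the argument is then a case analysis on whether the endpoints are "close" to $A_1$ relative to the path length. I would follow Elkin--Neiman: let $x$ walk along $Q$ from $x$ and stop either when the remaining portion of $Q$ is short enough to cross in few hops in $G \cup H^{\ell-1}$, or when we reach a node $w$ whose distance to $A_1$ is larger than the portion of $Q$ already traversed (so that all of the traversed prefix lies inside the bunch $B(x')$ of some node $x'$, or can be short-cut via $H_0$ using Claim~\ref{basis}). Concretely, using Claim~\ref{basis} applied to appropriate sub-pairs along $Q$, either a prefix of $Q$ is directly shortcut by a single $H_0$ edge to a node of $A_1$, or we stay within bunches. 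Once we reach $A_1$ on both the $x$-side and the $y$-side, say at nodes $z_x, z_y \in A_1$, the $H^{\ell}$ edge added in iteration $\ell$ connects $z_x$ to $z_y$ provided $d_{G \cup H^{\ell-1}}^{4\beta}(z_x, z_y)$ is small enough — and this is where the $4\beta$ hop bound in the $(S,d,k')$-call matters: the path from $z_x$ to $z_y$ through the approximated halves of $Q$ uses at most $4\beta$ hops in $G \cup H^{\ell-1}$, so the direct hopset edge $\{z_x, z_y\}$ exists in $H^{\ell}$ and its weight is at most that path's length.

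The stretch bookkeeping is the delicate part. Traversing from $x$ to $z_x$ costs one hop (the $H_0$ edge) with weight at most the length of the corresponding prefix of $Q$ by Claim~\ref{basis}; similarly from $y$ to $z_y$. The middle hopset edge $\{z_x, z_y\}$ has weight at most $d_{G \cup H^{\ell-1}}^{4\beta}(z_x,z_y)$, which by routing through the two approximated halves of $Q$ and the two $H_0$ shortcut edges is at most $(1+\epsilon_{\ell-1})d_G(x,y) + (\text{additive slack from the prefixes})$; one has to be careful that the prefixes are counted once, not double-counted. Adding up: total hop count is $1 + 1 + 1 = 3 \le \beta$ (using $\beta = 3/\delta \ge 3$), and the weight is at most $d_G(x,y) + \epsilon_{\ell-1}d_G(x,y) + O(\delta)\,d_G(x,y) \le (1 + \epsilon_{\ell-1} + \epsilon)d_G(x,y) = (1+\epsilon_{\ell})d_G(x,y)$, where the $O(\delta) = \epsilon/4 \cdot O(1) \le \epsilon$ slack comes from the cases where the "stop when far from $A_1$" rule forces a small leftover piece of $Q$ of length at most $\delta \cdot d_G(x,y)$ to be crossed hop-by-hop (up to $\beta$ hops) in $G \cup H^{\ell-1}$.

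The main obstacle I anticipate is getting the constants in the hop budget and the stretch to line up exactly — in particular, verifying that the leftover un-shortcut segments of $Q$ genuinely have length bounded by $\delta \cdot d_G(x,y)$ (so that their Bellman-Ford-style traversal fits in $\beta = 3/\delta$ hops and contributes only $\le \delta d_G(x,y)$ extra weight, actually one needs this on both ends so a factor like $2\delta$ or the precise $\delta = \epsilon/4$ choice matters), and that the $4\beta$-hop bound on the $z_x$-to-$z_y$ query is not exceeded once all the sub-paths (two approximated halves, each $\le \beta$ hops, plus a constant number of $H_0$ and leftover edges) are concatenated. This is exactly the computation Elkin--Neiman carry out, so I would port their case analysis carefully, re-deriving each inequality for the slightly different parameters ($A_2 = \emptyset$, two levels only, $k = O(\sqrt n \log n)$) rather than citing it verbatim.
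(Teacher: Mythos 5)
Your high-level plan matches the paper's proof (which is the Elkin--Neiman argument): partition the shortest path $\pi(x,y)$, use Claim~\ref{basis} to jump into $A_1$, exploit the iteration-$\ell$ edge between $A_1$-nodes computed via the $4\beta$-hop source detection, and stitch. The base case is fine. But the inductive step as you describe it has two concrete errors that would make the bookkeeping fail.

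First, your hop budget is wrong. You claim the stitched path has $1+1+1 = 3 \le \beta$ hops ($x\to z_x \to z_y \to y$). This only works if the \emph{first} and \emph{last} ``bad'' segments are exactly at $x$ and $y$. In general, the partition of $\pi(x,y)$ into $J \le 1/\delta$ segments of length $\le \delta\, d_G(x,y)$ each leaves a prefix and a suffix of \emph{good} segments (whose endpoint pairs $(u_{i'},v_{i'})$ fall into the first branch of Claim~\ref{basis}); each such segment is one $H_0$ edge, plus there are the inter-segment $G$-edges, contributing up to $2/\delta$ additional hops. The correct hop count is $2/\delta + 3 \le 3/\delta = \beta$, not $3$. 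Your separate remark that ``leftover pieces are crossed hop-by-hop in up to $\beta$ hops'' is not how the paper handles this and also contradicts the $3$-hop count: those segments are never traversed hop-by-hop; they are traversed via single $H_0$ edges of exact weight.

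Second, you apply the ``split into two halves of $\le 2^{\ell-1}$ hops'' step to the whole path $Q$ at the outset and conclude that each half has a $\beta$-hop approximate path. That is misplaced. The paper applies this observation (that a $(\beta,\epsilon_{\ell-1},2^{\ell-1})$-hopset is automatically a $(2\beta,\epsilon_{\ell-1},2^{\ell})$-hopset) only to the \emph{interior subpath} from $u_i$ to $v_j$ (the first and last bad segment endpoints). This gives a $2\beta$-hop path of weight $\le (1+\epsilon_{\ell-1})d_G(u_i,v_j)$ in $G'=G\cup H^{\ell-1}$; appending the two one-hop $H_0$ edges to $z_i,z_j$ yields a path of $\le 2\beta+2 \le 4\beta$ hops between $z_i$ and $z_j$ in $G'$, which is exactly why the $4\beta$ parameter in the source-detection call is what is needed. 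Applying the halving to the whole path does not directly produce the $z_i$-to-$z_j$ bound, and the two approximated halves you describe are not what the hopset edge weight is bounded against. Once these two points are fixed (the $2/\delta + 3$ hop count and applying the halving to $u_i\to v_j$ rather than $x \to y$), the weight estimate $(1 + \epsilon_{\ell-1} + 4\delta)d_G(x,y) = (1+\epsilon_\ell)d_G(x,y)$ goes through as you anticipate.
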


\begin{proof}
The proof is by induction on $\ell$. For $\ell = 0$, the claim holds since the empty set is a $(1,0,1)$-hopset. Assume it holds for $\ell-1$, and we prove it for $\ell$. Let $x,y \in V$ be such that $d_G(x,y)=d_G^{2^{\ell}}(x,y)$, and let $\pi(x,y)$ be a shortest path between $x$ and $y$ in $G$ with at most $2^{\ell}$ edges. We partition $\pi(x,y)$ into $J \leq 1/\delta$ segments $\{L_j = [u_j,v_j]\}_{1 \leq j \leq J}$, each of length at most $\delta \cdot d_G(x,y)$, and additionally at most $1/\delta$ edges $\{v_j,u_{j+1}\}_{1 \leq j \leq J}$ between these segments. This is done as follows.

Define $u_1 = x$, and walk from $u_1$ on $\pi(x,y)$ towards $y$ until the first node $u_2$ such that $d_G(u_1,u_2) > \delta \cdot d_G(x,y)$, if such exists or until $y$. The last node in $\pi(x,y)$ before $u_2$ is $v_1$, and from the definition of $u_2$ it holds that $d_G(u_1,v_1) \leq \delta \cdot d_G(x,y)$. We continue in the same manner (walk from $u_2$ on $\pi(x,y)$ towards $y$ to define $u_3$, and so on) and define the rest of the nodes $u_j,v_j$. If we walk from $u_j$ towards $y$ and there is no node $u_{j+1}$ where $d_G(u_j,u_{j+1}) > \delta \cdot d_G(x,y)$, we define $v_j=y$, and $J=j$. Since $d_G(u_i,u_{i+1})> \delta \cdot d_G(x,y)$, the number of segments is at most $1/\delta$.

Now, we use Claim \ref{basis} on the pairs $\{u_j,v_j\}_{j=1}^{J}$. One case is that for all $1 \leq j \leq J$, it holds that $d_{G \cup H_0}^1(u_j,v_j) = d_G(u_j,v_j)$, which means that the edges $\{u_j,v_j\}$ exist in $G \cup H_0$ with weights $d_G(u_j,v_j)$. This means that in the graph $G \cup H_0$ there is a path of hop-distance at most $2/\delta$ between $x$ and $y$ that includes all the edges of the form $\{u_j,v_j\}_{j=1}^{J}$ and the edges $\{v_j,u_{j+1}\}$ between segments. The total weight of the path is exactly $d_G(x,y)$. Since $\beta = 3/\delta > 2/\delta$, and $H_0 \subseteq H^{\ell}$, we get that $d_{G \cup H^{\ell}}^{\beta}(x,y) = d_G(x,y) \leq (1+\epsilon_{\ell})d_G(x,y)$.

We now analyse the case where there is at least one pair $\{u_j,v_j\}$ that does not satisfy $d_{G \cup H_0}^1(u_j,v_j) = d_G(u_j,v_j)$. Let $i,j$ be the indexes of the first and last pairs $\{u_i,v_i\},\{u_j,v_j\}$ that do not satisfy the above. From Claim \ref{basis}, there are nodes $z_i,z_j \in A_1$ such that $$d_{G \cup H_0}^1(u_i,z_i) \leq d_G(u_i,v_i) \leq \delta \cdot d_G(x,y),$$ $$d_{G \cup H_0}^1(v_j,z_j) \leq d_G(u_j,v_j) \leq \delta \cdot d_G(x,y).$$ In our algorithm, when constructing $H^{\ell}$, the nodes $z_i,z_j$ add edges to all the nodes of $A_1$ at hop-distance at most $4\beta$ in the graph $G' = G \cup H^{\ell-1}$. We next show that in $G'$ there is a path of hop-distance at most $4\beta$ between $z_i,z_j$ and bound its weight. From the induction hypothesis, $H^{\ell-1}$ is a $(\beta, \epsilon_{\ell-1},2^{\ell-1})$-hopset. Since $u_i,v_j$ are in the shortest path between $x$ and $y$, it holds that $d_G(u_i,v_j)=d_G^{2^{\ell}}(u_i,v_j)$. Since any path of hop-distance at most $2^{\ell}$ can be partitioned to two paths of hop-distance at most $2^{\ell-1}$, it holds that a $(\beta, \epsilon_{\ell-1},2^{\ell-1})$-hopset is also a $(2\beta, \epsilon_{\ell-1},2^{\ell})$ hopset. Hence, there is a path of hop-distance at most $2\beta$ between $u_i$ and $v_j$ in $G \cup H^{\ell-1}$ of weight at most $(1+\epsilon_{\ell-1})d_G(u_i,v_j)$. Since the edges $\{u_i,z_i\},\{v_j,z_j\}$ of weight at most $\delta \cdot d_G(x,y)$ are in $H_0$, we get that there is a path of hop-distance at most $2\beta +2$ between $z_i$ and $z_j$ in $G'$ of total weight at most $(1+\epsilon_{\ell-1})d_G(u_i,v_j) + 2\delta \cdot d_G(x,y)$. Since $2\beta+2 \leq 4\beta$, $z_i$ and $z_j$ add an edge between them to $H^{\ell}$, which gives $$d^1_{G \cup H^{\ell}}(z_i,z_j) \leq (1+\epsilon_{\ell-1})d_G(u_i,v_j) + 2\delta \cdot d_G(x,y).$$

We next bound $d_{G \cup H^{\ell}}^{\beta}(x,y)$. From the definition of $i,j$, for any $i' < i$ or $j < i'$, there is an edge $\{u_{i'},v_{i'}\}$ in $H_0$ (and hence in $H^{\ell}$) of weight $d_G(u_{i'},v_{i'})$. These edges together with the edges in $\pi(x,y)$ before or after these segments sum up to at most $2/\delta$ edges. Between $u_i$ and $v_j$ we have a 3-hop path $\{u_i,z_i,z_j,v_j\}$ of total weight at most $\delta \cdot d_G(x,y) + d^1_{G \cup H^{\ell}}(z_i,z_j) + \delta \cdot d_G(x,y) \leq (1+\epsilon_{\ell-1})d_G(u_i,v_j) + 4\delta \cdot d_G(x,y)$. To conclude, $G \cup H^{\ell}$ has a path of hop-distance at most $2/\delta +3 \leq \beta$ of total weight $$d_G(x,u_i)+(1+\epsilon_{\ell-1})d_G(u_i,v_j) + d_G(v_j,y) + 4\delta \cdot d_G(x,y) \leq (1+\epsilon_{\ell-1} +4\delta)d_G(x,y).$$ By the choice of parameters $4\delta=\epsilon$, which gives $d_{G \cup H^{\ell}}^{\beta}(x,y) \leq (1+\epsilon_{\ell})d_G(x,y)$. This completes the proof.
\end{proof}

\subsubsection{Conclusion}

Using Claims \ref{edges_claim} and \ref{hopsets_time_claim} and Lemma \ref{hopset_lemma}, we obtain the following:

\begin{theorem}
\label{hopset_thm}
Let $0< \epsilon < 1$.
There is a deterministic construction of a $(\beta,\epsilon)$-hopset with $O(n^{3/2}\log{n})$ edges and $\beta=O\bigl((\log{n})/\epsilon\bigr)$ that takes $O\bigl((\log^2{n})/\epsilon \bigr)$ rounds in the \clique model.
\end{theorem}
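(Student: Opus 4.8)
The plan is to assemble the three preceding results: Claim~\ref{edges_claim} for the edge count, Claim~\ref{hopsets_time_claim} for the round complexity, and Lemma~\ref{hopset_lemma} for the stretch. The only genuine decision is how to choose the internal accuracy parameter, since the per-iteration error grows linearly over the $\log n$ iterations and must telescope down to the target $\epsilon$.

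First I would run the construction described above with internal parameter $\epsilon' = \epsilon / \log n$ in place of $\epsilon$; since $0 < \epsilon < 1$ this gives $0 < \epsilon' < 1/\log n$, which is exactly the precondition of Lemma~\ref{hopset_lemma}. With this choice $\delta = \epsilon'/4$ and $\beta = 3/\delta = 12 \log n / \epsilon = O((\log n)/\epsilon)$. I would then invoke Lemma~\ref{hopset_lemma} at the last iteration $\ell = \lceil \log n \rceil$: every shortest path in $G$ uses at most $n - 1 < 2^{\lceil \log n\rceil}$ hops, so $d_G(x,y) = d_G^{2^{\lceil\log n\rceil}}(x,y)$ for all pairs, and the lemma yields $d_{G \cup H}^{\beta}(x,y) \le (1 + \epsilon' \lceil\log n\rceil)\, d_G(x,y) = (1+O(\epsilon))\, d_G(x,y)$ for $H = H^{\lceil\log n\rceil}$; rescaling $\epsilon'$ by a constant makes the factor exactly $(1+\epsilon)$.

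Next I would discharge the other half of the hopset definition, $d_G(x,y) \le d_{G\cup H}(x,y)$, which Lemma~\ref{hopset_lemma} does not state explicitly: a short induction on $\ell$ shows every inserted edge has weight at least the true $G$-distance of its endpoints — edges from bunches of $A_0 \setminus A_1$ carry \emph{exact} distances, and an $A_1$--$A_1$ edge inserted in iteration $\ell$ carries a $d$-hop distance in $G \cup H^{\ell-1}$, which over-estimates the $G$-distance because by induction every edge of $H^{\ell-1}$ already does. Hence no path in $G \cup H$ beats the corresponding $G$-shortest path, and $H$ is a $(\beta,\epsilon)$-hopset.

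Finally, the size bound $O(n^{3/2}\log n)$ is immediate from Claim~\ref{edges_claim} (and independent of $\epsilon$), and Claim~\ref{hopsets_time_claim} gives round complexity $O(\beta \log n + \log^2 n) = O((\log^2 n)/\epsilon + \log^2 n) = O((\log^2 n)/\epsilon)$; determinism follows since the hitting set (Lemma~\ref{det_hit}), the \knearest computation, and the \sdk subroutine are all deterministic. There is no real technical obstacle here beyond keeping the reparametrization $\epsilon \mapsto \epsilon/\log n$ consistent with the $1/\log n$ ceiling in Lemma~\ref{hopset_lemma} while still leaving $\beta$ polylogarithmic, and remembering to verify the lower-bound half of the hopset definition that the stretch lemma leaves implicit.
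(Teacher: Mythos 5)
Your proof is correct and follows essentially the same route as the paper: run the construction with internal accuracy $\epsilon/\log n$ so that the $\log n$ iterations of Lemma~\ref{hopset_lemma} telescope to a final stretch of $(1+\epsilon)$, giving $\beta = O((\log n)/\epsilon)$, then invoke Claims~\ref{edges_claim} and~\ref{hopsets_time_claim} for the size and round bounds. The one thing you add beyond the paper's own argument is the explicit verification that inserted hopset edges never \emph{under}estimate $G$-distances (the $d_G \le d_{G\cup H}$ half of the hopset definition); the paper leaves this implicit, so your short induction is a welcome bit of extra rigor but not a different method.
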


\begin{proof}
By Lemma \ref{hopset_lemma}, $H^{\log{n}}$ is a $(\beta,\epsilon_{\log{n}},n)$-hopset, and hence also a $(\beta, \epsilon_{\log{n}})$-hopset. From the choice of parameters $\epsilon_{\log{n}} = \epsilon \log{n}$, where $0<\epsilon< 1/\log{n}$ and $\beta=O(1/\epsilon)$. Let $\epsilon' = \epsilon_{\log{n}} = \epsilon \log{n}$. Then, $\epsilon = \frac{\epsilon'}{\log{n}}$. Hence, $H^{\log{n}}$ is a $(\beta,\epsilon')$-hopset where $0< \epsilon' < 1$ and $\beta=O(\frac{\log{n}}{\epsilon'})$. 
By Claim \ref{hopsets_time_claim}, the time complexity for constructing the hopset is $O(\beta \log{n} + \log^2{n})$. Since $\beta=O(\frac{\log{n}}{\epsilon'})$, the complexity is $O(\frac{\log^2{n}}{\epsilon'})$ rounds. The number of edges is $O(n^{3/2}\log{n})$ by Claim \ref{edges_claim}.
\end{proof}

\section{Multi-source shortest paths}\label{section:MSSP}

As a direct application of our hopsets and source detection algorithms, we get an efficient $(1+\epsilon)$-approximation for the multi-source shortest paths problem (MSSP) with polylogarithmic complexity as long as the number of sources is $\widetilde{O}(\sqrt{n})$.

\mssp*

\begin{proof}
By Theorem \ref{hopset_thm}, we can construct a $(\beta,\epsilon)$-hopset $H$ with $\beta=O(\frac{\log{n}}{\epsilon})$ in $O(\frac{\log^2{n}}{\epsilon})$ rounds. By the definition of hopsets, the $\beta$-hop distances in $G \cup H$ give a $(1+\epsilon)$-approximation to all the distances in $G$. Hence, to approximate the distances of all the nodes to the set of sources $S$, we run our $(S,d,k)$-algorithm in the graph $G \cup H$ with $d = \beta = O(\frac{\log{n}}{\epsilon})$. By Theorem~\ref{thrm:source-detection}, the complexity is $O \Big( \Big ( \frac{n^{2/3}{|S|}^{2/3}}{n} +1 \Big ) \cdot \beta \Big) =  O \Big( \Big ( \frac{{|S|}^{2/3}}{n^{1/3}} +1 \Big ) \cdot \frac{\log{n}}{\epsilon} \Big)$ rounds. The overall complexity is $$O \bigg( \bigg ( \frac{{|S|}^{2/3}}{n^{1/3}} +1 \bigg ) \cdot \frac{\log{n}}{\epsilon} + \frac{\log^2{n}}{\epsilon} \bigg) = O \bigg( \bigg( \frac{{|S|}^{2/3}}{n^{1/3}} +\log{n} \bigg) \cdot \frac{\log{n}}{\epsilon} \bigg)$$ rounds.
\end{proof}

\section{All-pairs shortest paths} \label{sec:APSP}

In this section, we use our multi-source shortest paths algorithm from $\widetilde{O}(\sqrt{n})$ sources to construct algorithms that provide constant approximations for \emph{all} the distances in the graph in \emph{polylogarithmic} time. However, the approximation guarantee increases to $(2+\epsilon)$ for the unweighted case, and at most $(3+\epsilon)$ for the weighted case.

Next, we present our approximation algorithms for the APSP problem.
For simplicity of presentation, we start by sketching a simple $(3+\epsilon)$-approximation for weighted APSP.
Using a more careful analysis we show that a variant of this algorithm actually obtains a $(2+\epsilon,(1+\epsilon)W)$-approximation. Here, we use the notation $(2+\epsilon,(1+\epsilon)W)$-approximation for an algorithm that for any pair of nodes $u,v$ provides an estimate of the distance between $u$ and $v$ which is at most $(2+\epsilon)d(u,v) + (1+\epsilon)W$ where $W$ is the weight of the heaviest edge in a shortest path between $u$ and $v$. Note that $W$ is always at most $d(u,v)$, and hence a $(2+\epsilon,(1+\epsilon)W)$-approximation is always better than a $(3+\epsilon')$-approximation (where $\epsilon' = 2\epsilon$).
Finally, we extend the algorithm to obtain a $(2+\epsilon)$-approximation for APSP in unweighted graphs.
All our algorithms are deterministic.

\subsection{$(3+\epsilon)$-approximation for weighted APSP}

Our algorithm starts by computing a hitting set $A$ of size $\widetilde{O}(\sqrt{n})$ that hits all the sets $N_k(v)$ for $k=\widetilde{O}(\sqrt{n})$. We compute for each node $v \in V$, the distances to all the nodes in $N_k(v)$, and we use our MSSP algorithm to compute $(1+\epsilon)$-approximate distances from $A$ to all the nodes. Since $k = \widetilde{O}(\sqrt{n})$ and $|A|=\widetilde{O}(\sqrt{n})$ the complexity is polylogarithmic. To compute an estimate $\delta(u,v)$ of the distance $d(u,v)$, we work as follows. If $v \in N_k(u)$, then $u$ already computed the value $d(u,v)$ and sends it to $v$. Otherwise, let $p(u) \in N_k(u) \cap A$ be a closest node to $u$. We estimate $d(u,v)$ with $d(u,p(u))+d(p(u),v).$
Since $v \not \in N_k(u)$, it follows that $d(u,p(u)) \leq d(u,v).$ Hence, using the triangle inequality, we have $$d(u,p(u)) + d(p(u),v) \leq d(u,p(u)) + d(u,p(u)) + d(u,v) \leq 3d(u,v).$$ Since our algorithm computes an $(1+\epsilon)$-approximation for the value $d(v,p(u))$, we get a $(3+\epsilon)$-approximation.

\subsection{$(2+\epsilon,(1+\epsilon)W)$-approximation for weighted APSP} \label{section:wAPSP}

We next describe our $(2+\epsilon,(1+\epsilon)W)$-approximation. In our $(3+\epsilon)$-approximation, we used the fact that if $v \not \in N_k(u)$, then $d(u,p(u)) \leq d(u,v)$. To obtain a better approximation, we would like to get a better bound on $d(u,p(u))$. If, for example, $d(u,p(u)) \leq \frac{1}{2} d(u,v)$, the same analysis shows a $(2+\epsilon)$-approximation. We show that either $\min\{ d(u,p(u)),d(v,p(v)) \} \leq \frac{1}{2}(d(u,v)+W)$ which proves a $(2+\epsilon,(1+\epsilon)W)$-approximation, or the shortest path between $u$ and $v$ has a node $w \in N_k(u) \cap N_k(v)$. In the latter case, $d(u,w)+d(w,v)=d(u,v)$, and since $|N_k(u)|=|N_k(v)|=\widetilde{O}(\sqrt{n})$ we can use matrix multiplication to compute the distances through all nodes $w \in N_k(u) \cap N_k(v),$ and find the actual distance $d(u,v)$.

\subsubsection{Algorithm description}

We next describe the algorithm in detail. We start by describing the general structure of the algorithm, and then explain how we implement it efficiently. In the algorithm, each node $v \in V$, maintains an estimate $\delta(u,v)$ of the distance $d(u,v)$ for all $u \in V$. Each time a node updates the estimate $\delta(u,v)$ it takes the minimum value between the previous value of $\delta(u,v)$ and the current estimate computed. Also, if $v$ updates $\delta(u,v)$ it sends the new estimate also to $u$ to update its estimate accordingly.
We use the notation $w(u,v)$ to denote the weight of the edge $\{u,v\}$ if exists. The distances $d(u,v)$ and weights $w(u,v)$ are with respect to the input graph $G$.
The algorithm works as follows.

\begin{oframed}
\begin{enumerate}
\item Set $\delta(u,v)=w(u,v)$ if $\{u,v\} \in E$, and $\delta(u,v) = \infty$ otherwise.\label{alg_init}
\item For each $v$, compute the distances to the set $N_k(v)$ for $k = \sqrt{n},$ and update the estimates $\{\delta(v,u)\}_{u \in N_k(v)}$ accordingly.\label{alg_nearest}
\item Set $\delta(u,v) = \min\{\delta(u,v), \min_{w \in N_k(u) \cap N_k(v)} \{\delta(u,w) + \delta(w,v)\}\}.$\label{alg_w}
\item Compute a hitting set $A$ of size $O(\sqrt{n} \log{n})$ that hits all the sets $N_k(v).$\label{alg_hit}
\item Use the MSSP algorithm to compute $(1+\epsilon)$-approximate distances from $V$ to $A$. For each $v \in V$ update the estimates $\{\delta(v,u)\}_{u \in A}$ accordingly.\label{alg_mssp}
\item Let $p(v) \in N_k(v) \cap A$ be a node from $A$ closest to $v$ that is chosen by $v$. The node $v$ sends $p(v)$ to all the nodes.\label{alg_p}
\item Set $\delta(u,v) = \min\{\delta(u,v), \delta(u,p(u))+\delta(p(u),v), \delta(v,p(v))+\delta(p(v),u)\}.$\label{alg_dis}
\end{enumerate}
\end{oframed}

\subsubsection{Complexity}

\begin{lemma} \label{wAPSP_time}
The complexity of the algorithm is  $O(\frac{\log^2{n}}{\epsilon})$ rounds.
\end{lemma}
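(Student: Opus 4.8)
The plan is to bound the round complexity of each of the seven steps of the algorithm separately and sum the contributions. The tools needed are the distance tools of Section~\ref{section:dist-tools} (for Steps~\ref{alg_nearest} and~\ref{alg_w}), the deterministic hitting set construction of Lemma~\ref{det_hit} (for Step~\ref{alg_hit}), and the MSSP algorithm of Theorem~\ref{multi-thm} (for Step~\ref{alg_mssp}). The crucial observation driving the whole bound is that with $k = \sqrt{n}$ and with a hitting set of size $|A| = O(\sqrt{n}\log{n})$, the complexities of all these subroutines collapse either to $O(1)$ (or $O((\log\log{n})^3)$) or to $O(\log^2{n}/\epsilon)$, so only Steps~\ref{alg_nearest} and~\ref{alg_mssp} produce the dominant term.

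First I would dispatch the cheap steps. Step~\ref{alg_init} is purely local. Step~\ref{alg_hit} applies Lemma~\ref{det_hit} with the subsets $\{N_k(v)\}$ of size $k = \sqrt{n}$, producing a hitting set of size $O(n\log{n}/k) = O(\sqrt{n}\log{n})$ in $O((\log\log{n})^3)$ rounds; since the construction makes $A$ globally known, every node can afterwards locally identify $N_k(v) \cap A$. Step~\ref{alg_w} is an instance of the \distthrough problem with $W_v = N_k(v)$: by Step~\ref{alg_nearest} each node knows the exact distances $\delta(v,w) = \delta(w,v)$ for $w \in N_k(v)$ (equality by undirectedness), so $\denssymbol = \sum_{v \in V} |N_k(v)|/n = k = \sqrt{n}$ and Theorem~\ref{thrm:distance-through} gives $O(\denssymbol^{2/3}/n^{1/3} + 1) = O(1)$ rounds. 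Step~\ref{alg_p} broadcasts one identifier $p(v)$ per node, which takes $O(1)$ rounds; here $v$ selects the closest node of $N_k(v) \cap A$ using the distances it learned in Step~\ref{alg_nearest}, and this set is nonempty because $A$ hits $N_k(v)$. Finally, for Step~\ref{alg_dis} it suffices that each $u$ additionally broadcasts $\delta(u,p(u))$ (which it knows since $p(u) \in N_k(u)$); every node $v$ then has $\delta(v,u)$, $\delta(u,p(u))$, $\delta(p(u),v)$ and $\delta(p(v),u)$ — the last two coming from the approximate distances to $A$ computed in Step~\ref{alg_mssp} — and evaluates the minimum locally, so this step is $O(1)$ rounds.

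Next I would handle the two dominant steps. For Step~\ref{alg_nearest}, Theorem~\ref{thrm:knearest} with $k = \sqrt{n}$ gives $O\big((k/n^{2/3} + \log{n})\log{k}\big) = O\big((n^{-1/6} + \log{n})\log{n}\big) = O(\log^2{n})$ rounds. For Step~\ref{alg_mssp}, note that the distances from $V$ to $A$ are exactly the distances from the source set $S = A$ to all nodes, so we invoke Theorem~\ref{multi-thm} with $|S| = |A| = O(\sqrt{n}\log{n}) = O(\sqrt{n}(\log{n})^{3/2})$, which places us in the regime where the MSSP complexity is $O(\log^2{n}/\epsilon)$ rounds (this already absorbs the cost of building the hopset). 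Summing all seven contributions gives $O(\log^2{n}/\epsilon)$, proving the lemma.

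I do not expect a genuine obstacle here; the one point that needs care is the bookkeeping of which distance estimates each node holds at the start of each step, so that the inputs to the distance tools are correctly configured — in particular the \distthrough instance of Step~\ref{alg_w} and the local minimization of Step~\ref{alg_dis}. This is handled by the running convention that whenever $\delta(u,v)$ is updated the new value is forwarded to $u$ as well, together with the symmetry of all stored estimates in the undirected setting.
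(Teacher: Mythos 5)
Your step-by-step accounting matches the paper's proof, and the dominant terms (Steps~\ref{alg_nearest} and~\ref{alg_mssp}) are handled identically, so the round bound is correct. There is one small bookkeeping slip in your treatment of Step~\ref{alg_dis}: after the MSSP call, a node $v$ learns the approximate distances $\delta(v,w)$ for $w \in A$, so $v$ knows $\delta(p(u),v)$; but $v$ does \emph{not} know $\delta(p(v),u)$, since that is a distance from $u$ to a node of $A$, held by $u$ rather than $v$. A broadcast of $\delta(u,p(u))$ alone is therefore not quite enough. The paper handles this by having $u$ send the pair $\{\delta(u,p(u)),\delta(u,p(v))\}$ to $v$ (the second component being a unicast that depends on $p(v)$, which $u$ knows after Step~\ref{alg_p}); alternatively, one can let $v$ evaluate only the term $\delta(u,p(u))+\delta(p(u),v)$, let $u$ evaluate only $\delta(v,p(v))+\delta(p(v),u)$, and rely on the running forwarding convention to merge the two updates. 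Either fix is $O(1)$ rounds, so the lemma's conclusion and your overall argument stand.
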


\begin{proof}
Implementing Line \ref{alg_init} is immediate.
We implement Line \ref{alg_nearest} using our $k$-nearest algorithm, which takes $O \Big( \Big( \frac{k}{n^{2/3}} + \log n \Big) \log k \Big)=O(\log^2{n})$ rounds by Theorem~\ref{thrm:knearest}. We implement Line \ref{alg_w} with our distance through sets algorithm, with respect to the sets $N_k(v)$. Since $k=\sqrt{n}$, it takes $O\Big(\frac{k^{2/3}}{n^{1/3}}+1 \Big)=O(1)$ rounds by Theorem \ref{thrm:distance-through}. We use Lemma \ref{det_hit} to compute the hitting set $A$ in $O((\log{\log{n}})^3)$ rounds (note that all the nodes already computed the sets $N_k(v)$).  Each node updates all the nodes if it is in $A$ which takes one round.
We implement Line \ref{alg_mssp} using our MSSP algorithm from Theorem \ref{multi-thm}, which takes $O(\frac{\log^2{n}}{\epsilon})$ rounds since $|A| = O(\sqrt{n} \log{n})$. Implementing Line \ref{alg_p} is immediate since each node $v$ knows $N_k(v)$ and $A$. Then, all the nodes know all the values $\{p(v)\}_{v \in V}$. Now, each node $u$ sends to each node $v$ the values $\{d(u,p(u)),d(u,p(v))\}$. This allows $u$ and $v$ to compute the values $\delta(u,p(u))+\delta(p(u),v)$ and $\delta(v,p(v))+\delta(p(v),u)$, which allows them to compute the final estimate $\delta(u,v).$
The overall complexity is $O(\frac{\log^2{n}}{\epsilon})$ rounds.
\end{proof}

\subsubsection{Correctness}

\begin{lemma} \label{wAPSP_approx}
The algorithm gives a $(2+\epsilon,(1+\epsilon)W)$-approximation for weighted APSP.
\end{lemma}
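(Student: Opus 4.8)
The plan is to fix an arbitrary ordered pair $u,v$, fix a shortest $u$–$v$ path $\pi = (u = x_0, x_1, \dots, x_\ell = v)$, let $W$ be its heaviest edge weight, and bound the final estimate $\delta(u,v)$. First I would record two easy bookkeeping facts. Every value ever assigned to some $\delta(\cdot,\cdot)$ is at least the corresponding true distance: edge weights in Line~\ref{alg_init}, exact distances to $N_k$ in Line~\ref{alg_nearest}, sums of two existing over-estimates in Lines~\ref{alg_w} and~\ref{alg_dis}, and $(1+\epsilon)$-approximate distances to $A$ in Line~\ref{alg_mssp} are all $\ge$ the true distance; hence $\delta(u,v)\ge d(u,v)$ throughout. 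Moreover, whenever $w\in N_k(u)$, Line~\ref{alg_nearest} sets $\delta(u,w)=d(u,w)$ exactly (by Theorem~\ref{thrm:knearest}), and since $\delta(u,w)$ can only decrease but stays $\ge d(u,w)$, it remains equal to $d(u,w)$ for the rest of the run; the same holds for $p(u)\in N_k(u)$ (chosen in Line~\ref{alg_p}).

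The core is a dichotomy: \emph{either} some vertex $w$ on $\pi$ lies in $N_k(u)\cap N_k(v)$, \emph{or} $\min\{d(u,p(u)),d(v,p(v))\}\le\tfrac12\bigl(d(u,v)+W\bigr)$. To prove it, let $x_m$ be the last vertex of $\pi$ with $d(u,x_m)\le\tfrac12(d(u,v)+W)$; this exists since $d(u,x_0)=0$. If $m<\ell$, then $d(u,x_{m+1})>\tfrac12(d(u,v)+W)$, and as $d(u,x_{m+1})=d(u,x_m)+w(x_m,x_{m+1})\le d(u,x_m)+W$ we get $d(u,x_m)\ge\tfrac12(d(u,v)-W)$, hence $d(v,x_m)=d(u,v)-d(u,x_m)\le\tfrac12(d(u,v)+W)$ as well; if $m=\ell$ then $x_m=v$ and $d(v,x_m)=0$. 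So in all cases $x_m$ lies on $\pi$ and $\max\{d(u,x_m),d(v,x_m)\}\le\tfrac12(d(u,v)+W)$. Now suppose the second alternative fails, i.e.\ both $d(u,p(u))>\tfrac12(d(u,v)+W)$ and $d(v,p(v))>\tfrac12(d(u,v)+W)$. Since $A$ hits $N_k(u)$ and $p(u)$ is the closest node of $A\cap N_k(u)$ to $u$, the distance from $u$ to its $k$-th nearest node is $\ge d(u,p(u))>\tfrac12(d(u,v)+W)\ge d(u,x_m)$; a node strictly closer in distance to $u$ than $u$'s $k$-th nearest node is in $N_k(u)$ (this is the monotonicity underlying Lemma~\ref{claim:subShortestPath}, using the fixed tie-breaking of $\prec$), so $x_m\in N_k(u)$, and symmetrically $x_m\in N_k(v)$, giving the first alternative. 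This proves the dichotomy.

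I would then finish by cases. If $w\in N_k(u)\cap N_k(v)$ lies on $\pi$, then $d(u,w)+d(w,v)=d(u,v)$, and after Line~\ref{alg_nearest} we have $\delta(u,w)=d(u,w)$, $\delta(w,v)=d(w,v)$; since $w\in N_k(u)\cap N_k(v)$, the \distthrough step (Line~\ref{alg_w}, run with the sets $N_k(\cdot)$) sets $\delta(u,v)\le\delta(u,w)+\delta(w,v)=d(u,v)$, which is trivially $\le(2+\epsilon)d(u,v)+(1+\epsilon)W$. Otherwise, say $d(u,p(u))\le\tfrac12(d(u,v)+W)$ (the alternative $d(v,p(v))\le\tfrac12(d(u,v)+W)$ is symmetric via the other term of Line~\ref{alg_dis}). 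Then $\delta(u,p(u))=d(u,p(u))$ since $p(u)\in N_k(u)$, and $\delta(p(u),v)\le(1+\epsilon)d(v,p(u))\le(1+\epsilon)\bigl(d(u,v)+d(u,p(u))\bigr)$ since $p(u)\in A$ and by the triangle inequality; so Line~\ref{alg_dis} yields
\[ \delta(u,v)\ \le\ d(u,p(u)) + (1+\epsilon)\bigl(d(u,v)+d(u,p(u))\bigr)\ =\ (2+\epsilon)d(u,p(u)) + (1+\epsilon)d(u,v)\ \le\ \Bigl(2+\tfrac{3\epsilon}{2}\Bigr)d(u,v) + \Bigl(1+\tfrac{\epsilon}{2}\Bigr)W\,. \]
Running the algorithm with $\epsilon$ replaced by $\tfrac{2\epsilon}{3}$ — which changes the running time only by a constant factor — then gives the stated $(2+\epsilon,(1+\epsilon)W)$-approximation.

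The step I expect to need the most care is the dichotomy: choosing the cutoff vertex $x_m$ at radius $\tfrac12(d(u,v)+W)$ rather than $\tfrac12 d(u,v)$ — so that a single heavy edge straddling the midpoint cannot eject $x_m$ from one of the two balls $N_k(u),N_k(v)$ — and cleanly justifying the implication ``strictly closer than the $k$-th nearest neighbour $\Rightarrow$ inside $N_k$'' through the consistent ordering $\prec$ behind Lemma~\ref{claim:subShortestPath}. The case split, the triangle-inequality estimate, and the final arithmetic are routine.
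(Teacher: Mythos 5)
Your proof is correct, and it reorganises the argument in a way that is genuinely different from the paper's. The paper defines $u'$ and $v'$ as the farthest nodes from $u$ and $v$ on the shortest path $P$ that lie in $N_k(u)$ and $N_k(v)$ respectively, and then splits into three cases according to whether the $N_k(u)$-prefix and $N_k(v)$-suffix of $P$ overlap (hit a common node, giving the exact distance through $N_k(u)\cap N_k(v)$), leave a gap (a node of $P$ in neither ball, giving $\min\{d(u,p(u)),d(v,p(v))\}\le d(u,v)/2$), or abut across a single edge (giving $\min\{d(u,p(u)),d(v,p(v))\}\le\frac12(d(u,v)+w(u',v'))$). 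You instead prove a single clean dichotomy: pick the cutoff vertex $x_m$ on $\pi$ at radius $\frac12(d(u,v)+W)$ from $u$, observe that it is simultaneously within that radius from $v$, and conclude that either $x_m\in N_k(u)\cap N_k(v)$ (handled by Line~\ref{alg_w}) or one of $d(u,p(u)),d(v,p(v))$ is at most $\frac12(d(u,v)+W)$ (handled by Line~\ref{alg_dis}). This merges the paper's Cases~2 and~3 into one bound, avoids introducing $u',v'$ and the prefix/suffix structure, and still yields $(2+\frac{3\epsilon}{2})d(u,v)+(1+\frac{\epsilon}{2})W$, from which the stated guarantee follows after rescaling $\epsilon$. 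The key technical point you handle correctly is the use of the strict inequality $d(u,x_m)<d(u,p(u))$ together with the lexicographic order $\prec$ (weight first, then hops) to conclude $x_m\in N_k(u)$: since the weight coordinate is strictly smaller, $x_m\prec p(u)$ regardless of tie-breaking, and membership follows because $p(u)\in N_k(u)$. Your argument is arguably tighter than the paper's (a two-way split rather than three), at the cost of a slightly more delicate choice of cutoff vertex; the paper's version is more mechanical but has a redundant case.
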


\begin{proof}
Let $u,v \in V$. If $v \in N_k(u)$ or $u \in N_k(v)$, then after Line \ref{alg_nearest}, $\delta(u,v) = d(u,v)$ and we are done.
Otherwise, let $P$ be a shortest path between $u$ and $v$. Let $u'$ be the furthest node from $u$ in $P \cap N_k(u)$, and let $v'$ be the furthest node from $v$ in $P \cap N_k(v)$. We divide the analysis to 3 cases. See Figure \ref{APSP_pic} for an illustration.

\begin{figure}[h]
\centering
\includegraphics[width=\linewidth]{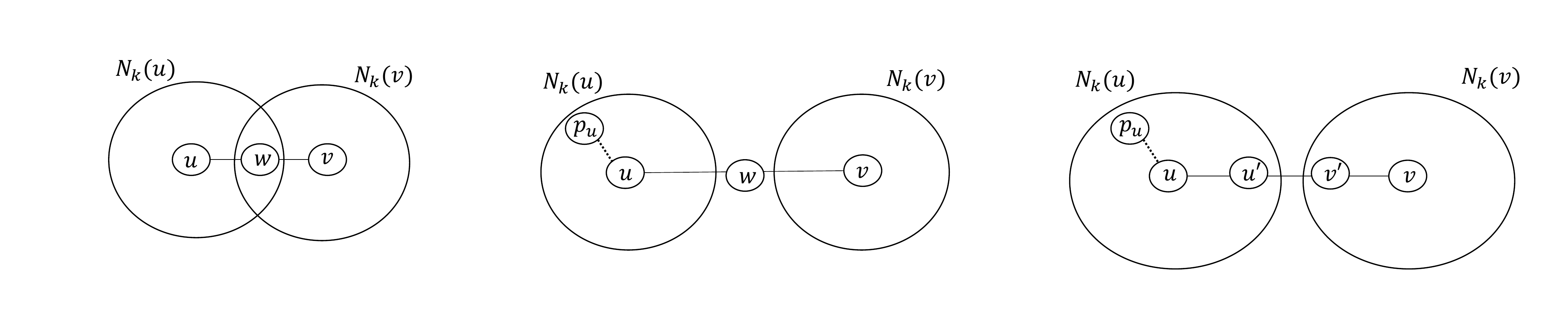}
\caption{There are 3 cases, the first is that $u'$ or $v'$ are in $N_k(u) \cap N_k(v)$. The second is that there is a node $w \in P \setminus (N_k(u) \cup N_k(v))$ and the last where there is an edge of $P$ between $u'$ and $v'$.}
\label{APSP_pic}
\vspace{0.2em}
\end{figure}

\textbf{Case 1:} Either $u' \in N_k(v)$ or $v' \in N_k(u)$.
In this case, $\min_{w \in  N_k(u) \cap N_k(v)}\{\delta(u,w)+\delta(w,v)\} = d(u,v)$, hence after Line \ref{alg_w}, $\delta(u,v) = d(u,v)$ and we are done.

\textbf{Case 2:} There is a node $w \in P$ where $w \not \in N_k(u)$ and $w \not \in N_k(v)$.
In this case, it follows that $d(u,p(u)) \leq d(u,w)$ and $d(v,p(v)) \leq d(v,w)$. Since $w \in P$, we also have $d(u,v) = d(u,w)+d(v,w)$. Hence $$\min\{d(u,p(u),d(v,p(v)\} \leq \min\{d(u,w),d(v,w)\} \leq d(u,v)/2.$$ Assume w.l.o.g that $d(u,p(u)) \leq d(v,p(v))$, then $d(u,p(u)) \leq d(u,v)/2$ and $d(v,p(u)) \leq d(v,u)+d(u,p(u)) \leq \frac{3}{2} d(u,v)$, which gives $d(u,p(u))+d(p(u),v) \leq 2d(u,v).$ The value $d(u,p(u))$ is computed exactly in Line \ref{alg_nearest}, and the value $d(v,p(u))$ is approximated within a $(1+\epsilon)$ factor in Line \ref{alg_mssp}. Hence, the final estimate is $\delta(u,v) \leq (2+\epsilon')d(u,v)$ for $\epsilon' = 2\epsilon$.

\textbf{Case 3:} All the nodes in $P$ are either in $N_k(u)$ or $N_k(v)$, but $N_k(u) \cap N_k(v) = \emptyset.$ In this case, the path $P$ is composed of a path between $u$ and $u'$, an edge between $u'$ and $v'$ and a path between $v'$ and $v$. Since $v' \not \in N_k(u)$ and $u' \not \in N_k(v)$, we have $d(u,p(u)) \leq d(u,v')$ and $d(v,p(v)) \leq d(v,u')$. In addition, since $d(u,v) = d(u,u') + w(u',v') + d(v',v)$, we have $$d(u,v')+d(v,u') \leq d(u,u') + w(u',v') + d(v,v') + w(u',v') \leq d(u,v) + w(u',v').$$ This implies that $$\min\{d(u,p(u)),d(v,p(v))\} \leq \min\{d(u,v'),d(v,u')\} \leq \frac{1}{2}(d(u,v) + w(u',v')).$$ Assume w.l.o.g that $d(u,p(u)) \leq d(v,p(v))$, then
$$d(u,p(u))+d(p(u),v) \leq d(u,p(u))+d(u,p(u))+d(u,v) \leq 2d(u,v) + w(u',v') \leq 2d(u,v)+W.$$
Since we compute the value $d(u,p(u))$ exactly, and approximate the value $d(p(u),v)$ within a $(1+\epsilon)$ factor, the final estimate is $\delta(u,v) \leq (2+2\epsilon)d(u,v)+(1+\epsilon)W.$ Choosing $\epsilon' = 2\epsilon$ completes the proof.
\end{proof}

\subsubsection{Conclusion}

From Lemmas \ref{wAPSP_time} and \ref{wAPSP_approx}, we get the following.

\begin{theorem}
There is a deterministic $(2+\epsilon,(1+\epsilon)W)$-approximation algorithm for weighted APSP in the \clique model that takes $O(\frac{\log^2{n}}{\epsilon})$ rounds.
\end{theorem}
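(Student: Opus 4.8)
The plan is simply to assemble the two lemmas just proved. The algorithm in question is the seven-step procedure of Section~\ref{section:wAPSP}: initialize $\delta(u,v) = w(u,v)$ on the edges of $G$; for $k = \sqrt{n}$ have each node $v$ compute $N_k(v)$ together with the exact distances to it; relax every estimate $\delta(u,v)$ through the common neighbors $N_k(u) \cap N_k(v)$; build a deterministic hitting set $A$ of size $\widetilde{O}(\sqrt{n})$ that hits all the sets $N_k(v)$; run the MSSP algorithm to obtain $(1+\epsilon)$-approximate distances from $A$ to all nodes; have each $v$ broadcast its pivot $p(v) \in N_k(v) \cap A$; and finally relax $\delta(u,v)$ through $p(u)$ and through $p(v)$.

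For the round complexity I would invoke Lemma~\ref{wAPSP_time}: Line~\ref{alg_nearest} costs $O(\log^2 n)$ rounds via Theorem~\ref{thrm:knearest} with $k = \sqrt{n}$; Line~\ref{alg_w} costs $O(1)$ rounds via Theorem~\ref{thrm:distance-through} since the sets $N_k(v)$ have density $\sqrt{n}$; Line~\ref{alg_hit} costs $O((\log\log n)^3)$ rounds via Lemma~\ref{det_hit}; Line~\ref{alg_mssp} costs $O(\log^2 n / \epsilon)$ rounds via Theorem~\ref{multi-thm} because $|A| = O(\sqrt{n}\log n)$; and all remaining lines are $O(1)$ rounds of local computation and broadcast. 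These add up to $O(\log^2 n / \epsilon)$. For correctness I would invoke Lemma~\ref{wAPSP_approx}, whose three-case analysis exhausts all possibilities for a pair $u,v$ with $v \notin N_k(u)$ and $u \notin N_k(v)$: either the extremal path-neighbor $u'$ or $v'$ lies in $N_k(u) \cap N_k(v)$, so Line~\ref{alg_w} recovers $d(u,v)$ exactly; or some shortest-path node lies outside both $N_k(u)$ and $N_k(v)$, forcing $\min\{d(u,p(u)),d(v,p(v))\} \le d(u,v)/2$ and hence a $(2+\epsilon)$-estimate; or the shortest path between $u'$ and $v'$ is a single edge of weight at most $W$, forcing $\min\{d(u,p(u)),d(v,p(v))\} \le \frac{1}{2}(d(u,v) + W)$ and hence an estimate of at most $(2+\epsilon)d(u,v) + (1+\epsilon)W$. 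In every case the final value $\delta(u,v)$ meets the claimed bound.

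It remains only to note that the composed algorithm is deterministic, which it is because every subroutine it calls --- the $k$-nearest algorithm (Theorem~\ref{thrm:knearest}), the distance-through-sets algorithm (Theorem~\ref{thrm:distance-through}), the hitting-set construction (Lemma~\ref{det_hit}), the hopset construction (Theorem~\ref{hopset_thm}), and the MSSP algorithm (Theorem~\ref{multi-thm}) --- is deterministic. I expect no genuine obstacle at this stage: the difficulty has all been front-loaded into the output-sensitive matrix multiplication, the distance tools, the polylogarithmic deterministic hopset, and the MSSP algorithm, and into the case analysis of Lemma~\ref{wAPSP_approx}, where the third case is precisely the point at which the additive $(1+\epsilon)W$ term becomes necessary. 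The theorem itself is then a two-line corollary of Lemmas~\ref{wAPSP_time} and~\ref{wAPSP_approx}.
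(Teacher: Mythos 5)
Your proof is correct and matches the paper's own argument exactly: the theorem is stated in the paper as an immediate corollary of Lemma~\ref{wAPSP_time} (round complexity) and Lemma~\ref{wAPSP_approx} (approximation guarantee), and your summary of both lemmas, of the three-case analysis, and of the determinism of every subroutine is accurate.
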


\subsection{$(2+\epsilon)$-approximation for unweighted APSP} \label{section:uAPSP}

In this section, we present our $(2+\epsilon)$-approximation for unweighted APSP.

\subsubsection{High-level idea}

In our $(2+\epsilon,1+\epsilon)$-approximation, the only case where we do not obtain a $(2+\epsilon)$-approximation is when the shortest path between $u$ and $v$ is composed of a path between $u$ and $u' \in N_k(u) \setminus N_k(v)$, an edge between $u'$ and a node $v' \in N_k(v) \setminus N_k(u)$ and a path between $v'$ and $v$. Note that $u$ knows the distance to $u'$, $v$ knows the distance to $v'$ and both $u'$ and $v'$ know about the edge between them. If we add to the graph all the edges between $u$ and nodes in $N_k(u)$ and between $v$ and nodes in $N_k(v)$, we have a shortest path of length 3 between $u$ and $v$ in the new graph. Intuitively, we would like to use matrix multiplication to learn about this path, but we cannot do this directly since this graph may be dense. In particular, we have no bound on the degrees of $u',v'$.

To overcome this, we present a new algorithm that takes care separately of shortest paths that include a high-degree node, and shortest paths that have only low-degree nodes, where a node has a high-degree if its degree is $\widetilde{O}(\sqrt{n})$. This is done as follows. We compute a hitting set $A$ of size $\widetilde{O}(\sqrt{n})$, such that each high-degree node has a neighbour in $A$. Then, we use the fact that $|A|=\widetilde{O}(\sqrt{n})$, to show that we can compute efficiently $(1+\epsilon)$-approximations of all the shortest paths that go through a node in $A$. Then, if the shortest path between $u$ and $v$ has a high-degree node, it has a neighbour $w \in A$, and the shortest path from $u$ to $v$ that goes through $w$ already gives a good approximation for the shortest $u-v$ path.

Then, we are left only with paths that do not contain any high-degree node. The graph induced on low-degree nodes is sparse, since the degrees are at most $\widetilde{O}(\sqrt{n})$, and we would like to exploit it. However, although we can multiply two sparse matrices with $\widetilde{O}(n^{3/2})$ elements efficiently, to compute paths of length 3 we need to multiply 3 matrices of this size, which is no longer efficient. To overcome this, we would like to reduce the degrees further. In particular, we would like to follow the analysis in Lemma \ref{wAPSP_approx}, but replace $N_k(v)$ with $N_{k'}(v)$ for $k' < k$ which allows focusing on sparser matrices. The obstacle is that we want to compute a hitting set $A'$ of the sets $\{N_{k'}(v)\}_{v \in V}$, and then use our MSSP algorithm to compute distances to all the nodes in $A'$, but if $k' < k$, we need to compute a larger hitting set $A'$, and our MSSP algorithm is efficient only as long as $|A'| = \widetilde{O}(\sqrt{n})$.

Nevertheless, since our graph is sparse, we show that we can actually compute MSSP from a set $A'$ of size $\widetilde{O}(n^{3/4})$ which allows us to choose $k'=\widetilde{O}(n^{1/4})$. To deal with the problematic case in the proof of Lemma \ref{wAPSP_approx}, we should compute paths of length 3 of the form $\{u,u',v',v\}$ where $u' \in N_{k'}(u)$, $v' \in N_{k'}(v)$ and there is an edge between $u'$ and $v'$. Since $|N_{k'}(u)|= |N_{k'}(u)| = \widetilde{O}(n^{1/4})$ and $u',v'$ are in the sparse graph and have maximum degree $\widetilde{O}(\sqrt{n})$, we can detect paths of this form efficiently, which allows us to show a $(2+\epsilon)$-approximation.

\subsubsection{Algorithm description}

We next present a pseudo-code of the algorithm, then we show how to implement it efficiently and provide a correctness proof. Again, each node $v \in V$, maintains an estimate $\delta(u,v)$ of the distance $d(u,v)$ for all $u \in V$. Each time a node updates the estimate $\delta(u,v)$ it takes the minimum value between the previous value of $\delta(u,v)$ and the current estimate computed. Also, if $v$ updates $\delta(u,v)$ it sends the new estimate also to $u$ to update its estimate accordingly.
Let $N(v)=\{u \in V| \{v,u\} \in E \} \cup \{v\}.$
Our algorithm works as follows.

\begin{oframed}
\begin{enumerate}
\item Set $\delta(u,v) = w(u,v)$ if $\{u,v\} \in E$ and $\delta(u,v) = \infty$ otherwise.\label{ualg_init}\\

\textbf{First phase -} handling shortest paths with a high-degree node:

\item Let $A$ be a hitting set of size $\widetilde{O}(\sqrt{n})$ that hits all the neighbourhoods $N(v)$ of high-degree nodes with $|N(v)| \geq k$ for $k=\widetilde{O}(\sqrt{n})$.\label{ualg_hit}
\item Use the MSSP algorithm to compute $(1+\epsilon)$-approximate distances from $V$ to $A$. For each $v \in V$ update the estimates $\{\delta(v,u)\}_{u \in A}$ accordingly.\label{ualg_mssp}
\item Set $\delta(u,v) = \min\{\delta(u,v), \min_{w \in A} \{ \delta(u,w) + \delta(w,v) \}\}$.\label{ualg_A}\\

\textbf{Second phase -} handling shortest paths that contain only low-degree nodes:

Let $G'=(V',E')$ be the graph induced on low-degree nodes with degree at most $k$, from now on our algorithm works only in the graph $G'$. In particular, the set $N_{k'}(v)$ is the set of $k'$ closest nodes to $v$ in $G'$.

\item For each $v$, compute the distances to the set $N_{k'}(v)$ for $k'=\widetilde{O}(n^{1/4}),$ and update the estimates $\{\delta(v,u)\}_{u \in N_{k'}(v)}$ accordingly.\label{ualg_k'nearest}
\item Set $\delta(u,v) = \min\{\delta(u,v), \min_{w \in N_{k'}(u) \cap N_{k'}(v)} \{\delta(u,w) + \delta(w,v)\}\}.$\label{ualg_w}
\item Construct a hitting set $A'$ of of size $\widetilde{O}(n^{3/4})$ that hits all the sets $N_{k'}(v)$.\label{ualg_A'}
\item Compute $(1+\epsilon)$-approximate shortest paths from $A'$ to all the nodes in the graph $G'$. For each $v \in V$ update the estimates $\{\delta(v,u)\}_{u \in A'}$ accordingly.\label{ualg_mssp'}
\item Let $p'(v) \in N_{k'}(v) \cap A'$ be a node from $A'$ closest to $v$ that is chosen by $v$. The node $v$ sends $p'(v)$ to all the nodes.\label{ualg_p'}
\item Set $\delta(u,v) = \min\{\delta(u,v), \delta(u,p'(u))+\delta(p'(u),v), \delta(v,p'(v))+\delta(p'(v),u)\}$.\label{ualg_dist_p'}
\item Let $\delta'(u,v) = \min \{\delta(u,u')+\delta(u',v')+\delta(v',v)| {u' \in N_{k'}(u),v' \in N_{k'}(v), \{u',v'\} \in E'}\}$.\label{ualg_3paths}
\item Set $\delta(u,v) = \min \{\delta(u,v), \delta'(u,v)\}.$
\end{enumerate}
\end{oframed}

\subsubsection{Complexity}

\begin{lemma} \label{APSP_time}
The complexity of the algorithm is $O(\frac{\log^2{n}}{\epsilon})$ rounds.
\end{lemma}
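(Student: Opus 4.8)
The plan is to bound the round complexity of each line of the pseudocode separately and sum; almost all lines are direct invocations of tools from Sections~\ref{section:mm}--\ref{section:MSSP}. Line~\ref{ualg_init} is immediate. Lines~\ref{ualg_hit} and~\ref{ualg_A'} build hitting sets via Lemma~\ref{det_hit} in $O((\log\log n)^3)$ rounds each: from the neighbourhoods $N(v)$ with $\size{N(v)}\ge k=\widetilde O(\sqrt n)$ we get $\size{A}=O(n\log n/k)=\widetilde O(\sqrt n)$, and from the sets $N_{k'}(v)$ with $k'=\widetilde O(n^{1/4})$ we get $\size{A'}=O(n\log n/k')=\widetilde O(n^{3/4})$. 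Line~\ref{ualg_mssp} is the MSSP algorithm (Theorem~\ref{multi-thm}) from the $\widetilde O(\sqrt n)$ sources $A$ in $G$, costing $O((\log^2 n)/\epsilon)$ rounds. Line~\ref{ualg_k'nearest} is the \knearest algorithm (Theorem~\ref{thrm:knearest}) in $G'$ with $k'=\widetilde O(n^{1/4})$, costing $O\bigl((k'/n^{2/3}+\log n)\log k'\bigr)=O(\log^2 n)$ rounds. Lines~\ref{ualg_A} and~\ref{ualg_w} are instances of \distthrough (Theorem~\ref{thrm:distance-through}): in Line~\ref{ualg_A} every node uses $W_v=A$, so $\denssymbol=\size{A}=\widetilde O(\sqrt n)$ and the cost is $O(\denssymbol^{2/3}/n^{1/3}+1)=O(\log n)$ rounds; in Line~\ref{ualg_w} each node uses $W_v=N_{k'}(v)$, so $\denssymbol=k'=\widetilde O(n^{1/4})$ and the cost is $O((n^{1/4})^{2/3}/n^{1/3}+1)=O(1)$ rounds. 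Lines~\ref{ualg_p'} and~\ref{ualg_dist_p'} only require broadcasting the $n$ names $p'(v)$ and the $n$ values $\delta(v,p'(v))$, local computation, and one round to symmetrise the updated estimates --- every node already knows its approximate distance to all of $A'\supseteq\{p'(v)\}_v$ from Line~\ref{ualg_mssp'} --- so these cost $O(1)$ rounds (as does every per-line symmetrisation of estimates, since each node sends one value per other node).

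The one place that does not follow from an off-the-shelf bound is Line~\ref{ualg_mssp'}, MSSP from the $\widetilde O(n^{3/4})$ sources $A'$ in $G'$: feeding $\size{A'}=\widetilde O(n^{3/4})$ into Theorem~\ref{multi-thm} yields only a polynomial $O\bigl(n^{1/6}(\log n)/\epsilon\bigr)$ bound. Instead we exploit that $G'$ is sparse: every node of $G'$ has degree at most $k=\widetilde O(\sqrt n)$, so $G'$ has $m'\le nk/2=\widetilde O(n^{3/2})$ edges. First we build a $(\beta,\epsilon)$-hopset $H'$ for $G'$ with $\beta=O((\log n)/\epsilon)$ and $\widetilde O(n^{3/2})$ edges via Theorem~\ref{hopset_thm}; this still runs in $O((\log^2 n)/\epsilon)$ rounds, because the construction only calls the \knearest tool (whose cost is independent of the edge count) and the \sdk subroutine from $\widetilde O(\sqrt n)$ sources, which on a graph with $\widetilde O(n^{3/2})$ edges costs $O\bigl((m^{1/3}\size{S}^{2/3}/n+1)\beta\bigr)=O(\beta)$ per iteration exactly as in Claim~\ref{hopsets_time_claim}. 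Then $G'\cup H'$ has $m=\widetilde O(n^{3/2})$ edges, and we run the \sdk algorithm of Theorem~\ref{thrm:source-detection} on $G'\cup H'$ with $S=A'$, $d=\beta$, $k=\size{A'}$; by the hopset property the resulting $\beta$-hop distances $(1+\epsilon)$-approximate the $G'$-distances from $A'$ to all nodes. The second bound of Theorem~\ref{thrm:source-detection} gives
\[ O\!\left(\Bigl(\tfrac{m^{1/3}\size{A'}^{2/3}}{n}+1\Bigr)\beta\right)=O\!\left(\Bigl(\tfrac{(n^{3/2})^{1/3}(n^{3/4})^{2/3}}{n}+1\Bigr)\beta\right)=O(\beta\log n)=O\!\left(\tfrac{\log^2 n}{\epsilon}\right) \]
rounds, the extra $\log n$ absorbing the polylogarithmic slack hidden in $m$ and $\size{A'}$.

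Finally, Lines~\ref{ualg_3paths}--(12) compute, for every pair $(u,v)$, the quantity $\min\{\delta(u,u')+\delta(u',v')+\delta(v',v):u'\in N_{k'}(u),\ v'\in N_{k'}(v),\ \{u',v'\}\in E'\}$, which is the $(u,v)$ entry of the distance product $(D_1\star W')\star D_1^{T}$ where $D_1[u,u']=\delta(u,u')$ for $u'\in N_{k'}(u)$ (and $\infty$ otherwise), so $\dens{D_1}=\dens{D_1^{T}}=k'=\widetilde O(n^{1/4})$, and $W'$ is the weight matrix of $G'$, so $\dens{W'}=\widetilde O(\sqrt n)$. For $P_1=D_1\star W'$, row $u$ can be non-zero only at indices in $N_{k'}(u)\cup\bigcup_{u'\in N_{k'}(u)}N(u')$, whose size is $\widetilde O(k'k)=\widetilde O(n^{3/4})$, so $\densh{P_1}=\widetilde O(n^{3/4})$ and Theorem~\ref{theorem:mm} --- fed this upper bound, or run in its $O(\log n)$-overhead variant --- computes $P_1$ in $O\bigl((n^{1/4}\!\cdot\! n^{1/2}\!\cdot\! n^{3/4})^{1/3}/n^{2/3}+1\bigr)=O(\operatorname{polylog}n)$ rounds; for $P_1\star D_1^{T}$ the output may be dense, so we use $n$ as the density estimate and, with $\dens{P_1}=\widetilde O(n^{3/4})$ and $\dens{D_1^{T}}=\widetilde O(n^{1/4})$, Theorem~\ref{theorem:mm} gives $O\bigl((n^{3/4}\!\cdot\! n^{1/4}\!\cdot\! n)^{1/3}/n^{2/3}+1\bigr)=O(\operatorname{polylog}n)$ rounds, while step~(12) is local. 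Summing over all lines, the dominant terms are the two $O((\log^2 n)/\epsilon)$ contributions from Lines~\ref{ualg_mssp} and~\ref{ualg_mssp'}, giving the claimed bound. The main obstacle is Line~\ref{ualg_mssp'}: turning an MSSP computation from a polynomially large source set into a polylogarithmic-round procedure, which is precisely where the bounded-degree structure of $G'$, the hopset, and output-sensitive source detection must be combined.
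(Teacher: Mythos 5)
Your proof is correct and follows the paper's own line-by-line accounting essentially exactly: the off-the-shelf invocations for Lines~1--7 and 9--10, the sparse MSSP implementation for Line~8 (build a $(\beta,\epsilon)$-hopset for $G'$ via Theorem~\ref{hopset_thm}, then run \sdk on $G'\cup H'$ with $S=A'$, $d=\beta$, using that $G'\cup H'$ has $\widetilde O(n^{3/2})$ edges), and the two-stage distance product $(M_1\star M_2)\star M_3$ with intermediate density $\widetilde O(n^{3/4})$ for Line~11. The only cosmetic difference is that you allow the $O(\log n)$-overhead variant of Theorem~\ref{theorem:mm} where the paper directly supplies a known density bound (getting $O(1)$ for those multiplications), which does not affect the final $O(\log^2 n/\epsilon)$ bound.
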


\begin{proof}
Implementing Line \ref{ualg_init} is immediate.
We choose $k=\sqrt{n}$. To implement Line \ref{ualg_hit}, we use Lemma \ref{det_hit} to build $A$ of size $O(n\log{n}/k)=O(\sqrt{n} \log{n})$ that hits all the sets $N(v)$ for $v$ with $|N(v)|\geq k$. Note that each node $v$ knows the set $N(v)$. The complexity is $O((\log{\log{n}})^3)$ rounds.
Each node updates all the nodes if it is in $A$ which takes one round.
We implement Line \ref{ualg_mssp} using our MSSP algorithm from Theorem \ref{multi-thm}, which takes $O(\frac{\log^2{n}}{\epsilon})$ rounds since $|A| = O(\sqrt{n} \log{n})$. Line \ref{ualg_A} is equivalent to computing distances through the set $A$ with respect to the distances $\{\delta(u,v)\}_{u \in V, v \in A}$ computed in Line \ref{ualg_mssp}. This takes $O \Big( \frac{(\sqrt{n}\log{n})^{2/3}}{n^{1/3}} + 1) = O(\log{n})$ rounds by Theorem \ref{thrm:distance-through}.

To implement Line \ref{ualg_k'nearest}, we choose $k'=n^{1/4}$, and use Theorem~\ref{thrm:knearest} to find the $k'$-nearest nodes in $O \Big( \Big( \frac{k'}{n^{2/3}} + \log{n} \Big) \log k' \Big)=O(\log^2{n})$ rounds. To implement Line \ref{ualg_w}, we use our distance through sets algorithm, with respect to the sets $N_{k'}(v)$. Since $k'=n^{1/4}$, the complexity is $O(1)$ by Theorem \ref{thrm:distance-through}.
To implement Line \ref{ualg_A'}, we use Lemma \ref{det_hit} to build $A'$ of size $O(n\log{n}/k')=O(n^{3/4} \log{n})$ that hits all the sets $N_{k'}(v)$ for $v \in G'$. Note that all the nodes already computed the sets $N_{k'}(v)$. The complexity is $O((\log{\log{n}})^3)$ rounds.
Each node updates all the nodes if it is in $A'$ which takes one round. To implement Line \ref{ualg_mssp'}, we use a sparse variant of our MSSP algorithm, as follows. First, we compute a $(\beta,\epsilon)$-hopset $H$ of size $O(n^{3/2}\log{n})$ for $G'$ using Theorem \ref{hopset_thm}. The complexity is $O(\frac{\log^2{n}}{\epsilon})$ rounds and $\beta = O(\frac{\log{n}}{\epsilon})$. Now, the $\beta$-hop distances in $G' \cup H$ give $(1+\epsilon)$-approximation for the distances in $G'$. Hence, to approximate the distances of all nodes from $A'$, we use our $(S,d,k'')$-algorithm in the graph $G' \cup H$, with $S=A',k''=|A'|=O(n^{3/4}\log{n}),d=\beta$. In $G'$ the maximum degree is $k=\sqrt{n}$, and $H$ is of size $O(n^{3/2}\log{n})$, hence the graph $G' \cup H$ is of size $O(n^{3/2}\log{n})$. Thus, by Theorem \ref{thrm:source-detection}, the complexity is $O \Big( \Big( \frac{(n^{3/2}\log{n})^{1/3} \cdot (n^{3/4}\log{n})^{2/3}}{n}+1 \Big) \frac{\log{n}}{\epsilon} \Big) = O(\frac{\log^2{n}}{\epsilon}).$

Implementing Line \ref{ualg_p'} takes one round since each node $v$ knows $N_{k'}(v)$ and $A'$. To implement Line \ref{ualg_dist_p'} it is enough that each node $v$ sends to each node $u$ the values $\{\delta(v,p'(v)),\delta(v,p'(u))\}.$ To conclude the description of the algorithm, we explain how we implement Line \ref{ualg_3paths}. For this we multiply the following 3 matrices. The matrix $M_1(u,w) = \delta(u,w)$ if $w \in N_{k'}(u)$ and $M_1(u,w) = \infty$ otherwise. The matrix $M_3 = M_1^{T}$. I.e., $M_3(w,u)=M_1(u,w)$, and the matrix $M_2(u,v)=\delta(u,v)$ if $\{u,v\} \in E'$ and $M_2(u,v) = \infty$ otherwise. Note that $M_1,M_3$ have $O(n^{5/4})$ entries which are not $\infty$, since the sizes of the sets $N_{k'}(v)$ are $n^{1/4}$. Also, from the definition of $G'$, the matrix $M_2$ has $O(n^{3/2})$ entries which are not $\infty$. We multiply $M_1 \cdot M_2 \cdot M_3$ in the min-plus semiring. For this, we first multiply $M_1 \cdot M_2$ which takes $O(1)$ rounds based on their sparsity, by Theorem \ref{theorem:mm}. Then, we multiply the resulting matrix and $M_3$. Since the maximum degree in $M_1$ is $k' = n^{1/4}$ and the maximum degree in $M_2$ is $k = \sqrt{n}$, we get that in $M_1 \cdot M_2$ the maximum degree is $n^{3/4}$. Hence, the matrix $M_1 \cdot M_2$ has at most $n^{7/4}$ non-$\infty$ elements, and $M_3$ has at most $n^{5/4}$ non-$\infty$ elements. Hence multiplying $(M_1 \cdot M_2) \cdot M_3$ takes $O(1)$ rounds By Theorem \ref{theorem:mm} (choosing $\densh{ST}=n$). Since the multiplication is in the min-plus semiring and by the definition of the matrices, it holds that after the multiplication any pair of nodes $u,v$ computed the value $$\min \{\delta(u,u')+\delta(u',v')+\delta(v',v)| {u' \in N_{k'}(u),v' \in N_{k'}(v), \{u',v'\} \in E'}\}.$$ This completes the proof.
\end{proof}

\subsubsection{Correctness}

\begin{lemma} \label{APSP_approx}
By the end of the algorithm, for all $u,v \in V$, it holds that $\delta(u,v) \leq (2+\epsilon)d_G(u,v).$
\end{lemma}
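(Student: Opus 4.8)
The plan is to argue about the distance estimates exactly as in the proof of Lemma~\ref{wAPSP_approx}, but now splitting the analysis according to whether some shortest $u$--$v$ path uses a high-degree node. Two preliminary remarks will simplify the bookkeeping. First, every estimate $\delta(\cdot,\cdot)$ is always a valid upper bound on the corresponding distance in $G$: it is initialized to exact edge weights in Line~\ref{ualg_init} and thereafter only updated by taking minima of valid estimates, so for each pair $u,v$ it suffices to exhibit a single computed estimate of value at most $(2+\epsilon)d_G(u,v)$. Second, I would run the two MSSP invocations (Lines~\ref{ualg_mssp} and~\ref{ualg_mssp'}) with approximation parameter $\epsilon/2$; this changes neither the correctness of the subroutines nor the asymptotic round complexity, and lets the accumulated multiplicative loss be $(2+\epsilon)$ rather than $(2+2\epsilon)$. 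Finally, if $d_G(u,v)\le 1$ the bound is immediate from Line~\ref{ualg_init} (in an unweighted graph $d_G(u,v)=1$ means $\{u,v\}\in E$), and $u=v$ is trivial, so one may assume $d_G(u,v)\ge 2$ throughout.

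\textbf{Case A: some shortest path passes through a high-degree node.} Fix a shortest path $P$ containing a node $x$ with $|N(x)|\ge k$. Since the hitting set $A$ of Line~\ref{ualg_hit} hits $N(x)$, there is $z\in A\cap N(x)$, so $d_G(x,z)\le 1$, and therefore $d_G(u,z)+d_G(z,v)\le (d_G(u,x)+1)+(1+d_G(x,v))=d_G(u,v)+2\le 2\,d_G(u,v)$. After Lines~\ref{ualg_mssp} and~\ref{ualg_A} the estimates $\delta(u,z)$ and $\delta(z,v)$ are $(1+\epsilon/2)$-approximations of $d_G(u,z)$ and $d_G(z,v)$ and $\delta(u,v)\le\delta(u,z)+\delta(z,v)$, which gives $\delta(u,v)\le (2+\epsilon)d_G(u,v)$ (and the stronger $(1+\epsilon/2)d_G(u,v)$ when $z=x$).

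\textbf{Case B: some shortest path avoids all high-degree nodes.} Fix such a shortest path $P$; then $P$ lies entirely in $G'$, so $u,v\in V'$ and $d_{G'}(u,v)=d_G(u,v)$. I would then re-run the three-case analysis of Lemma~\ref{wAPSP_approx} inside $G'$, with $N_{k'},A',p'$ and Lines~\ref{ualg_k'nearest},~\ref{ualg_w},~\ref{ualg_mssp'},~\ref{ualg_dist_p'} playing the roles of $N_k,A,p$ and the corresponding lines there. If $v\in N_{k'}(u)$ or $u\in N_{k'}(v)$, Line~\ref{ualg_k'nearest} already gives $\delta(u,v)=d_{G'}(u,v)$. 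Otherwise let $u'$ (resp.\ $v'$) be the node of $P\cap N_{k'}(u)$ (resp.\ $P\cap N_{k'}(v)$) farthest from $u$ (resp.\ from $v$); by Lemma~\ref{claim:subShortestPath} applied in $G'$, the set $N_{k'}(u)$ meets $P$ in a prefix $x_0=u,\dots,x_a=u'$ and $N_{k'}(v)$ meets $P$ in a suffix $x_b=v',\dots,x_\ell=v$. If $u'\in N_{k'}(v)$ or $v'\in N_{k'}(u)$ (equivalently $a\ge b$), a node of $N_{k'}(u)\cap N_{k'}(v)$ lies on $P$ and Line~\ref{ualg_w} gives $\delta(u,v)=d_{G'}(u,v)$. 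If some node of $P$ lies in neither $N_{k'}(u)$ nor $N_{k'}(v)$, the computation of Case~2 of Lemma~\ref{wAPSP_approx} gives $\min\{d_{G'}(u,p'(u)),d_{G'}(v,p'(v))\}\le d_{G'}(u,v)/2$, and then Lines~\ref{ualg_k'nearest},~\ref{ualg_mssp'},~\ref{ualg_dist_p'} yield $\delta(u,v)\le (2+\epsilon)d_{G'}(u,v)=(2+\epsilon)d_G(u,v)$. The single remaining configuration --- all of $P$ contained in $N_{k'}(u)\cup N_{k'}(v)$ but $a<b$ --- forces $b=a+1$, so $u'=x_a$ and $v'=x_{a+1}$ are adjacent along $P$, hence $\{u',v'\}\in E'$, $u'\in N_{k'}(u)$, $v'\in N_{k'}(v)$. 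This is exactly a triple examined by Line~\ref{ualg_3paths}, and using $\delta(u,u')=d_{G'}(u,u')$, $\delta(v',v)=d_{G'}(v',v)$ (Line~\ref{ualg_k'nearest}) and $\delta(u',v')=1$ (Line~\ref{ualg_init}) one gets $\delta'(u,v)\le d_{G'}(u,u')+1+d_{G'}(v',v)=d_G(u,v)$, so the final line gives $\delta(u,v)=d_G(u,v)$.

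The part I expect to be the main obstacle is this last configuration of Case~B: one has to verify (i) that, after removing the situations already handled by Lines~\ref{ualg_k'nearest} and~\ref{ualg_w}, the only surviving shape of $P$ is the one with $u'$ and $v'$ adjacent --- which depends on the prefix/suffix structure of $N_{k'}(u)\cap P$ and $N_{k'}(v)\cap P$, hence on Lemma~\ref{claim:subShortestPath} together with the tie-breaking rule (by hop count, then arbitrarily) used to define $N_{k'}(\cdot)$, and one must check this is consistent with the particular shortest path $P$ chosen --- and (ii) that the three-matrix product of Line~\ref{ualg_3paths} really evaluates $\min\bigl\{\delta(u,u')+\delta(u',v')+\delta(v',v):u'\in N_{k'}(u),\ v'\in N_{k'}(v),\ \{u',v'\}\in E'\bigr\}$ with $\delta(u,u'),\delta(v',v)$ exact and $\delta(u',v')=1$. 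The rest is routine bookkeeping: tracking which line produced each $\delta$ used above, to the claimed exact or $(1+\epsilon/2)$-approximate accuracy, and checking that the constant-factor losses compose to $(2+\epsilon)$ after the internal rescaling of the approximation parameter.
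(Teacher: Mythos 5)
Your proposal is correct and follows essentially the same approach as the paper: split on whether the shortest path contains a high-degree node, handle the high-degree case via the hitting set $A$ and MSSP, and otherwise work in $G'$ and reproduce the three-way case analysis of Lemma~\ref{wAPSP_approx} (common near-node, far detour node, or an edge between $N_{k'}(u)$ and $N_{k'}(v)$ recovered by Line~\ref{ualg_3paths}). The only cosmetic differences are that you rescale the MSSP error parameter to $\epsilon/2$ up front rather than substituting $\epsilon' = 2\epsilon$ at the end, and you make explicit — via the prefix/suffix structure from Lemma~\ref{claim:subShortestPath} — why the three sub-cases in the low-degree phase are exhaustive and why the residual configuration forces $u'$ and $v'$ to be adjacent; the paper states this without elaboration.
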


\begin{proof}
If $d_G(u,v)=1$, then after Line \ref{ualg_init}, $\delta(u,v)=1$ and we are done. Hence, we assume that $d_G(u,v) \geq 2.$
Let $P$ be a shortest path between $u$ and $v$ in $G$. We first consider the case that $P$ has a high-degree node $u'$ of degree at least $k=\widetilde{O}(\sqrt{n})$. Then, $u'$ has a neighbour $w \in A$. For now, we use the notation $d(u,v)$ for $d_G(u,v)$. From the triangle inequality, $d(u,w) \leq d(u,u')+d(u',w)=d(u,u')+1$ and $d(v,w) \leq d(v,u')+1.$ Also, since $u' \in P$, it holds that $d(u,v)=d(u,u')+d(u',v).$ Hence, $d(u,w)+d(w,v) \leq d(u,u')+1+d(u',v)+1 \leq d(u,v)+2 \leq 2d(u,v).$ In Line \ref{ualg_mssp}, we compute $(1+\epsilon)$-approximations for the values $d(u,w),d(w,v)$, hence after Line \ref{ualg_A}, we have $\delta(u,v) \leq (2+2\epsilon)d(u,v).$

We next consider the case that $P$ has only low-degree nodes, hence it is contained in the graph $G'$. From now on we work with the graph $G'$, and use $d(u,v)$ for $d_{G'}(u,v)$. Since $P$ is contained in $G'$, it holds that $d_{G'}(u,v)=d_{G}(u,v)$. The proof is divided to 3 cases, as in the proof of Lemma \ref{wAPSP_approx}.

\textbf{Case 1:} there is a node $w \in P \cap (N_{k'}(u) \cap N_{k'}(v))$. Then, $d(u,v) = d(u,w)+d(w,v).$ Also, $u$ and $v$ computed the values $d(u,w),d(w,v)$ in Line \ref{ualg_k'nearest}, and $\delta(u,v)=d(u,w)+d(w,v)=d_{G'}(u,v)=d_G(u,v)$ after Line \ref{ualg_w}.

\textbf{Case 2:} there is a node $w \in P \setminus (N_{k'}(u) \cup N_{k'}(v))$.
Since $w \in P$, $p'(u) \in N_{k'}(u)$ and $p'(v) \in N_{k'}(v)$, we have $d(u,p'(u)) + d(v,p'(v)) \leq d(u,w) + d(v,w) \leq d(u,v)$.
Exactly as in the proof of Case 2 in Lemma \ref{wAPSP_approx}, we get that $\min \{ d(u,p'(u))+d(p'(u),v), d(v,p'(v))+d(p'(v),u) \}\leq 2d(u,v).$
Since our algorithm computes $d(u,p'(u)),d(v,p'(v))$ exactly in Line \ref{ualg_k'nearest} and approximates $d(u,p'(v)),d(v,p'(u))$ within $(1+\epsilon)$ factor in Line \ref{ualg_mssp'}, after Line \ref{ualg_dist_p'} we have $\delta(u,v) \leq (2+2\epsilon)d(u,v).$

\textbf{Case 3:} $P$ is composed of a path between $u$ to a node $u' \in N_{k'}(u)$, an edge $\{u',v'\} \in E'$ where $v' \in N_{k'}(v)$ and a path between $v'$ and $v$. In this case, $\delta(u,u') = d(u,u')$ and $\delta(v',v)= d(v,v')$ after Line \ref{ualg_k'nearest}, and $\delta(u',v')=d(u,v)$ after Line \ref{ualg_init}. Hence, in Line \ref{ualg_3paths} we set $\delta(u,v) = \delta(u,u')+\delta(u',v')+\delta(v',v)=d(u,v).$

To conclude, in all cases we have $\delta(u,v) \leq (2+2\epsilon)d_G(u,v)$ by the end of the algorithm. Choosing $\epsilon' = 2\epsilon$ completes the proof.
\end{proof}

\subsubsection{Conclusion}

From Lemmas \ref{APSP_time} and \ref{APSP_approx}, we get the following.

\begin{theorem}
There is a deterministic $(2+\epsilon)$-approximation algorithm for \emph{unweighted} APSP in the \clique model that takes $O(\frac{\log^2{n}}{\epsilon})$ rounds.
\end{theorem}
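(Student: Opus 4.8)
The plan is to instantiate the two‑phase algorithm described above and simply compose the two lemmas already established. Run the algorithm of this section with approximation parameter $\epsilon$. By Lemma~\ref{APSP_time} it terminates in $O((\log^2 n)/\epsilon)$ rounds, and by Lemma~\ref{APSP_approx} at termination every node $v$ holds, for all $u$, an estimate $\delta(u,v)$ with $\delta(u,v) \le (2+\epsilon)\, d_G(u,v)$; moreover $\delta(u,v) \ge d_G(u,v)$ always, since every update either uses the length of an actual path in $G$ (or in the subgraph $G'$, whose distances equal those of $G$ on paths contained in it) or an over‑estimate produced by the MSSP/hopset routine. Every primitive invoked is deterministic --- the $k$‑nearest and $k'$‑nearest computations (Theorem~\ref{thrm:knearest}), the distance‑through‑sets steps (Theorem~\ref{thrm:distance-through}), the hitting‑set constructions (Lemma~\ref{det_hit}), the MSSP and hopset routines (Theorem~\ref{multi-thm}, Theorem~\ref{hopset_thm}), and the triple matrix product of Line~\ref{ualg_3paths} (Theorem~\ref{theorem:mm}) --- so the whole algorithm is deterministic. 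The constant‑factor rescaling of $\epsilon$ (the correctness argument naturally yields $(2+2\epsilon)$) is absorbed inside Lemma~\ref{APSP_approx} and costs only a constant factor in the round count, giving exactly the claimed statement.

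Since both lemmas are in hand, the proof of the theorem itself is a one‑line composition, and I would write it as such. It is worth recording, though, where the genuine work sits, in case one reproves the pieces rather than citing them. The first nontrivial ingredient is the case analysis of Lemma~\ref{APSP_approx}: paths through a high‑degree node are handled via the hitting set $A$ and MSSP, where the triangle inequality loses only an additive $2$ and hence yields a $2$‑approximation once $d_G(u,v) \ge 2$; paths confined to the low‑degree subgraph $G'$ split into the same three cases as the weighted analysis --- a common node in $N_{k'}(u) \cap N_{k'}(v)$, a path node outside both $N_{k'}$ sets (handled through $p'(\cdot) \in A'$), or a single edge joining $N_{k'}(u)$ to $N_{k'}(v)$ --- but now with no additive‑$W$ term, since all weights are $1$ and the edge case is resolved exactly by the length‑$3$ path search.

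The step I expect to be the main obstacle is realizing that length‑$3$ path search (Line~\ref{ualg_3paths}) within the $\widetilde O(1)$‑round budget: one multiplies $M_1 M_2 M_3$ over the min‑plus semiring where $M_1, M_3$ carry $O(n^{5/4})$ finite entries and $M_2$ carries $O(n^{3/2})$ finite entries, and the key is to bound the density of the intermediate product $M_1 M_2$ --- its maximum row support is $k' \cdot k = n^{3/4}$, so it has $O(n^{7/4})$ finite entries --- so that both multiplications run in $O(1)$ rounds by Theorem~\ref{theorem:mm}. Closely related in difficulty is the sparse MSSP from $\widetilde O(n^{3/4})$ sources in Line~\ref{ualg_mssp'}: it fits in polylogarithmic time only because $G'$ is sparse, so $G' \cup H$ has $\widetilde O(n^{3/2})$ edges and Theorem~\ref{thrm:source-detection} gives $\widetilde O(1)$ rounds per hop over the $O((\log n)/\epsilon)$‑bounded hopset, for a total of $O((\log^2 n)/\epsilon)$.
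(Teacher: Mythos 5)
Your proposal is correct and follows exactly the paper's proof, which is likewise a one-line composition of Lemma~\ref{APSP_time} (round complexity) and Lemma~\ref{APSP_approx} (approximation guarantee), with determinism inherited from the deterministic subroutines. The additional commentary you provide on where the technical weight lies (the three-case analysis, the density bound $k' \cdot k = n^{3/4}$ on $M_1 M_2$ for the length-$3$ path search, and the sparse MSSP from $\widetilde O(n^{3/4})$ sources over $G' \cup H$) is accurate and consistent with the paper's treatment inside those lemmas.
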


\section{Other applications}

\subsection{Exact single-source shortest paths} \label{section:SSSP}

We show that our distance tools allow to get a faster algorithm for exact SSSP in undirected weighted graphs, improving the previous $\widetilde{O}(n^{1/3})$-round algorithm \cite{censor2015algebraic}.
The algorithm is very simple, and is based on ideas used in \cite{nanongkai2014distributed, elkin2017distributed, shi1999time}. We start by computing the distances to the nodes in $N_k(v)$ for all $v \in V$ using our $k$-nearest algorithm. Then, we add to the graph all the edges $\{\{u,v\}| u \in V, v \in N_k(u) \}$ with the weights $d(u,v)$ computed by the algorithm, which results in a new graph $G'$. $G'$ is called the $k$-shortcut graph or $k$-shortcut hopset in \cite{nanongkai2014distributed, elkin2017distributed}, and it is proved in \cite{nanongkai2014distributed} that the shortest path diameter of $G'$ is $O(n/k)$. I.e., for any pair of nodes $u,v$ there is a shortest path between $u$ and $v$ of hop-distance at most $O(n/k)$ in $G'$.

\begin{lemma} (Theorem 3.10 in \cite{nanongkai2014distributed})
The shortest path diameter of $G'$ is smaller than $4n/k$.
\end{lemma}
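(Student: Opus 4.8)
This is Theorem~3.10 of \cite{nanongkai2014distributed}, so the plan is to recall that argument; I sketch it here for completeness. The first step is to note that adding a shortcut edge $\{u,v\}$ of weight $d_G(u,v)$ can neither create a shorter path nor enlarge the minimum, so $d_{G'}(u,v)=d_G(u,v)$ for every pair, and it therefore suffices to exhibit, for each pair $u,v$, a \emph{minimum-weight} $u$--$v$ path in $G'$ using fewer than $4n/k$ edges.

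Fix a shortest path $P=(x_0=u,x_1,\dots,x_L=v)$ in $G$, where $L\le n-1$. The key structural fact is the sub-path property of Lemma~\ref{claim:subShortestPath}: if $x_j\in N_k(x_i)$, then every node on the sub-path of $P$ between $x_i$ and $x_j$ also lies in $N_k(x_i)$. Hence, for each index $i$ the set $\{j\ge i : x_j\in N_k(x_i)\}$ is a contiguous block $\{i,i+1,\dots,r_i\}$, and since $\size{N_k(x_i)}=k$ we have $i\le r_i\le i+k-1$; moreover $x_{r_i}\in N_k(x_i)$ means the edge $\{x_i,x_{r_i}\}$ is present in $G'$ with weight $d_G(x_i,x_{r_i})$. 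Now build checkpoints $0=j_0<j_1<\dots<j_T=L$ greedily: from $j_s$, take the shortcut hop to $x_{r_{j_s}}$ if $r_{j_s}>j_s$, and otherwise take the original path edge to $x_{j_s+1}$. In both cases the chosen edge of $G'$ has weight exactly the length of the sub-path of $P$ from $x_{j_s}$ to $x_{j_{s+1}}$ (a sub-path of a shortest path is shortest), so $(x_{j_0},\dots,x_{j_T})$ is a $u$--$v$ path in $G'$ whose weight telescopes to $w(P)=d_G(u,v)$ and which uses exactly $T$ edges; it remains to bound $T$.

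To bound $T$, call a step \emph{long} if $j_{s+1}-j_s\ge k/2$ and \emph{short} otherwise. Since the index advances by at least $k/2$ on each long step and the total advance is $L\le n$, there are at most $2n/k$ long steps. For the short steps one argues that their number is also $O(n/k)$ by a charging argument over the $k$-nearest balls: a short step at $j_s$ forces $x_{j_s+\lceil k/2\rceil}\notin N_k(x_{j_s})$, so $N_k(x_{j_s})$ contains $\Omega(k)$ nodes that are closer to $x_{j_s}$ than $x_{j_s+\lceil k/2\rceil}$ and that lie outside the at most $\lceil k/2\rceil$ relevant forward path-nodes; charging these ``crowding'' witnesses to the step and showing that the witness sets of sufficiently separated short steps are essentially disjoint bounds the number of short steps, and combining the two counts gives $T<4n/k$. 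The main obstacle is exactly this last charging step: one must control the overlap between the $k$-nearest balls of different checkpoints (which may share many off-path nodes) and handle the hop-distance tie-breaking built into the definition of $N_k(\cdot)$; this is the technical heart of the lemma, and where the precise constant $4$ is tuned, whereas everything preceding it is routine bookkeeping.
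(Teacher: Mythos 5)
The paper does not prove this lemma; it is imported verbatim as Theorem~3.10 of \cite{nanongkai2014distributed}, so there is no in-paper argument to compare against. Judged on its own, your proposal sets up the right machinery (distance preservation, the contiguity of $N_k(x_i)\cap P$ via Lemma~\ref{claim:subShortestPath}, the greedy shortcut walk whose weight telescopes to $d_G(u,v)$) and correctly bounds the number of long steps by $2n/k$, but it stops short exactly where the content of the theorem lies: the count of short steps.

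The gap is not cosmetic. Your charging plan requires the ``crowding witness'' sets --- the $\Omega(k)$ elements of $N_k(x_{j_s})$ that are not among the next $\lceil k/2\rceil$ forward path nodes --- to be ``essentially disjoint'' across short-step checkpoints, but this is false as stated: an off-path node (say, a low-eccentricity hub) can lie in $N_k(x_{j_s})$ for many consecutive $j_s$, and backward path nodes in $N_k(x_{j_s})$ routinely recur in $N_k(x_{j_{s'}})$ for nearby $s'$. Nothing in the contiguity lemma forbids this, so the disjointness you need to turn ``$\ge k/2$ witnesses per short step'' into ``$\le 2n/k$ short steps'' is unestablished, and a naive charging bound collapses. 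To close the gap you would need either a genuinely different potential/counting argument (e.g., arguing about two-step advances $j_{s+2}-j_s$ rather than single-step advances, or charging to path positions rather than to vertices), or to reproduce the actual argument from \cite{nanongkai2014distributed}; as written, the proposal acknowledges this hole but does not fill it, so it does not constitute a proof.
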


Now, to compute SSSP from a certain node we just run the classic Bellman-Ford algorithm \cite{bellman1958routing,ford1956network} in the graph $G'$ which takes $O(n/k)$ rounds in a graph with shortest path diameter $O(n/k)$, even in the more restricted CONGEST model (see e.g., \cite{nanongkai2014distributed}).
To optimize the complexity, we choose $k=n^{5/6}.$
Now, finding the $k$-nearest takes $\widetilde{O}(k/n^{2/3})=\widetilde{O}(n^{1/6})$ rounds by Theorem~\ref{thrm:knearest}, and the Bellman-Ford exploration takes $O(n^{1/6})$ rounds, proving the following.

\begin{theorem}
There is a deterministic algorithm for exact SSSP in undirected weighted graphs that takes $\widetilde{O}(n^{1/6})$ rounds in the \clique model.
\end{theorem}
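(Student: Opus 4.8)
The plan is to combine our \knearest tool with the classical ``$k$-shortcut'' reduction. First I would invoke Theorem~\ref{thrm:knearest} so that every node $v$ learns the set $N_k(v)$ of its $k$ closest nodes together with the exact distances $d_G(v,u)$ for all $u \in N_k(v)$; by Lemma~\ref{claim:subShortestPath} these sets are closed along shortest paths, which is what makes the next step sound. Then I would form the shortcut graph $G' = (V,\, E \cup \{ \{u,v\} : u \in V,\ v \in N_k(u) \})$, assigning each added edge $\{u,v\}$ the weight $d_G(u,v)$ computed in the first step. Since these weights are exactly shortest-path distances in $G$, we have $d_{G'}(u,v) = d_G(u,v)$ for all pairs, so it suffices to compute SSSP in $G'$.

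The key structural fact is the cited lemma of Nanongkai et al.: the shortest-path (hop) diameter of $G'$ is smaller than $4n/k$, i.e.\ for every pair $u,v$ there is a shortest $u$--$v$ path in $G'$ using at most $4n/k$ hops. Consequently, running the standard Bellman--Ford algorithm from the source for $O(n/k)$ iterations already yields the exact distances. Each Bellman--Ford iteration costs a single round in the \clique: every node broadcasts its current distance estimate to all other nodes, and then each node $u$ locally relaxes over its incident edges in $G'$, which it knows (its original neighbours from the input, and $N_k(u)$ with weights from the first step). Hence SSSP in $G'$ takes $O(n/k)$ rounds.

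Finally I would balance the two costs. The \knearest step runs in $O\bigl((k/n^{2/3} + \log n)\log k\bigr) = \widetilde{O}(k/n^{2/3})$ rounds and the Bellman--Ford step in $O(n/k)$ rounds; choosing $k = n^{5/6}$ makes both $\widetilde{O}(n^{1/6})$, giving the claimed bound. All components --- Theorem~\ref{thrm:knearest}, the shortcut construction, and Bellman--Ford --- are deterministic, so the whole algorithm is deterministic.

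The ``hard part'' here lies less in the algorithm than in the bookkeeping needed to justify $d_{G'} = d_G$: one must check that the tie-breaking in the definition of $N_k(v)$ (first by hop distance, then arbitrarily) is consistent with Lemma~\ref{claim:subShortestPath}, so that the added shortcut weights are genuinely $d_G(u,v)$ and no shortcut edge can under-report a distance. Once that is in place, the Nanongkai diameter bound and the parameter choice $k = n^{5/6}$ finish the argument immediately.
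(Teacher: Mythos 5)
Your proposal is correct and takes essentially the same route as the paper: compute $N_k(v)$ with exact distances via the \knearest tool, build the $k$-shortcut graph $G'$, invoke the Nanongkai et al.\ bound of $O(n/k)$ on its shortest-path (hop) diameter, run Bellman--Ford for $O(n/k)$ rounds, and balance by setting $k = n^{5/6}$. The only thing you elaborate beyond the paper is the clique-implementation of a Bellman--Ford round; one small point worth making explicit is that a node $u$ may not \emph{a priori} know shortcut edges $\{v,u\}$ added because $u \in N_k(v)$ but $v \notin N_k(u)$, so one should first have every $v$ send its shortcut-edge weights to each $u \in N_k(v)$ (at most $k \le n$ messages per node, hence $O(1)$ rounds) so that every node can relax over all its incident $G'$-edges -- a detail the paper glosses over as well.
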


\subsection{Diameter} \label{section:diameter}

We show that using our distance tools we can implement efficiently the algorithm of Roditty and Vassilevska Williams for approximating the diameter \cite{roditty2013fast}, which is an extension of the algorithm of Aingworth et al. \cite{aingworth1999fast}.

The algorithm for approximating the diameter is based on computing BFS trees from $\widetilde{O}(\sqrt{n})$ nodes. A difference in our case is that we compute $(1+\epsilon)$-approximate distances from the same nodes, and not the exact distances, which affects slightly the approximation obtained. Also, our algorithm works only for \emph{undirected} graphs. Another difference from \cite{roditty2013fast} is that we compute distances to the nodes in $N_k(v)$. This is useful in our case, since we can compute these distances \emph{exactly} which is useful for the analysis, it also allows us to give a deterministic algorithm. The algorithm works as follows.

\begin{oframed}
\begin{enumerate}
\item Each node $v$ computes the distances to $N_k(v)$ for $k = \widetilde{O}(\sqrt{n})$.\label{diam_nearest}
\item We compute a hitting set $S$ of size $\widetilde{O}(\sqrt{n})$ that hits all the sets $\{N_k(v)\}_{v \in V}$.\label{diam_hit}
\item We compute $(1+\epsilon)$-approximate distances from $S$ to all the nodes.\label{diam_S}
\item Let $p(v) \in S \cap N_k(v)$ be a closest node to $v$ from $S$. $v$ knows the distance to $p(v)$ and sends to all the nodes the value $d(v,p(v))$.\label{diam_p}
\item Let $w \in V$ be a node such that $d(w,p(w)) \geq d(v,p(v))$ for all $v \in V$, we compute $(1+\epsilon)$-approximate distances from the set $N_k(w)$ (including $w$) to all the nodes.\label{diam_Nw}
\item The estimate for the diameter is the maximum distance found in Line \ref{diam_S} or \ref{diam_Nw}.\label{diam_end}
\end{enumerate}
\end{oframed}

\begin{claim}
The algorithm takes $O(\frac{\log^2{n}}{\epsilon})$ rounds.
\end{claim}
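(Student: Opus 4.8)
The plan is to verify the running time line by line against the tools established earlier in the paper, and then sum. Throughout, I fix the polylogarithmic slack in $\widetilde{O}(\sqrt{n})$ concretely, say $k = O(\sqrt{n}\log{n})$ and $|S| = O(\sqrt{n}\log{n})$, so that all invocations of the MSSP algorithm stay in the regime $|S| \le O(\sqrt{n}(\log{n})^{3/2})$ where Theorem~\ref{multi-thm} gives complexity $O(\tfrac{\log^2{n}}{\epsilon})$.

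First, Line~\ref{diam_nearest} is an instance of the \knearest problem with $k = O(\sqrt{n}\log{n})$, so by Theorem~\ref{thrm:knearest} it takes $O\bigl((\tfrac{k}{n^{2/3}} + \log{n})\log{k}\bigr) = O(\log^2{n})$ rounds, and leaves every node $v$ knowing $N_k(v)$ and the exact distances to it. Line~\ref{diam_hit} then applies Lemma~\ref{det_hit} to the sets $\{N_k(v)\}_{v\in V}$ — which are already known to the nodes from the previous step — constructing a hitting set $S$ of size $O(n\log{n}/k) = O(\sqrt{n}\log{n})$ in $O((\log\log{n})^3)$ rounds; afterwards the membership of $S$ is broadcast in one round. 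Line~\ref{diam_S} invokes the MSSP algorithm of Theorem~\ref{multi-thm} with source set $S$; since $|S| = O(\sqrt{n}\log{n})$ this costs $O(\tfrac{\log^2{n}}{\epsilon})$ rounds.

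For Line~\ref{diam_p}, each $v$ can pick $p(v) \in S \cap N_k(v)$ locally (it knows both sets and the distances) and broadcast the single value $d(v,p(v))$ in $O(1)$ rounds; after this broadcast every node knows all the values $\{d(v,p(v))\}_{v\in V}$ and can therefore locally identify the maximizer $w$ with no further communication. Line~\ref{diam_Nw} is again an application of Theorem~\ref{multi-thm}, this time with source set $N_k(w)$; since $|N_k(w)| = k = O(\sqrt{n}\log{n})$ we again get $O(\tfrac{\log^2{n}}{\epsilon})$ rounds (and $w$ itself is included among these sources). Finally, Line~\ref{diam_end} computes the maximum over all distance estimates produced in Lines~\ref{diam_S} and~\ref{diam_Nw}: each node broadcasts the maximum estimate on its own row, and all nodes take the global maximum, which is $O(1)$ rounds.

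Summing the contributions gives $O(\log^2{n}) + O((\log\log{n})^3) + O(\tfrac{\log^2{n}}{\epsilon}) + O(1) + O(\tfrac{\log^2{n}}{\epsilon}) + O(1) = O(\tfrac{\log^2{n}}{\epsilon})$ rounds, as claimed. There is no serious obstacle here; the only points requiring a little care are (i) fixing the hidden polylog in $\widetilde{O}(\sqrt{n})$ so that the MSSP bound of Theorem~\ref{multi-thm} genuinely collapses to $O(\tfrac{\log^2{n}}{\epsilon})$, and (ii) observing that the broadcast in Line~\ref{diam_p} already distributes enough information for every node to agree on $w$, so Line~\ref{diam_Nw} incurs no extra aggregation cost.
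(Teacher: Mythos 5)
Your proof is correct and follows essentially the same line-by-line accounting as the paper's own proof. One small omission: you state that Line~\ref{diam_Nw} ``incurs no extra aggregation cost'' once $w$ is agreed on, but the MSSP algorithm requires source-set membership to be globally known, and only $w$ initially knows $N_k(w)$; the paper explicitly spends $O(1)$ extra rounds having $w$ notify the nodes of $N_k(w)$ and then having them announce themselves. This is absorbed into the overall bound, so the conclusion is unaffected, but it is a step you should make explicit rather than dismiss.
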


\begin{proof}
Let $k=\sqrt{n}$. Computing the distances to the $k$-nearest takes $O(\log^2{n})$ rounds by Theorem~\ref{thrm:knearest}. We compute a hitting set $A$ of size $O(\sqrt{n} \log{n})$ that hits the sets $N_k(v)$ using Lemma \ref{det_hit}, and then each node tells all nodes if it is in $S$ which takes 1 round. Line \ref{diam_S} is implemented using our MSSP algorithm that takes $O(\frac{\log^2{n}}{\epsilon})$ rounds since $A$ is of size $O(\sqrt{n} \log{n})$. The distance $d(v,p(v))$ is already known to $v$ since $p(v) \in N_k(v),$ and sending the values $d(v,p(v))$ takes one round. After Line \ref{diam_p}, all the nodes can deduce the node $w$ (breaking ties by ids), since $w$ knows the set $N_k(w)$, it can update all the nodes in $N_k(w)$ that they are in this set, and in one additional round all the nodes learn $N_k(w).$ Then, implementing Line \ref{diam_Nw} is done using our MSSP algorithm which takes $O(\frac{\log^2{n}}{\epsilon})$ rounds since $k=\sqrt{n}$. Computing the maximum estimate takes one additional round. To conclude, the complexity is $O(\frac{\log^2{n}}{\epsilon})$ rounds.
\end{proof}

We next prove that the approximation returned by the algorithm is nearly $3/2$, following the proof in \cite{roditty2013fast}. Since we compute $(1+\epsilon)$-approximate distances, we may also get an estimate that is greater than $D$ by a $(1+\epsilon)$ factor. If we want the estimate to be always at most $D$ we can divide by $1+\epsilon$.

\begin{claim}
Let $G=(V,E)$ be an undirected graph with diameter $D=3h+z$, where $h \geq 0$ and $z \in \{0,1,2\}$, and let $D'$ be the estimate returned by the algorithm. Then, $2h+z \leq D' \leq (1+\epsilon)D$ if $z \in \{0,1\}$, and $2h+1 \leq D' \leq (1+\epsilon)D$ if $z=2$.
\end{claim}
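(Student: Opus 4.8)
The plan is to prove the two inequalities separately. The upper bound $D' \le (1+\epsilon)D$ is immediate: every value reported in Lines~\ref{diam_S} and~\ref{diam_Nw} is a $(1+\epsilon)$-approximation of a genuine distance in $G$, and all distances are at most $D$. For the lower bound it is convenient to set $T = 2h+z$ when $z \in \{0,1\}$ and $T = 2h+1$ when $z=2$; the goal is then $D' \ge T$. Fix a pair $a,b$ with $d_G(a,b)=D$ and write $\ell = d_G(w,p(w))$.

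First I would record three facts. (i) Since $p(w)$ is the closest node of $S$ to $w$ \emph{among} $N_k(w)$, and any node of $S$ outside $N_k(w)$ is at least as far from $w$ as every node of $N_k(w)$, we get $\ell = d_G(w,S)$; the same reasoning gives $d_G(v,p(v))=d_G(v,S)$ for all $v$, so by the choice of $w$ we have $d_G(v,S)\le\ell$ for every $v$, in particular $d_G(a,S)\le\ell$. (ii) Every node at distance strictly less than $\ell$ from $w$ lies in $N_k(w)$: otherwise $N_k(w)$ would contain $\ge k$ nodes strictly closer to $w$ than $p(w)$, contradicting $p(w)\in N_k(w)$. (iii) Since Line~\ref{diam_S} runs an approximate BFS from every node of $S$ and Line~\ref{diam_Nw} from every node of $N_k(w)$ (including $w$), the returned estimate satisfies $D'\ge d_G(x,v)$ for all $x\in S\cup N_k(w)$ and all $v\in V$.

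Then I would split on $\ell$. If $\ell\le h$, then $d_G(a,S)\le h$ by (i), so picking $s\in S$ with $d_G(a,s)\le h$ gives, via (iii) and the triangle inequality, $D'\ge d_G(s,b)\ge D-h = 2h+z\ge T$. If $\ell\ge h+1$, there are two sub-cases. If $d_G(w,a)<\ell$ (or symmetrically $d_G(w,b)<\ell$), then $a\in N_k(w)$ by (ii), so $D'\ge d_G(a,b)=D\ge T$. Otherwise $d_G(w,a)\ge\ell$, so a shortest $w$--$a$ path has a vertex $u^*$ with $d_G(w,u^*)=\ell-1$, which lies in $N_k(w)$ by (ii); since $d_G(u^*,a)=d_G(w,a)-(\ell-1)$, the triangle inequality gives $d_G(u^*,b)\ge D-d_G(w,a)+\ell-1$. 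Combining this with $D'\ge d_G(w,a)$ from (iii), both bounds being genuine (integer) distances, yields $2D'\ge D+\ell-1\ge 4h+z$, hence $D'\ge\lceil(4h+z)/2\rceil\ge T$ for every $z\in\{0,1,2\}$.

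The main obstacle is getting fact (ii) and its use exactly right: one must verify that $p(w)$ is the \emph{globally} closest node of $S$ to $w$ and not merely the closest within $N_k(w)$, which is what makes the entire radius-$(\ell-1)$ ball around $w$ — and thus the prefix $u^*$ of a shortest path from $w$ toward $a$ — available to the BFS of Line~\ref{diam_Nw}. Granted that, the remainder is the case split above together with the integer rounding; the argument parallels that of Roditty and Vassilevska Williams~\cite{roditty2013fast}, with $(1+\epsilon)$-approximate distances in place of exact BFS and with the $N_k$-neighborhoods playing the role of their sampled balls.
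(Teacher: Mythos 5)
Your proof is correct, and the high-level structure (upper bound is trivial; for the lower bound, split on $\ell = d(w,p(w))$ and then on whether an endpoint of a diametral pair lies in $N_k(w)$) does the job. However, your case analysis is organized somewhat differently from the paper's. The paper simply ports Lemma~4 of Roditty--Vassilevska Williams verbatim and uses their three cases: (1) $\ell\le h$; (2) some node is at distance $\ge 2h+z$ from $w$; (3) all nodes are within $2h+z-1$ of $w$ and $\ell>h$, in which case RV exhibit a single node $w'\in N_k(w)$ with $d(a,w')\ge 2h+1$. You replace cases (2)--(3) with a cleaner dichotomy: either $a$ (or $b$) lies in the ball of radius $\ell-1$ around $w$, giving $D'\ge D$ outright, or you take $u^*$ at distance $\ell-1$ toward $a$ and apply the averaging bound $2D'\ge d(w,a)+d(u^*,b)\ge D+\ell-1\ge 4h+z$ followed by the ceiling. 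This sidesteps the extra sub-case about the eccentricity of $w$ entirely; it is a slightly more uniform argument that reaches the same constants. It buys you a self-contained proof where the paper only sketches how to plug $(1+\epsilon)$-approximate BFS into RV's reasoning.

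Two small points worth tightening. First, fact~(i) (that $p(w)$ is the \emph{globally} closest node of $S$ to $w$, not just the closest within $N_k(w)\cap S$) is the crux that makes the ball of radius $\ell-1$ around $w$ a subset of $N_k(w)$; you correctly flag this, and it follows because any $s\in S\setminus N_k(w)$ is at distance $\ge\ell$ from $w$. Second, the justification you wrote for fact~(ii) ("$N_k(w)$ would contain $\ge k$ nodes strictly closer to $w$ than $p(w)$, contradicting $p(w)\in N_k(w)$") is slightly garbled as phrased, but the intended argument — any $v\notin N_k(w)$ with $d(w,v)<d(w,p(w))$ would displace $p(w)$ from the $k$ nearest — is sound. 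Finally, your averaging step implicitly uses that consecutive vertices on a shortest $w$--$a$ path have distances increasing by exactly one, i.e.\ the unweighted setting, which matches the claim's scope (the weighted case is handled only in the paper's closing remark with an additive loss of $W$).
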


\begin{proof}
The analysis is based on the analysis in \cite{roditty2013fast} (see Lemma 4). The difference in our algorithm compared to \cite{roditty2013fast} is that in some cases we approximate distances instead of computing exact ones. However, all the distances $d(v,p(v))$ are computed exactly, which shows that the definition of $w$ in our algorithm is exactly as in \cite{roditty2013fast}. Let $a,b$ be such that $d(a,b)=D$, where $D$ is the diameter of the graph. The analysis in \cite{roditty2013fast} is divided to cases. In the first case, $d(w,p(w)) \leq h$, and they show that in this case there is a node at distance at least $2h+z$ from $p(a)$. Our algorithm computes $(1+\epsilon)$-approximate distances to all the nodes $p(v)$, hence at least one of our estimates is at least $2h+z$ in this case.

A second case is that there is a node of distance at least $2h+z$ from $w$. Since we compute approximate distances from $w$, one of our estimates is at least $2h+z$ in this case. The last case to consider is that all the nodes are at distance smaller than $2h+z$ from $w$, and $d(w,p(w)) > h$. In this case, they show that there exists a node $w' \in N_k(w)$ such that $d(a,w') \geq 2h+1$. Since we compute $(1+\epsilon)$-approximate distances from all the nodes in $N_k(w)$, our estimate is at least $2h+1.$
In addition, since all the distances computed are $(1+\epsilon)$-approximate distances, our estimate is at most $(1+\epsilon)D$. This completes the proof.
\end{proof}

\textbf{Remark:} as stated in \cite{roditty2013fast}, the same analysis works also for weighted graphs with non-negative weights with a loss of an additive $W$ term where $W$ is the maximum edge-weight. I.e., the estimate obtained satisfies $\lfloor 2D/3-W \rfloor \leq D' \leq (1+\epsilon)D.$ The implementation is exactly the same in our case.


\section{Discussion}
\label{section:discussion}
In this paper, we provide new tools for distance computations in the \clique model based on sparse MM. We demonstrate the power of these tools by providing efficient algorithms for many central problems, such as APSP, SSSP, and diameter approximation. We believe that these techniques may be useful for many additional tasks in distance computation in the \clique, and possibly also in additional related models. Many intriguing questions are still open.

First, while we provide constant approximations for many tasks in polylogarithmic time, the complexity of our exact SSSP algorithm is still polynomial. It would be interesting to study whether it is possible to obtain a polylogarithmic algorithm for the exact case. Surprisingly, even for the simple task of computing a BFS tree there is currently no sub-polynomial algorithm in the \clique.

Second, while our MM algorithms work also for the directed case, our hopset construction is only for the undirected case and hence our results are for undirected graphs. Also, as we explain in Section \ref{sec:cont}, obtaining \emph{any} approximation for directed APSP in sub-polynomial time would give a sub-polynomial algorithm for matrix multiplication. It would be interesting to study whether it is possible to provide sub-polynomial algorithms for directed SSSP.

Our unweighted APSP algorithm provides a $(2+\epsilon)$-approximation, whereas in the weighted case there is currently an additional additive term, and a natural question is whether we can get a $(2+\epsilon)$-approximation for weighted APSP in polylogarithmic time as well. Another interesting goal is to obtain \emph{additive} approximations in sub-polynomial time.

\paragraph{Acknowledgements:} We thank Mohsen Ghaffari, Michael Elkin and Merav Parter for fruitful discussions. This project has received funding from the European Union's Horizon 2020 Research And Innovation Program under grant agreement no.\ 755839.

\bibliography{References}

\begin{thebibliography}{10}

\bibitem{abboud2018hierarchy}
Amir Abboud, Greg Bodwin, and Seth Pettie.
\newblock A hierarchy of lower bounds for sublinear additive spanners.
\newblock {\em SIAM Journal on Computing}, 47(6):2203--2236, 2018.

\bibitem{abboud2016near}
Amir Abboud, Keren Censor-Hillel, and Seri Khoury.
\newblock Near-linear lower bounds for distributed distance computations, even
  in sparse networks.
\newblock In {\em Proc.\ {DISC} 2016}, pages 29--42, 2016.

\bibitem{DBLP:conf/podc/AgarwalRKP18}
Udit Agarwal, Vijaya Ramachandran, Valerie King, and Matteo Pontecorvi.
\newblock A deterministic distributed algorithm for exact weighted all-pairs
  shortest paths in $\tilde{O}(n^{3/2})$ rounds.
\newblock In {\em Proc.\ {PODC} 2018}, pages 199--205, 2018.
\newblock \href {http://dx.doi.org/10.1145/3212734.3212773}
  {\path{doi:10.1145/3212734.3212773}}.

\bibitem{aingworth1999fast}
Donald Aingworth, Chandra Chekuri, Piotr Indyk, and Rajeev Motwani.
\newblock Fast estimation of diameter and shortest paths (without matrix
  multiplication).
\newblock {\em SIAM Journal on Computing}, 28(4):1167--1181, 1999.

\bibitem{DBLP:conf/stoc/BackursRSWW18}
Arturs Backurs, Liam Roditty, Gilad Segal, Virginia~Vassilevska Williams, and
  Nicole Wein.
\newblock Towards tight approximation bounds for graph diameter and
  eccentricities.
\newblock In {\em Proc.\ {STOC} 2018}, pages 267--280, 2018.
\newblock \href {http://dx.doi.org/10.1145/3188745.3188950}
  {\path{doi:10.1145/3188745.3188950}}.

\bibitem{baswana2005all}
Surender Baswana, Vishrut Goyal, and Sandeep Sen.
\newblock All-pairs nearly 2-approximate shortest-paths in
  ${O}(n^2\operatorname{polylog}n)$ time.
\newblock In {\em Proc.\ {STACS} 2005}, pages 666--679, 2005.

\bibitem{baswana2010faster}
Surender Baswana and Telikepalli Kavitha.
\newblock Faster algorithms for all-pairs approximate shortest paths in
  undirected graphs.
\newblock {\em SIAM Journal on Computing}, 39(7):2865--2896, 2010.

\bibitem{DBLP:conf/wdag/BeckerKKL17}
Ruben Becker, Andreas Karrenbauer, Sebastian Krinninger, and Christoph Lenzen.
\newblock Near-optimal approximate shortest paths and transshipment in
  distributed and streaming models.
\newblock In {\em Proc.\ {DISC} 2017}, 2017.
\newblock \href {http://dx.doi.org/10.4230/LIPIcs.DISC.2017.7}
  {\path{doi:10.4230/LIPIcs.DISC.2017.7}}.

\bibitem{bellman1958routing}
Richard Bellman.
\newblock On a routing problem.
\newblock {\em Quarterly of applied mathematics}, 16(1):87--90, 1958.

\bibitem{berman2007faster}
Piotr Berman and Shiva~Prasad Kasiviswanathan.
\newblock Faster approximation of distances in graphs.
\newblock In {\em Proc.\ {WADS} 2007}, pages 541--552, 2007.

\bibitem{DBLP:conf/focs/Bernstein09}
Aaron Bernstein.
\newblock Fully dynamic $(2 + \epsilon)$-approximate all-pairs shortest paths
  with fast query and close to linear update time.
\newblock In {\em Proc.\ {FOCS} 2009}, pages 693--702, 2009.
\newblock \href {http://dx.doi.org/10.1109/FOCS.2009.16}
  {\path{doi:10.1109/FOCS.2009.16}}.

\bibitem{DBLP:journals/corr/abs-1811-03337}
Aaron Bernstein and Danupon Nanongkai.
\newblock Distributed exact weighted all-pairs shortest paths in near-linear
  time, 2018.
\newblock arXiv:1811.03337 [cs.DC].

\bibitem{censor2015algebraic}
Keren Censor-Hillel, Petteri Kaski, Janne~H Korhonen, Christoph Lenzen, Ami
  Paz, and Jukka Suomela.
\newblock Algebraic methods in the congested clique.
\newblock In {\em Proc.\ PODC 2015}, pages 143--152. ACM, 2015.

\bibitem{DBLP:conf/opodis/Censor-HillelLT18}
Keren Censor{-}Hillel, Dean Leitersdorf, and Elia Turner.
\newblock Sparse matrix multiplication and triangle listing in the congested
  clique model.
\newblock In {\em Proc.\ {OPODIS} 2018}, 2018.
\newblock \href {http://dx.doi.org/10.4230/LIPIcs.OPODIS.2018.4}
  {\path{doi:10.4230/LIPIcs.OPODIS.2018.4}}.

\bibitem{expander2018}
Yi-Jun Chang, Seth Pettie, and Hengjie Zhang.
\newblock Distributed triangle detection via expander decomposition.
\newblock In {\em Proc.\ {SODA} 2018}, pages 821--840, 2018.
\newblock \href {http://dx.doi.org/10.1137/1.9781611975482.51}
  {\path{doi:10.1137/1.9781611975482.51}}.

\bibitem{chechik2014better}
Shiri Chechik, Daniel~H Larkin, Liam Roditty, Grant Schoenebeck, Robert~E
  Tarjan, and Virginia~Vassilevska Williams.
\newblock Better approximation algorithms for the graph diameter.
\newblock In {\em Proc.\ {SODA} 2014}, pages 1041--1052, 2014.

\bibitem{cohen2000polylog}
Edith Cohen.
\newblock Polylog-time and near-linear work approximation scheme for undirected
  shortest paths.
\newblock {\em Journal of the ACM (JACM)}, 47(1):132--166, 2000.

\bibitem{cohen2001all}
Edith Cohen and Uri Zwick.
\newblock All-pairs small-stretch paths.
\newblock {\em Journal of Algorithms}, 38(2):335--353, 2001.

\bibitem{dolev2012tri}
Danny Dolev, Christoph Lenzen, and Shir Peled.
\newblock “tri, tri again”: Finding triangles and small subgraphs in a
  distributed setting.
\newblock In {\em Proc.\ {DISC} 2012}, pages 195--209, 2012.

\bibitem{dor2000all}
Dorit Dor, Shay Halperin, and Uri Zwick.
\newblock All-pairs almost shortest paths.
\newblock {\em SIAM Journal on Computing}, 29(5):1740--1759, 2000.

\bibitem{drucker2014power}
Andrew Drucker, Fabian Kuhn, and Rotem Oshman.
\newblock On the power of the congested clique model.
\newblock In {\em Proc.\ {PODC} 2014}, pages 367--376, 2014.

\bibitem{elkin2017distributed}
Michael Elkin.
\newblock Distributed exact shortest paths in sublinear time.
\newblock In {\em Proc.\ {STOC} 2017}, pages 757--770. ACM, 2017.

\bibitem{elkin2016hopsets}
Michael Elkin and Ofer Neiman.
\newblock Hopsets with constant hopbound, and applications to approximate
  shortest paths.
\newblock In {\em Proc.\ {FOCS} 2016}, pages 128--137, 2016.

\bibitem{DBLP:journals/corr/ElkinN17}
Michael Elkin and Ofer Neiman.
\newblock Linear-size hopsets with small hopbound, and distributed routing with
  low memory, 2017.
\newblock arXiv:1704.08468 [cs.DS].

\bibitem{ford1956network}
Lester~R Ford~Jr.
\newblock Network flow theory.
\newblock Technical report, RAND CORP SANTA MONICA CA, 1956.

\bibitem{DBLP:conf/focs/ForsterN18}
Sebastian Forster and Danupon Nanongkai.
\newblock A faster distributed single-source shortest paths algorithm.
\newblock In {\em Proc.\ {FOCS} 2018}, pages 686--697, 2018.
\newblock \href {http://dx.doi.org/10.1109/FOCS.2018.00071}
  {\path{doi:10.1109/FOCS.2018.00071}}.

\bibitem{legall2014powers}
Fran{\c c}ois~Le Gall.
\newblock Powers of tensors and fast matrix multiplication.
\newblock In {\em Proc.\ {ISSAC} 2014}, pages 296--303, 2014.
\newblock \href {http://dx.doi.org/10.1145/2608628.2608664}
  {\path{doi:10.1145/2608628.2608664}}.

\bibitem{DBLP:conf/podc/Ghaffari17}
Mohsen Ghaffari.
\newblock Distributed {MIS} via all-to-all communication.
\newblock In {\em Proc.\ {PODC} 2017}, pages 141--149, 2017.
\newblock \href {http://dx.doi.org/10.1145/3087801.3087830}
  {\path{doi:10.1145/3087801.3087830}}.

\bibitem{DBLP:conf/podc/GhaffariGKMR18}
Mohsen Ghaffari, Themis Gouleakis, Christian Konrad, Slobodan Mitrovic, and
  Ronitt Rubinfeld.
\newblock Improved massively parallel computation algorithms for {MIS},
  matching, and vertex cover.
\newblock In {\em Proc.\ {PODC} 2018}, pages 129--138, 2018.
\newblock \href {http://dx.doi.org/10.1145/3212734.3212743}
  {\path{doi:10.1145/3212734.3212743}}.

\bibitem{DBLP:conf/stoc/GhaffariL18}
Mohsen Ghaffari and Jason Li.
\newblock Improved distributed algorithms for exact shortest paths.
\newblock In {\em Proc.\ {STOC} 2018}, pages 431--444, 2018.
\newblock \href {http://dx.doi.org/10.1145/3188745.3188948}
  {\path{doi:10.1145/3188745.3188948}}.

\bibitem{ghaffari2018congested}
Mohsen Ghaffari and Krzysztof Nowicki.
\newblock Congested clique algorithms for the minimum cut problem.
\newblock In {\em Proc.\ {PODC} 2018}, pages 357--366. ACM, 2018.

\bibitem{DBLP:conf/podc/GhaffariP16}
Mohsen Ghaffari and Merav Parter.
\newblock {MST} in log-star rounds of congested clique.
\newblock In {\em Proc.\ {PODC} 2016}, pages 19--28, 2016.
\newblock \href {http://dx.doi.org/10.1145/2933057.2933103}
  {\path{doi:10.1145/2933057.2933103}}.

\bibitem{hegeman2015lessons}
James~W Hegeman and Sriram~V Pemmaraju.
\newblock Lessons from the congested clique applied to {MapReduce}.
\newblock {\em Theoretical Computer Science}, 608:268--281, 2015.

\bibitem{DBLP:conf/stoc/HenzingerKN16}
Monika Henzinger, Sebastian Krinninger, and Danupon Nanongkai.
\newblock A deterministic almost-tight distributed algorithm for approximating
  single-source shortest paths.
\newblock In {\em Proc.\ {STOC} 2016}, pages 489--498, 2016.
\newblock \href {http://dx.doi.org/10.1145/2897518.2897638}
  {\path{doi:10.1145/2897518.2897638}}.

\bibitem{henzinger2018decremental}
Monika Henzinger, Sebastian Krinninger, and Danupon Nanongkai.
\newblock Decremental single-source shortest paths on undirected graphs in
  near-linear total update time.
\newblock {\em Journal of the ACM}, 65(6):36, 2018.

\bibitem{DBLP:conf/opodis/HolzerP15}
Stephan Holzer and Nathan Pinsker.
\newblock Approximation of distances and shortest paths in the broadcast
  congest clique.
\newblock In {\em Proc.\ {OPODIS} 2015}, 2015.
\newblock \href {http://dx.doi.org/10.4230/LIPIcs.OPODIS.2015.6}
  {\path{doi:10.4230/LIPIcs.OPODIS.2015.6}}.

\bibitem{holzer2012optimal}
Stephan Holzer and Roger Wattenhofer.
\newblock Optimal distributed all pairs shortest paths and applications.
\newblock In {\em Proc.\ {PODC} 2012}, pages 355--364, 2012.

\bibitem{DBLP:conf/focs/HuangNS17}
Chien{-}Chung Huang, Danupon Nanongkai, and Thatchaphol Saranurak.
\newblock Distributed exact weighted all-pairs shortest paths in
  $\tilde{O}(n^{5/4})$ rounds.
\newblock In {\em Proc.\ {FOCS} 2017}, pages 168--179, 2017.
\newblock \href {http://dx.doi.org/10.1109/FOCS.2017.24}
  {\path{doi:10.1109/FOCS.2017.24}}.

\bibitem{huang2019thorup}
Shang-En Huang and Seth Pettie.
\newblock Thorup--zwick emulators are universally optimal hopsets.
\newblock {\em Information Processing Letters}, 142:9--13, 2019.

\bibitem{jurdzinski2018mst}
Tomasz Jurdzi{\'n}ski and Krzysztof Nowicki.
\newblock {MST} in ${O}(1)$ rounds of congested clique.
\newblock In {\em Proc.\ {SODA} 2018}, pages 2620--2632, 2018.

\bibitem{DBLP:conf/spaa/KorhonenS18}
Janne~H. Korhonen and Jukka Suomela.
\newblock Towards a complexity theory for the congested clique.
\newblock In {\em Proc.\ {SPAA} 2018}, pages 163--172, 2018.
\newblock \href {http://dx.doi.org/10.1145/3210377.3210391}
  {\path{doi:10.1145/3210377.3210391}}.

\bibitem{le2016further}
Fran{\c{c}}ois Le~Gall.
\newblock Further algebraic algorithms in the congested clique model and
  applications to graph-theoretic problems.
\newblock In {\em Proc.\ {DISC} 2016}, pages 57--70, 2016.

\bibitem{lenzen2013optimal}
Christoph Lenzen.
\newblock Optimal deterministic routing and sorting on the congested clique.
\newblock In {\em Proc.\ {PODC} 2013}, pages 42--50, 2013.

\bibitem{lenzen2013fast}
Christoph Lenzen and Boaz Patt-Shamir.
\newblock Fast routing table construction using small messages.
\newblock In {\em Proc.\ {STOC} 2013}, pages 381--390. ACM, 2013.

\bibitem{lenzen2015fast}
Christoph Lenzen and Boaz Patt-Shamir.
\newblock Fast partial distance estimation and applications.
\newblock In {\em Proc.\ {PODC} 2015}, pages 153--162. ACM, 2015.

\bibitem{DBLP:conf/podc/LenzenP13}
Christoph Lenzen and David Peleg.
\newblock Efficient distributed source detection with limited bandwidth.
\newblock In {\em Proc.\ {PODC} 2013}, pages 375--382, 2013.
\newblock \href {http://dx.doi.org/10.1145/2484239.2484262}
  {\path{doi:10.1145/2484239.2484262}}.

\bibitem{lotker2005minimum}
Zvi Lotker, Boaz Patt-Shamir, Elan Pavlov, and David Peleg.
\newblock Minimum-weight spanning tree construction in $o(\log \log n)$
  communication rounds.
\newblock {\em SIAM Journal on Computing}, 35(1):120--131, 2005.

\bibitem{nanongkai2014distributed}
Danupon Nanongkai.
\newblock Distributed approximation algorithms for weighted shortest paths.
\newblock In {\em Proc.\ {STOC} 2014}, pages 565--573. ACM, 2014.

\bibitem{Pandurangan2018}
Gopal Pandurangan, Peter Robinson, and Michele Scquizzato.
\newblock On the distributed complexity of large-scale graph computations.
\newblock In {\em Proc.\ {SPAA} 2018}, pages 405--414, 2018.
\newblock \href {http://dx.doi.org/10.1145/3210377.3210409}
  {\path{doi:10.1145/3210377.3210409}}.

\bibitem{DBLP:conf/icalp/Parter18}
Merav Parter.
\newblock ${(\Delta+1)}$ coloring in the congested clique model.
\newblock In {\em Proc.\ {ICALP} 2018}, pages 160:1--160:14, 2018.
\newblock \href {http://dx.doi.org/10.4230/LIPIcs.ICALP.2018.160}
  {\path{doi:10.4230/LIPIcs.ICALP.2018.160}}.

\bibitem{DBLP:conf/wdag/ParterS18}
Merav Parter and Hsin{-}Hao Su.
\newblock Randomized $({\Delta}+1)$-coloring in ${O}(\log^* {\Delta})$
  congested clique rounds.
\newblock In {\em Proc.\ {DISC} 2018}, pages 39:1--39:18, 2018.
\newblock \href {http://dx.doi.org/10.4230/LIPIcs.DISC.2018.39}
  {\path{doi:10.4230/LIPIcs.DISC.2018.39}}.

\bibitem{DBLP:conf/wdag/ParterY18}
Merav Parter and Eylon Yogev.
\newblock Congested clique algorithms for graph spanners.
\newblock In {\em Proc.\ {DISC} 2018}, pages 40:1--40:18, 2018.
\newblock \href {http://dx.doi.org/10.4230/LIPIcs.DISC.2018.40}
  {\path{doi:10.4230/LIPIcs.DISC.2018.40}}.

\bibitem{peleg2012distributed}
David Peleg, Liam Roditty, and Elad Tal.
\newblock Distributed algorithms for network diameter and girth.
\newblock In {\em Proc.\ {ICALP} 2012}, pages 660--672, 2012.

\bibitem{roditty2013fast}
Liam Roditty and Virginia Vassilevska~Williams.
\newblock Fast approximation algorithms for the diameter and radius of sparse
  graphs.
\newblock In {\em Proc.\ {STOC} 2013}, pages 515--524. ACM, 2013.

\bibitem{shi1999time}
Hanmao Shi and Thomas~H Spencer.
\newblock Time--work tradeoffs of the single-source shortest paths problem.
\newblock {\em Journal of algorithms}, 30(1):19--32, 1999.

\bibitem{thorup2006spanners}
Mikkel Thorup and Uri Zwick.
\newblock Spanners and emulators with sublinear distance errors.
\newblock In {\em Proc.\ {SODA} 2006}, pages 802--809, 2006.

\end{thebibliography}


\end{document}